\def\indist{\rightsquigarrow}
\def\ind{\perp\!\!\!\perp}
\newcommand{\Pb}{\mathbb{P}}
\newcommand{\Pn}{\mathbb{P}_n}
\newcommand{\E}{\mathbb{E}}
\newcommand{\R}{\mathbb{R}}
\newcommand{\bX}{\mathbf{X}}
\newcommand{\bx}{\mathbf{x}}
\DeclareMathOperator*{\argmin}{arg\,min}
\DeclareMathOperator*{\argmax}{arg\,max}
\DeclareSymbolFont{bbold}{U}{bbold}{m}{n}
\DeclareSymbolFontAlphabet{\mathbbold}{bbold}
\newtheorem{theorem}{Theorem}
\newtheorem{lemma}{Lemma}
\newtheorem{corollary}{Corollary}
\newtheorem{proposition}{Proposition}
\newtheorem{assumption}{Assumption}
\newtheorem{remark}{Remark}
\theoremstyle{definition}
\theoremstyle{remark}
\begin{document}

\def\spacingset#1{\renewcommand{\baselinestretch}%
{#1}\small\normalsize} \spacingset{1}

\raggedbottom
\allowdisplaybreaks[1]


  \title{\vspace*{-.4in} {Covariate-assisted bounds on causal effects with instrumental variables}}
  \author{Alexander W. Levis$^1$, Matteo Bonvini$^1$$^*$, Zhenghao Zeng$^1$\thanks{MB and ZZ contributed equally.},
  \\ Luke Keele$^2$, Edward H. Kennedy$^1$ \\  \\ \\
    $^1$Department of Statistics \& Data Science, \\
    Carnegie Mellon University \\
    $^2$Department of Surgery, \\
    University of Pennsylvania \\ \\ 
    \texttt{\{alevis, mbonvini, zhenghaz\} @ andrew.cmu.edu}; \\ \texttt{luke.keele@gmail.com}; \texttt{edward@stat.cmu.edu} \\
\date{}
    }
    
  \maketitle
  \thispagestyle{empty}

\begin{abstract}
When an exposure of interest is confounded by unmeasured factors, an instrumental variable (IV) can be used to identify and estimate certain causal contrasts. Identification of the marginal average treatment effect (ATE) from IVs relies on strong untestable structural assumptions. When one is unwilling to assert such structure, IVs can nonetheless be used to construct bounds on the ATE. Famously, \cite{balke1997bounds} proved tight bounds on the ATE for a binary outcome, in a randomized trial with noncompliance and no covariate information. We demonstrate how these bounds remain useful in observational settings with baseline confounders of the IV, as well as randomized trials with measured baseline covariates. The resulting bounds on the ATE are non-smooth functionals, and thus standard nonparametric efficiency theory is not immediately applicable. To remedy this, we propose (1) under a novel margin condition, influence function-based estimators of the bounds that can attain parametric convergence rates when the nuisance functions are modeled flexibly, and (2) estimators of smooth approximations of these bounds. We propose extensions to continuous outcomes, explore finite sample properties in simulations, and illustrate the proposed estimators in an observational study targeting the effect of higher education on wages.
\end{abstract}

\bigskip

\noindent
{\it Keywords: causal inference, instrumental variables, partial identification, nonparametric bounds, nonparametric efficiency} 

\pagebreak

\section{Introduction}

A primary goal in many scientific endeavors is to determine whether an intervention has a causal effect on an outcome. However, simple comparisons between the outcomes of treated and control groups are often complicated by confounding: differences in outcomes between those who are and are not treated due to pre-treatment differences rather than the effect of the treatment itself. One solution to selection bias of this form is to allocate the treatment via randomization, rendering pre-treatment distributions equal, on average. In many circumstances, however, randomization is infeasible or unethical. When this is the case, one alternative is to identify an instrumental variable (IV). 
For a variable to be an IV, it must meet the three following conditions: (a) the IV must be associated with the exposure; (b) the IV must be randomly or as-if randomly assigned; and (c) the IV cannot have a direct effect on the outcome \citep{Angrist:1996}. Under these conditions, as well as some further structural assumptions, an IV can provide a consistent estimate of a causal effect even in the presence of unobserved confounding between the exposure and the outcome. See \citet{baiocchi2014} and \citet{imbens2014instrumental} for general reviews of IV methods. 

In most analyses, investigators focus on point identification--asserting sufficient structure so that a single parameter describing a causal effect can be expressed as a function of the observed data distribution. In fact, assumptions (a)--(c), on their own, are insufficient to point identify an average causal effect. Point identification requires investigators to assume either some form of homogeneity (e.g., lack of effect modification by unmeasured confounders) or an assumption known as monotonicity \citep{robins1994correcting, hernan2006instruments,tan2010marginal,angrist1996identification}. Critically, these structural IV assumptions are untestable and may be controversial in many applications. One alternative approach is to relax key IV assumptions using partial identification. Under the partial identification approach, analysts seek to estimate bounds on the parameter of interest, which can typically be done with weaker assumptions \citep{Manski:1990,Manski:1995}. There is a large body of work on partial identification in IV designs with foundational work done by, e.g., \citet{balke1997bounds} and \citet{manski2009more,manski2000monotone}. See \cite{swanson2018partial} for a general overview of partial identification approaches to an IV analysis. 

There exists a substantial literature studying bounds on causal effects, incorporating covariate adjustment under various sets of assumptions. Broadly, use of covariates is beneficial in either (a) experimental settings, where baseline covariate adjustment can improve efficiency \citep{yang2001efficiency, tsiatis2008covariate} and bound width \citep{cai2007non}, or (b) observational settings, where confounder adjustment is necessary for key assumptions to hold \citep{grilli2008nonparametric, feller2017principal}. Notably, \cite{cai2007non} illustrated possible improvements on natural bounds \citep{manski1990} and tight bounds \citep{balke1997bounds} on average treatment effects in a randomized trial using covariate adjustment. \cite{long2013sharpening} showed covariate-assisted bounds on principal effects are narrower than the unadjusted bounds. Such adjusted bounds are widely used in principal stratification analysis on evaluating the effects of training and school programs \citep{lee2005training, miratrix2018bounding}. However, most existing work considers the discrete covariate setting (hence plug-in style empirical average estimators are applicable) and does not explore efficiency theory on the bounds. \cite{grilli2008nonparametric} examined adjusted principal stratification-based bounds using continuous covariates under strong parametric assumptions that can be hard to justify in practice.

In this paper, we extend the Balke-Pearl bounds to include baseline covariate information to remedy these issues. Our proposed estimators are based on semiparametric efficiency theory and use influence functions, which allow for flexible and efficient estimation via a variety of nonparametric machine learning based methods. The use of influence functions also allows us to derive simple closed-form variance estimators that are consistent. Importantly, our work may provide useful insights into estimation strategies for functionals that involve non-smooth functions on observed data and are not pathwise-differentiable without further assumptions. Such functionals commonly appear when targetting bounds on causal effects \citep{sachs2022general, grilli2008nonparametric, long2013sharpening}.

The remainder of the paper is organized as follows. In Section~\ref{sec:background}, we define notation, review key IV assumptions, and outline our target causal estimand: the average treatment effect. In Section~\ref{sec:bounds}, we review extant methods for bounding the average treatment effect (ATE) based on an IV. We also provide a simple illustration to demonstrate the benefits of incorporating covariate information, and present a general result to guide which covariates to include in the ensuing analyses. Importantly, the nonparametric bounds outlined in Section~\ref{sec:bounds} are non-smooth functionals of the observed data distribution, so standard semiparametric efficiency theory does not immediately apply. In view of this challenging setting, Sections~\ref{sec:binary}, \ref{sec:smooth}, and \ref{sec:continuous} detail the three main methodological contributions of the paper: (i) efficient estimators of the true non-smooth bound functionals under a margin condition (in Section~\ref{sec:binary}); (ii) valid but conservative bounds on the ATE based on estimators of smooth functional approximations to the bounds (in Section~\ref{sec:smooth}); and (iii) extensions of the proposed methods to the case of a general bounded outcome, discrete or continuous (in Section~\ref{sec:continuous}). In Section~\ref{sec:simulation}, we present a simulation study to investigate the finite sample properties of the proposed estimators. Finally, in Section~\ref{sec:data}, we apply our proposed methods to data from an observational study to assess the effects of college education on wages. Proofs of all results can be found in the appendices.

\section{Background} 
\label{sec:background}

\subsection{Notation and Assumptions}
Consider the standard instrumental variable setup, in which the observed data are $n$ iid copies of
$O = (\boldsymbol{X},Z, A, Y) \sim \Pb$, where $\boldsymbol{X} \in \mathcal{X} \subseteq \mathbb{R}^d$ is a vector of covariates, and $Z, A \in \{0,1\}$ are binary instrument and exposure variables, respectively. Our focus is mostly on binary outcomes, $Y \in \{0,1\}$; we will discuss extensions for non-binary outcomes in Section~\ref{sec:continuous}. We let $A(z)$ and $Y(z)$ denote the counterfactual exposure and outcome values, had the instrument been set to $Z = z$, for $z \in \{0,1\}$. Similarly, we also define $Y(a)$ and $Y(z,a)$ to be the potential outcomes under an intervention that sets $A = a$, and an intervention that sets both $Z = z$ and $A = a$, respectively. For $z \in \{0,1\}$, let $\lambda_z(\boldsymbol{X})=\Pb[Z=z \mid \boldsymbol{X}]$. Next, we review the set of assumptions that we take as given in the analysis. First, we make the two following assumptions:

\begin{assumption}[Consistency] \label{ass:consistency}
    $A(Z) = A$, and $Y(Z, A) = Y(A) = Y$, almost surely.
\end{assumption}

\begin{assumption}[Positivity] \label{ass:positivity}
    For some $\epsilon > 0$, $\lambda_1(\boldsymbol{X}) \in [\epsilon, 1 - \epsilon]$, almost surely.
\end{assumption}

In words, Assumption~\ref{ass:consistency} asserts that interventions on $Z$ and $A$ are well-defined, and that there is no interference between subjects, so that a unit's potential treatment and outcome can be linked to their observed variables; Assumption~\ref{ass:positivity} asserts that either instrument value can be realized, over all strata determined by $\boldsymbol{X}$. 
These two assumptions are not unique to the IV framework, and are commonly invoked in many causal inference settings.  Next, we make the following ``core'' IV assumptions:


\begin{assumption}[Unconfoundedness] \label{ass:UC}
    $Z \ind (A(z), Y(z)) \mid \boldsymbol{X}$.
\end{assumption}

\begin{assumption}[Exclusion Restriction] \label{ass:ER}
    $Y(z, a) \equiv Y(a)$, for $z,a \in \{0,1\}$.
\end{assumption}

Assumption~\ref{ass:UC} asserts that the effect of $Z$ on $A$ and $Y$ is unconfounded, given measured covariates $\boldsymbol{X}$. Note that in certain special cases, Assumption~\ref{ass:UC} holds unconditionally, and we can identify the IV effect without conditioning on $\boldsymbol{X}$.  For example, when $Z$ is marginally randomized (e.g., in a trial), Assumption~\ref{ass:UC} holds by design. Alternatively, in certain natural experiments, analysts may assert that Assumption~\ref{ass:UC} holds unconditionally, e.g. \citet{angrist1996children}. 
Assumption~\ref{ass:ER} asserts that the effect of $Z$ on $Y$ acts entirely through its effect on $A$, i.e., $Z$ has no direct effect on $Y$. 
Note that most IV studies adopt an assumption of non-zero association between $Z$ and $A$ (often referred to as a ``relevance'' assumption). However, this assumption is not formally required for partial identification of the average treatment effect.

\subsection{Estimands and Additional Structural Assumptions}

In what follows, we focus on the average treatment effect (ATE) as the target causal estimand:
\[
\mathbb{E}\left(Y(a = 1) - Y(a = 0)\right).
\]
Critically, under the four assumptions introduced in the previous section, the ATE is not point identified. Analysts typically take one of two approaches for point identification. The first approach invokes some type of homogeneity assumptions and places various restrictions on how the effects of $A$ and $Z$ vary from unit to unit in the study population. See \citet{hernan2019} and \citet{wang2018} for prominent examples. 
However, homogeneity assumptions are often implausible or difficult to verify in specific applications. The second approach invokes an assumption known as monotonicity, which has the following form: $A(z = 1) \geq A(z = 0)$, i.e., if $A(z = 0) = 1$ then $A(z = 1) = 1$ \citep{imbens1994}. Under monotonicity, the target estimand is no longer the ATE, but instead is the local average treatment effect (LATE):
\begin{equation} \label{eq:LATE}
\mathbb{E}(Y(a = 1) - Y(a = 0) \mid A(z = 1) > A(z = 0)) 
\end{equation}
\noindent 
Here, investigators must be content with a more local estimand that may not generalize to the ATE. In this paper, we only assume that Assumptions~\ref{ass:consistency}--\ref{ass:ER} hold, and aim to construct tight bounds on the ATE parameter.

\section{Partial Identification for IVs and the Role of Covariate Information}
\label{sec:bounds}

Next, we provide an overview of one partial identification approach for the IV framework. We then provide a brief illustration to demonstrate how covariate information can alter the bounds on the ATE. We also present theoretical results to elucidate when we may expect the inclusion of covariates to tighten the bounds on the ATE.

\subsection{Review: Balke-Pearl Bounds} 
\label{sec:BP-bounds}

In their seminal work, Alexander Balke and Judea Pearl leveraged symbolic linear programming to develop sharp nonparametric bounds on the ATE for a binary outcome \citep{balke1994, balke1995, balke1997bounds}. Notably, their bounds only invoke Assumptions~\ref{ass:consistency}--\ref{ass:ER}, and are provably tight under these assumptions. In both applications and simulations they demonstrated substantial narrowing of the bounds on the ATE compared to partial identification results in \citet{robins1989} and \citet{manski1990}.

The results in \cite{balke1997bounds} focused exclusively on marginally randomized instruments such that $\boldsymbol{X} = \emptyset$. As such, their methods do not make use of any measured covariates. Nonetheless, their results hold just as well for bounding the conditional average treatment effect (CATE), $\mathbb{E}[Y(a = 1) - Y(a = 0) \mid \boldsymbol{X} = \boldsymbol{x}]$, for any $\boldsymbol{x} \in \mathcal{X}$, in two important settings where Assumptions~\ref{ass:consistency}--\ref{ass:ER} may hold: (i) an experiment with $Z$ marginally randomized and baseline covariates $\boldsymbol{X}$ are measured, and (ii) an observational setting where $\boldsymbol{X}$ are baseline confounders required for $Z$ to be a valid IV. In case (i), we will demonstrate that incorporating baseline covariate information $\boldsymbol{X}$ can both provide tighter theoretical bounds, and improve statistical precision. In case (ii), conditioning on $\boldsymbol{X}$ is required for the IV assumptions to hold, and thus for the bounds to be valid.

To begin, we review the main theoretical result from \cite{balke1997bounds}. For each $y, a, z \in \{0,1\}$, define $\pi_{ya.z}(\boldsymbol{X}) = \Pb(Y=y, A = a \mid \boldsymbol{X}, Z= z)$. Moreover, define the two 8-dimensional vectors $\boldsymbol{\theta}_{\ell}(\boldsymbol{X}) = (\theta_{\ell, 1}(\boldsymbol{X}), \ldots, \theta_{\ell, 8}(\boldsymbol{X})), \boldsymbol{\theta}_{u}(\boldsymbol{X}) = (\theta_{u, 1}(\boldsymbol{X}), \ldots, \theta_{u, 8}(\boldsymbol{X})) \in [-1,1]^8$, where omitting inputs,
\begin{equation}\label{eq:lower-bound}
\begin{aligned}
    \theta_{\ell, 1} &= \pi_{11.1} +
  \pi_{00.0} - 1 \\
  \theta_{\ell, 2} &= \pi_{11.0} +
  \pi_{00.1} - 1 \\
  \theta_{\ell, 3} &= - \pi_{01.1} -
  \pi_{10.1} \\
  \theta_{\ell, 4} &= - \pi_{01.0} -
  \pi_{10.0} \\
  \theta_{\ell, 5} &= \pi_{11.0} -
  \pi_{11.1} - \pi_{10.1} -
  \pi_{01.0} - \pi_{10.0} \\
  \theta_{\ell, 6} &= \pi_{11.1} -
  \pi_{11.0} - \pi_{10.0} -
  \pi_{01.1} - \pi_{10.1} \\
  \theta_{\ell, 7} &= \pi_{00.1} -
  \pi_{01.1} - \pi_{10.1} -
  \pi_{01.0} - \pi_{00.0} \\
  \theta_{\ell, 8} &= \pi_{00.0} -
  \pi_{01.0} - \pi_{10.0} -
  \pi_{01.1} - \pi_{00.1}
\end{aligned}
\end{equation}

and
\begin{equation}\label{eq:upper-bound}
    \begin{aligned}
\theta_{u, 1} &= 1 - \pi_{01.1} -
  \pi_{10.0} \\
  \theta_{u, 2} &= 1 - \pi_{01.0} -
  \pi_{10.1} \\
  \theta_{u, 3} &= \pi_{11.1} +
  \pi_{00.1}\\
  \theta_{u, 4} &= \pi_{11.0} +
  \pi_{00.0}\\
  \theta_{u, 5} &= -\pi_{01.0} +
  \pi_{01.1} + \pi_{00.1} +
  \pi_{11.0} + \pi_{00.0} \\
  \theta_{u, 6} &= -\pi_{01.1} +
  \pi_{11.1} + \pi_{00.1} +
  \pi_{01.0} + \pi_{00.0} \\
  \theta_{u, 7} &= -\pi_{10.1} +
  \pi_{11.1} + \pi_{00.1} +
  \pi_{11.0} + \pi_{10.0}\\
  \theta_{u, 8} &= -\pi_{10.0} +
  \pi_{11.0} + \pi_{00.0} +
  \pi_{11.1} + \pi_{10.1} . 
\end{aligned}
\end{equation}

\noindent Finally, define $\gamma_{\ell}(\boldsymbol{X}) = \max_{1 \leq j \leq 8} \theta_{\ell, j}(\boldsymbol{X})$ and $\gamma_{u}(\boldsymbol{X}) = \min_{1 \leq j \leq 8} \theta_{u, j}(\boldsymbol{X})$. Balke and Pearl proved (e.g., see the main result in \citet{balke1997bounds}, p. 1173) that the CATE is bounded between $\gamma_{\ell}$ and $\gamma_u$, the tightest possible lower and upper bounds, respectively, under Assumptions~\ref{ass:consistency}--\ref{ass:ER}. We summarize their result in the following theorem.

\begin{theorem}[\citet{balke1997bounds}] \label{thm:BP}
    Under Assumptions~\ref{ass:consistency}--\ref{ass:ER}, for $(Z, A, Y) \in \{0,1\}^3$, the CATE can be bounded as
    \[\gamma_{\ell}(\boldsymbol{X}) \leq \mathbb{E}(Y(a = 1) - Y(a = 0) \mid \boldsymbol{X}) \leq \gamma_u(\boldsymbol{X}).\]
    These bounds are tight in the nonparametric model. Moreover, marginalizing yields bounds on the ATE: 
    \begin{equation}\label{eq:ATE-bounds}
\mathbb{E}_{\Pb}(\gamma_{\ell}(\boldsymbol{X})) \leq \mathbb{E}(Y(a = 1) - Y(a = 0)) \leq \mathbb{E}_{\Pb}(\gamma_u(\boldsymbol{X})),
\end{equation}
which are also tight in the nonparametric model.
\end{theorem}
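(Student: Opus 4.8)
The plan is to reduce the entire statement to the within-stratum problem, where the original Balke--Pearl linear program applies directly, and then to obtain the marginal claims by iterated expectations together with a stratum-wise gluing construction for sharpness.

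First I would fix $\boldsymbol{x} \in \mathcal{X}$ and note that, conditional on $\boldsymbol{X} = \boldsymbol{x}$, Assumptions~\ref{ass:consistency}--\ref{ass:ER} are exactly the covariate-free IV assumptions restricted to the sub-population $\{\boldsymbol{X} = \boldsymbol{x}\}$: consistency is unaffected, unconfoundedness reads $Z \ind (A(z), Y(z))$ within the stratum, and the exclusion restriction is stratum-agnostic. Writing the full-data law in terms of the joint distribution of the potential exposures and outcomes $(A(0), A(1), Y(0), Y(1))$, the exclusion restriction and unconfoundedness make this response-type law identical across $Z = 0$ and $Z = 1$ within the stratum, and the only remaining data constraints are the observed cells $\pi_{ya.z}(\boldsymbol{x})$. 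Since the CATE is a linear functional of this response-type law, maximizing and minimizing it subject to nonnegativity, normalization, and agreement with the $\pi_{ya.z}(\boldsymbol{x})$ is precisely the linear program that Balke and Pearl solved symbolically; its optimal values are $\gamma_{\ell}(\boldsymbol{x})$ and $\gamma_u(\boldsymbol{x})$ as given by the $\boldsymbol{\theta}_{\ell}, \boldsymbol{\theta}_{u}$ formulas. This delivers the conditional bounds, and their tightness is immediate because the LP optima are attained by feasible (hence realizable) response-type laws.

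For the marginal bounds I would apply the law of iterated expectations,
\[
\mathbb{E}(Y(a=1) - Y(a=0)) = \mathbb{E}_{\Pb}\!\left[\mathbb{E}(Y(a=1) - Y(a=0) \mid \boldsymbol{X})\right],
\]
and integrate the pointwise conditional inequalities, which yields \eqref{eq:ATE-bounds} at once. The substantive step is marginal sharpness. To achieve the lower endpoint I would, for each $\boldsymbol{x}$, take the response-type distribution $q_{\boldsymbol{x}}$ that attains the conditional lower bound in the Balke--Pearl LP; because the optimizing vertex is an explicit piecewise function of the cell probabilities $\pi_{ya.z}(\boldsymbol{x})$, the map $\boldsymbol{x} \mapsto q_{\boldsymbol{x}}$ is measurable. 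I would then define a candidate law $\Pb^*$ having the same marginal of $\boldsymbol{X}$ and the same conditional law of $Z \mid \boldsymbol{X}$ as $\Pb$, with within-stratum response types drawn from $q_{\boldsymbol{x}}$ and $(A, Y)$ linked to them through consistency. Since every constraint in the problem---the four assumptions and the observed cell probabilities---is imposed conditionally on $\boldsymbol{X}$, there are no cross-stratum constraints, so $\Pb^*$ reproduces the observed law of $O = (\boldsymbol{X}, Z, A, Y)$ while achieving $\mathbb{E}(Y(a=1) - Y(a=0) \mid \boldsymbol{X} = \boldsymbol{x}) = \gamma_{\ell}(\boldsymbol{x})$ in each stratum; marginalizing gives $\mathbb{E}_{\Pb^*}(Y(a=1) - Y(a=0)) = \mathbb{E}_{\Pb}(\gamma_{\ell}(\boldsymbol{X}))$. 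The identical construction with the upper-bound optimizers matches the upper endpoint.

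The main obstacle I anticipate is this final gluing step rather than the (routine) marginalization inequality: one must verify that stitching together the stratum-wise optimizers produces a genuine joint distribution that is observationally equivalent to $\Pb$ and simultaneously satisfies all four assumptions. This hinges on (i) selecting the optimizers measurably in $\boldsymbol{x}$ and (ii) the absence of any constraint coupling distinct strata. Both hold here precisely because the IV assumptions and data constraints are purely conditional on $\boldsymbol{X}$, but articulating this carefully---and confirming that the marginal of $\boldsymbol{X}$ and the conditionals of $(Z, A, Y) \mid \boldsymbol{X}$ are left untouched by the construction---is where the argument needs the most care.
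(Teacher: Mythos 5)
Your proposal is correct, and its first two steps coincide with the paper's proof: both observe that Assumptions~\ref{ass:consistency}--\ref{ass:ER} restrict to the covariate-free Balke--Pearl setting within each stratum $\{\boldsymbol{X}=\boldsymbol{x}\}$, so the linear-programming argument gives the conditional bounds $\gamma_{\ell}(\boldsymbol{x}) \leq \mathbb{E}(Y(a=1)-Y(a=0)\mid \boldsymbol{X}=\boldsymbol{x}) \leq \gamma_u(\boldsymbol{x})$ with tightness, and both obtain validity of the marginal bounds by iterated expectations. Where you genuinely diverge is the sharpness of the marginal bounds. You construct a single compatible full-data law $\mathbb{P}^*$ that attains $\gamma_{\ell}(\boldsymbol{x})$ \emph{exactly} in every stratum, by measurably selecting, for each $\boldsymbol{x}$, an optimizing vertex $q_{\boldsymbol{x}}$ of the response-type polytope and gluing these across strata; this requires the measurable-selection and no-cross-stratum-constraint care you flag (e.g., a tie-breaking rule such as taking the smallest optimizing index $j$, which is measurable since each $\theta_{\ell,j}(\boldsymbol{x})$ is an explicit function of the cells $\pi_{ya.z}(\boldsymbol{x})$). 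The paper instead works with $\epsilon$-approximately attaining laws: for every $\epsilon>0$ it posits compatible laws $\mathbb{P}^*_{1,\epsilon}, \mathbb{P}^*_{2,\epsilon}$ whose conditional ATEs are within $\epsilon$ of $\gamma_{\ell}(\boldsymbol{X})$ and $\gamma_u(\boldsymbol{X})$ almost surely, and then derives marginal tightness by contradiction---if every compatible law had marginal ATE at least $\mathbb{E}_{\mathbb{P}}(\gamma_{\ell}(\boldsymbol{X}))+\epsilon$, the law $\mathbb{P}^*_{1,\epsilon}$ would violate this. Your route buys a stronger conclusion (the bound is attained, not merely an infimum) at the cost of having to argue measurable selection and exact attainment explicitly; the paper's route buys brevity and robustness, since ``tight'' need only mean non-improvable, and the $\epsilon$-formulation sidesteps tie-breaking and attainment issues (though its stratum-wise $\epsilon$-construction implicitly relies on the same gluing idea you make explicit). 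Both are sound; yours is the more constructive and, in the binary-LP setting where optima are attained at finitely many closed-form vertices, the extra strength comes essentially for free.
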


\noindent
The primary goal of this paper is to construct efficient estimators of the tight bounds on the ATE given in Theorem~\ref{thm:BP}, $\mathcal{L}(\Pb) \coloneqq \mathbb{E}_{\Pb}(\gamma_{\ell}(\boldsymbol{X}))$ and $\mathcal{U}(\Pb) \coloneqq \mathbb{E}_{\Pb}(\gamma_{u}(\boldsymbol{X}))$. Statistically, this is a challenging task, since these are means of a pointwise maximum and minimum, each of which is a non-smooth function. We will outline two strategies to deal with this difficulty: (1) invoking a margin condition (see Section~\ref{sec:binary}), or (2) targeting a smooth approximation to the Balke-Pearl bounds (see Section~\ref{sec:smooth}). 

Before describing strategies for estimation, we first demonstrate the benefit of pursuing covariate-assisted bounds. In Section~\ref{sec:illustration}, we give a simple illustration in the randomized experimental setting, showing that covariate-assisted bounds can be substantially narrower than unadjusted bounds. In Section~\ref{sec:width}, we provide a general result to guide the choice of adjustment set $\boldsymbol{X}$ when more than one is possible.

\subsection{Motivating Illustration} 
\label{sec:illustration}

Consider a hypothetical randomized experiment with arm assignment $Z \sim \mathrm{Bernoulli}(0.5)$, independent of baseline covariates $X_1 \sim \mathrm{Bernoulli}(0.7)$, $X_2 \sim \mathrm{Unif}(-1,1)$, with $X_1 \ind X_2$. In this example, $X_1$ represents a behavioral or demographic factor, and $X_2$ represents an underlying risk score (i.e., low values represent good health and low risk, high values represent poor health and high risk). In this experiment, we suppose that there is a degree of noncompliance, which is completely determined by the covariates $X_1$ and $X_2$. Specifically, we set:
\begin{align*}
    A(z = 0)A(z = 1) &= \mathds{1}(X_2 \geq 0.99) \\
    \{1 - A(z = 0)\}\{1 - A(z = 1)\} &= \mathds{1}(X_2 \leq -0.99) \\
    A(z = 0)\{1 - A(z=1)\} &= (1 - X_1)\mathds{1}(X_2 \in (-0.5, 0.5]) \\
    \{1 - A(z = 0)\}A(z=1) &= 1 - \mathds{1}(|X_2| \geq 0.99) - (1 - X_1)\mathds{1}(X_2 \in (-0.5, 0.5])
\end{align*}
In words, we can define four principal strata: ``always takers'', who are exposed regardless of instrument status, are those with the very highest underlying risk; ``never takers'', who are not exposed regardless of instrument status, are those with the very lowest risk; ``defiers'', whose exposure value is opposite to that of the instrument, are those with intermediate risk and behavioral factor $X_1 = 0$; and ``compliers'', whose exposure value matches that of the instrument, comprise the remainder of the population \citep{Angrist:1996,Frangakis:2002}.
We further suppose the potential outcomes $Y(a)$ are completely determined by the compliance classes $U = (A(z = 0), A(z = 1))$, and set $Y(a) \mid U \sim \mathrm{Bernoulli}(p_a(U))$, with
\[p_a = \begin{cases}
    0.20 + 0.10a, & \text{ if } A(z = 0)A(z = 1) = 1, \\
    0.90 + 0.05a, & \text{ if } \{1 - A(z = 0)\}\{1 - A(z = 1)\} = 1, \\
        0.65 + 0.05a, & \text{ if } A(z = 0)\{1 - A(z=1)\} = 1, \\
    0.25 + 0.10a, & \text{ if } \{1 - A(z = 0)\}A(z=1) = 1.
\end{cases}.\]

Figure~\ref{fig:toy-example} shows the covariate-agnostic and covariate-adjusted Balke-Pearl bounds on the ATE resulting from Theorem~\ref{thm:BP}. The true bounds are plotted in blue and estimated 95\% confidence intervals (from one simulated sample of size 5,000) for the bounds are plotted in green. The covariate-agnostic bounds, while simple to compute as the maximum and minimum of 8 (true or empirical) probabilities, are quite wide and cover the null treatment effect of zero. The covariate-adjusted bounds, on the other hand, are very narrow and are bounded away from zero. Employing the estimator we propose in Section~\ref{sec:binary}, with flexible regression tree-based nuisance function estimation, we obtain valid and narrow estimated bounds on the ATE.

\begin{figure}[ht]
  \centering
  \includegraphics[width = \linewidth]{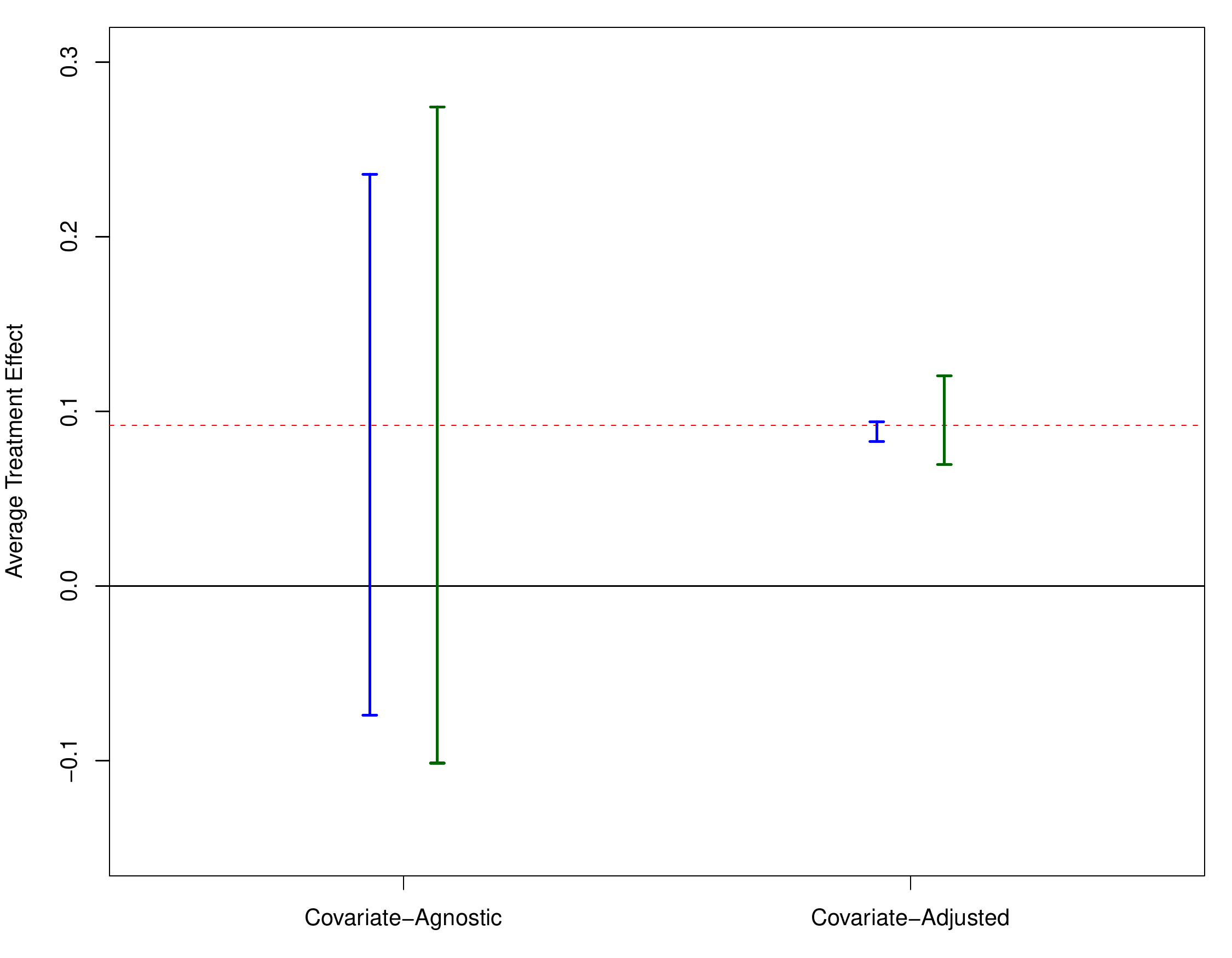}
  \caption{Balke-Pearl bounds for the ATE, with and without covariate adjustment. The red dashed line represents the true ATE. Blue bars represent the true theoretical bounds, and green bars are estimated 95\% confidence intervals from a single sample of 5,000 subjects (with 10-fold cross-fitting for adjusted approach). The solid black line indicates the reference value of zero average effect.} \label{fig:toy-example}
\end{figure}

Note that in this example, the covariates $X_1$, $X_2$ happen to comprise all confounders of the relationship between $A$ and $Y$, so that direct adjustment (e.g., with the $g$-formula) would identify the ATE. A practitioner without this knowledge, but with the foresight to measure $X_1$ and $X_2$, could still use these covariates to construct informative bounds. We also note that our example is somewhat extreme, as we would not typically expect the covariate-adjusted bounds to always be so narrow. However, we show below that in this experimental design, the covariate-adjusted bounds are guaranteed to be at least as narrow as unadjusted bounds, even in the presence of residual exposure-outcome confounding given $\boldsymbol{X}$. Indeed, one will want to adjust for $\boldsymbol{X}$ whenever these covariates are predictive of $A$ and $Y$.

\subsection{Width of the Bounds}
\label{sec:width}

In the illustration above, both choices of adjustment sets (no covariates or $(X_1, X_2)$) result in valid bounds since Assumptions~\ref{ass:consistency}--\ref{ass:ER} are satisfied with either choice. In our example, the covariate-assisted Balke-Pearl bounds improve substantially upon na\"ive covariate-agnostic bounds. In fact, such additional adjustment in general cannot do worse, and can only improve upon unadjusted bounds. We formalize this idea in the following result.

\begin{proposition} \label{prop:width}
    Suppose $\boldsymbol{X}$ renders $Z$ a valid instrument according
  to Assumptions~\ref{ass:consistency}--\ref{ass:ER}. Suppose
  that Assumptions~\ref{ass:consistency}--\ref{ass:ER} also hold after augmenting
  $\boldsymbol{X}$ with $\boldsymbol{G}$, and
  $Z \ind \boldsymbol{G} \mid \boldsymbol{X}$, i.e.,
  $\boldsymbol{G}$ doesn't predict $Z$, except possibly through
  association with $\boldsymbol{X}$. Then Balke-Pearl bounds based on
  $(\boldsymbol{X}, \boldsymbol{G})$ are at least as narrow as those
  based on $\boldsymbol{X}$, and may be strictly narrower.
\end{proposition}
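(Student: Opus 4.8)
The plan is to show that enlarging the conditioning set can only contract the gap between the two ATE bounds, by combining (i) the \emph{affine} dependence of each candidate $\theta_{\ell,j}, \theta_{u,j}$ on the conditional cell probabilities $\pi_{ya.z}$ with (ii) a conditional Jensen inequality applied to the pointwise $\max$ and $\min$.

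First I would establish the key coarsening identity. Writing $\pi_{ya.z}(\boldsymbol{X},\boldsymbol{G}) = \Pb(Y=y, A=a \mid \boldsymbol{X}, \boldsymbol{G}, Z=z)$ for the finer cell probabilities, iterated expectation gives $\pi_{ya.z}(\boldsymbol{X}) = \mathbb{E}[\pi_{ya.z}(\boldsymbol{X},\boldsymbol{G}) \mid \boldsymbol{X}, Z=z]$, and the hypothesis $Z \ind \boldsymbol{G} \mid \boldsymbol{X}$ lets me drop the conditioning on $Z=z$, so that $\pi_{ya.z}(\boldsymbol{X}) = \mathbb{E}[\pi_{ya.z}(\boldsymbol{X},\boldsymbol{G}) \mid \boldsymbol{X}]$ for every cell $(y,a,z)$. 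This step is exactly where the requirement that $\boldsymbol{G}$ not predict $Z$ given $\boldsymbol{X}$ is essential: without it the conditional law of $\boldsymbol{G}$ would depend on $z$ and the identity would break.

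Next, since each entry of $\boldsymbol{\theta}_{\ell}$ and $\boldsymbol{\theta}_u$ in \eqref{eq:lower-bound}--\eqref{eq:upper-bound} is an affine function of the eight $\pi_{ya.z}$ terms, linearity of conditional expectation propagates the identity componentwise, $\theta_{\ell,j}(\boldsymbol{X}) = \mathbb{E}[\theta_{\ell,j}(\boldsymbol{X},\boldsymbol{G}) \mid \boldsymbol{X}]$ and likewise for $\theta_{u,j}$, for all $j$. I would then apply a conditional Jensen inequality to the convex map $\max_j$ and the concave map $\min_j$: since $\max_j \mathbb{E}[f_j \mid \boldsymbol{X}] \le \mathbb{E}[\max_j f_j \mid \boldsymbol{X}]$ for any integrable $f_1, \dots, f_8$, we obtain $\gamma_{\ell}(\boldsymbol{X}) \le \mathbb{E}[\gamma_{\ell}(\boldsymbol{X},\boldsymbol{G}) \mid \boldsymbol{X}]$, and dually $\gamma_u(\boldsymbol{X}) \ge \mathbb{E}[\gamma_u(\boldsymbol{X},\boldsymbol{G}) \mid \boldsymbol{X}]$. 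Taking expectations over $\boldsymbol{X}$ and using the tower property then gives that $\mathcal{L}(\Pb)$ based on $\boldsymbol{X}$ is no larger than that based on $(\boldsymbol{X},\boldsymbol{G})$, while $\mathcal{U}(\Pb)$ based on $\boldsymbol{X}$ is no smaller; subtracting the two inequalities yields the width comparison (indeed, the finer interval is nested inside the coarser one). That both sets of bounds are themselves valid follows from the hypothesis that Assumptions~\ref{ass:consistency}--\ref{ass:ER} persist after augmenting with $\boldsymbol{G}$, so Theorem~\ref{thm:BP} applies to each adjustment set.

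Finally, for the ``strictly narrower'' clause I would track the equality case of Jensen: the lower-bound inequality is strict on any $\boldsymbol{X}$-stratum of positive probability on which the maximizing index $\argmax_j \theta_{\ell,j}(\boldsymbol{X},\boldsymbol{G})$ fails to be $\boldsymbol{X}$-measurable, i.e., genuinely varies with $\boldsymbol{G}$ within the stratum, with the analogous statement for the minimizing index of the upper bound. I expect the only real obstacle to be the bookkeeping in the first step---verifying that conditioning on $Z=z$ can be removed cleanly for each of the eight cells $(y,a,z)$ and that the resulting identity propagates through all sixteen affine templates---rather than any delicate inequality; once the coarsening identity is in hand, the single conditional Jensen step does all the work.
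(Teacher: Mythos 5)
Your proposal is correct and follows essentially the same route as the paper's proof: the identical coarsening identity $\pi_{ya.z}(\boldsymbol{X}) = \mathbb{E}\left[\pi_{ya.z}(\boldsymbol{X},\boldsymbol{G}) \mid \boldsymbol{X}, Z=z\right] = \mathbb{E}\left[\pi_{ya.z}(\boldsymbol{X},\boldsymbol{G}) \mid \boldsymbol{X}\right]$ (tower law plus $Z \ind \boldsymbol{G} \mid \boldsymbol{X}$), linearity of the $\theta_{\ell,j}$ and $\theta_{u,j}$ in the cell probabilities, and a conditional Jensen inequality for the pointwise maximum and minimum, followed by iterated expectations. The only cosmetic difference is in the ``may be strictly narrower'' clause, where the paper simply points to the example of Section~\ref{sec:illustration} while you characterize strictness via the equality case of Jensen---both suffice.
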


According to Proposition~\ref{prop:width}, we would want to include in $\boldsymbol{X}$ any pure predictors of $(A, Y)$ that are not direct confounders of the effect of $Z$ on $A$ and $Y$, when constructing Balke-Pearl bounds. This is somewhat analogous to Lemma 4 in \citet{rotnitzky2020b}, which says that influence function-based estimation of the ATE from an observational study is improved by adding pure outcome predictors---so-called ``precision variables''---to an already sufficient set of confounders. Note, however, that in that setting improvement corresponded to lower variance, whereas here we are concerned both with the width of the theoretical bounds, as well as the variance of our estimators. Specializing Proposition~\ref{prop:width} to the case where $Z$ is marginally randomized yields the following corollary:

\begin{corollary}\label{cor:width}
    Suppose $Z \ind \boldsymbol{X}$ and Assumptions~\ref{ass:consistency}--\ref{ass:ER} hold unconditionally as well as given $\boldsymbol{X}$. Then Balke-Pearl bounds based on $\boldsymbol{X}$ are at least as narrow as unadjusted bounds.
\end{corollary}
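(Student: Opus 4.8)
The plan is to obtain Corollary~\ref{cor:width} as an immediate specialization of Proposition~\ref{prop:width}. In the notation of the proposition, I would take the base adjustment set to be empty and let the augmenting set $\boldsymbol{G}$ be the covariates $\boldsymbol{X}$ of the corollary. Under this relabeling, the phrase ``bounds based on $\boldsymbol{X}$'' in the proposition becomes the unadjusted, covariate-agnostic bounds, while ``bounds based on $(\boldsymbol{X}, \boldsymbol{G})$'' becomes the covariate-adjusted bounds of the corollary. The conclusion of Proposition~\ref{prop:width}—that the richer adjustment set yields bounds at least as narrow—then reads verbatim as the claim to be proved.

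Next I would check that the three hypotheses of Proposition~\ref{prop:width} hold under the corollary's assumptions. First, the empty set must render $Z$ a valid instrument, i.e.\ Assumptions~\ref{ass:consistency}--\ref{ass:ER} must hold unconditionally; this is assumed. Second, Assumptions~\ref{ass:consistency}--\ref{ass:ER} must continue to hold after augmenting with $\boldsymbol{X}$, i.e.\ conditionally on $\boldsymbol{X}$; this is also assumed. Third, the augmenting variables must satisfy the proposition's conditional-independence requirement with the base set taken to be empty, namely $Z \ind \boldsymbol{X} \mid \emptyset$, which is just the marginal statement $Z \ind \boldsymbol{X}$; this is precisely the remaining hypothesis of the corollary. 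With all three conditions verified, Proposition~\ref{prop:width} applies directly and delivers the result.

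As a sanity check on the direction of the inequality, I would recall the convexity mechanism underlying Proposition~\ref{prop:width}. Each $\theta_{\ell, j}$ and $\theta_{u, j}$ is affine in the conditional probabilities $\pi_{ya.z}$, so $\gamma_\ell = \max_j \theta_{\ell, j}$ is convex and $\gamma_u = \min_j \theta_{u, j}$ is concave in those probabilities. Marginal randomization $Z \ind \boldsymbol{X}$ gives $\Pb(Y = y, A = a \mid Z = z) = \E[\pi_{ya.z}(\boldsymbol{X})]$, so the unadjusted bounds are $\gamma_\ell$ and $\gamma_u$ evaluated at the $\boldsymbol{X}$-averaged probabilities $\E[\boldsymbol{\pi}(\boldsymbol{X})]$. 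Jensen's inequality then yields $\gamma_\ell(\E[\boldsymbol{\pi}(\boldsymbol{X})]) \le \E[\gamma_\ell(\boldsymbol{X})] = \mathcal{L}(\Pb)$ and $\gamma_u(\E[\boldsymbol{\pi}(\boldsymbol{X})]) \ge \E[\gamma_u(\boldsymbol{X})] = \mathcal{U}(\Pb)$, so the adjusted interval $[\mathcal{L}(\Pb), \mathcal{U}(\Pb)]$ sits inside the unadjusted one. The main obstacle here is not analytical but one of bookkeeping: one must confirm that the empty set is a legitimate instance of the proposition's base adjustment set and, in particular, that the conditional-independence hypothesis $Z \ind \boldsymbol{G} \mid \emptyset$ degenerates correctly to the marginal randomization $Z \ind \boldsymbol{X}$ assumed here. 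Once this identification is made, no further work is required beyond invoking Proposition~\ref{prop:width}.
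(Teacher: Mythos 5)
Your proposal is correct and matches the paper's proof exactly: the paper likewise obtains Corollary~\ref{cor:width} as an immediate specialization of Proposition~\ref{prop:width} with $\boldsymbol{X}$ replaced by $\emptyset$ and $\boldsymbol{G}$ replaced by $\boldsymbol{X}$. Your verification of the three hypotheses and the Jensen-based sanity check are consistent with the mechanism in the paper's proof of the proposition.
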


Corollary~\ref{cor:width} explains what we observed in the motivating example of Section~\ref{sec:illustration}. That is, a marginally randomized IV guarantees that inclusion of baseline covariates $\boldsymbol{X}$ will result in bounds that are no worse than unadjusted bounds. The degree to which such inclusion narrows the bounds is of great interest, and will help in deciding which covariates to measure in practice---a more precise characterization of this improvement we leave for future research. For the remainder of the paper, we focus on constructing valid and efficient estimators of the covariate-assisted Balke-Pearl bounds.


\section{Bound Estimation Under a Margin Condition} 
\label{sec:binary}

As mentioned in Section~\ref{sec:BP-bounds}, the bound functionals $\mathcal{L}(\mathbb{P}) = \mathbb{E}_{\mathbb{P}}(\gamma_{\ell}(\boldsymbol{X}))$ and $\mathcal{U}(\mathbb{P}) = \mathbb{E}_{\mathbb{P}}(\gamma_{u}(\boldsymbol{X}))$ are not pathwise differentiable without further restrictions, and therefore do not have influence functions to enable estimation as described above. This is because the pointwise maximum, $\gamma_{\ell}(\boldsymbol{X}) = \max_{1 \leq j \leq 8} \theta_{\ell, j}(\boldsymbol{X})$, and pointwise minimum, $\gamma_{u}(\boldsymbol{X}) = \min_{1 \leq j \leq 8} \theta_{u, j}(\boldsymbol{X})$, are not differentiable everywhere. Indeed, we should not expect it to be possible in the nonparametric model to estimate the covariate-adjusted bounds at parametric rates. In this section, we introduce an additional assumption known as a margin condition that renders the bounds pathwise differentiable. Further, we describe estimators of the bounds that exploit the margin condition in order to achieve faster rates. For a broader overview of the use of influence functions for efficient and robust estimation, see Appendix~\ref{sec:appendix-smooth}.

\subsection{An infeasible estimator}\label{sec:infeasible}
To motivate the proposed estimator, we first consider an infeasible estimator of the bounds. Let $d_{\ell}(\boldsymbol{X}) \in \argmax_{1 \leq j \leq 8} \theta_{\ell, j}(\boldsymbol{X})$ and $d_{u}(\boldsymbol{X}) \in \argmin_{1 \leq j \leq 8} \theta_{u, j}(\boldsymbol{X})$. The bounds can thus be written as
\begin{align*}
& \mathcal{L}(\Pb) = \sum_{j = 1}^8 \E\left[ \{d_\ell(\boldsymbol{X}) = j\} \theta_{\ell, j}(\boldsymbol{X}) \right] \quad \text{and} \quad \mathcal{U}(\Pb) = \sum_{j = 1}^8 \E\left[ \{d_u(\boldsymbol{X}) = j\} \theta_{u, j}(\boldsymbol{X})\right].
\end{align*}
For each $y,a,z \in \{0,1\}$, define 
\[
\psi_{y a . z}(O ; \Pb)=\frac{\mathds{1}(Z=z)}{\lambda_z(\boldsymbol{X})}\left\{\mathds{1}(Y=y, A=a)-\pi_{y a . z}(\boldsymbol{X})\right\}.
\]
Under the assumption that $\boldsymbol{x} \mapsto d_\ell(\boldsymbol{x})$ and $\boldsymbol{x} \mapsto d_u(\boldsymbol{x})$ are known functions, and are unique maximizers and minimizers, respectively, it can be shown that the uncentered influence functions of $\mathcal{L}(\Pb)$ and $\mathcal{U}(\Pb)$ are
\begin{align*}
& \varphi_\ell(O; \Pb, d_\ell) = \sum_{j = 1}^8 \mathds{1}\{d_\ell(\boldsymbol{X}) = j\} \left\{ L_j(O; \Pb) + \theta_{\ell, j} (\boldsymbol{X}) \right\} \\
& \varphi_u(O; \Pb, d_u) = \sum_{j = 1}^8 \mathds{1}\{d_u(\boldsymbol{X}) = j\} \left\{ U_j(O; \Pb) + \theta_{u, j} (\boldsymbol{X}) \right\}
\end{align*}
where $L_j(O; \Pb)$ and $U_j(O; \Pb)$ are obtained by replacing $\pi_{ya.z}(\boldsymbol{X})$ with $\psi_{ya.z}(O; \Pb)$ in $\theta_{\ell,j}(\boldsymbol{X})$ and $\theta_{u,j}(\boldsymbol{X})$ in display~\eqref{eq:lower-bound}, respectively, and omitting the constant $1$ whenever it appears, e.g., 
\[L_1 = \psi_{11.1} + \psi_{00.0}, \  U_1 = - \psi_{01.1} - \psi_{10.0}.\]
Therefore, under minor regularity conditions, an ``infeasible'' estimator $\widetilde{\mathcal{L}} = \Pn \{\varphi_\ell(O; \widehat\Pb, d_\ell)\} $, i.e., requiring complete knowledge of $d_\ell$, would satisfy 
\begin{align*}
\sqrt{n}(\widetilde{\mathcal{L}}  - \mathcal{L}(\Pb)) \indist N(0, \mathrm{Var}_{\mathbb{P}} \{\varphi_\ell(O; \Pb, d_\ell)\})
\end{align*}
as long as $\Pb\{\varphi_\ell(O; \Pb, d_\ell) - \varphi_\ell(O; \widehat\Pb, d_\ell)\} = o_\Pb(n^{-1/2})$. In this regime, $\widetilde{\mathcal{L}}$ would be efficient. 

\subsection{Estimation \& Inference}
In light of the discussion in the section above, we propose estimating the bounds with
\begin{align*}
& \widehat{\mathcal{L}} = \sum_{j = 1}^8 \Pn\left[ \mathds{1}\{\widehat{d}_\ell(\boldsymbol{X}) = j\} \{L_j(O; \widehat\Pb) + \widehat\theta_{\ell, j}(\boldsymbol{X}) \right] = \Pn \{\varphi_\ell(O; \widehat\Pb, \widehat{d}_\ell) \}, \\
& \widehat{\mathcal{U}} = \sum_{j = 1}^8 \Pn\left[ \mathds{1}\{\widehat{d}_u(\boldsymbol{X}) = j\} \{U_j(O; \widehat\Pb) + \widehat\theta_{u, j}(\boldsymbol{X}) \right] = \Pn \{\varphi_u(O; \widehat\Pb, \widehat{d}_u) \}.
\end{align*}
That is, to estimate the non-smooth components of the bounds $\mathcal{L}(\mathbb{P})$ and $\mathcal{U}(\mathbb{P})$, namely the indicators $\mathds{1}\{d_\ell(\boldsymbol{X}) = j\}$ and $\mathds{1}\{d_u(\boldsymbol{X}) = j\}$, we use plug-in estimators $\mathds{1}\{\widehat{d}_\ell(\boldsymbol{X}) = j\}$ and $\mathds{1}\{\widehat{d}_u(\boldsymbol{X}) = j\}$, where $\widehat{d}_{\ell} \in \argmax_{1 \leq j \leq 8} \widehat{\theta}_{\ell, j}$ and $\widehat{d}_{u} \in \argmax_{1 \leq j \leq 8} \widehat{\theta}_{u, j}$. A natural question is then under what conditions, if any, the estimators $\widehat{\mathcal{L}}$ and $\widehat{\mathcal{U}}$ behave, at least asymptotically, like their infeasible counterparts $\widetilde{\mathcal{L}}$ and $\widetilde{\mathcal{U}}$. As shown in the next theorem, a sufficient condition for such oracle behavior is captured by the following ``margin'' condition. This additional assumption controls the probability that the minimum and maximum are near their points of non-differentiability.

\begin{assumption}[Margin condition] \label{ass:margin} There exists
  $\alpha > 0$ such that for any $t \geq 0$,
  \begin{equation} \label{eq:margin-lower}
  \Pb\left[\min_{j \neq d_{\ell}(\boldsymbol{X})}
  \{\theta_{\ell,
        d_{\ell}(\boldsymbol{X})}(\boldsymbol{X}) -
      \theta_{\ell,j}(\boldsymbol{X})\} \leq t\right ] \lesssim
    t^{\alpha},
    \end{equation}
    and
    \begin{equation} \label{eq:margin-upper}
    \Pb\left[\min_{j \neq d_{u}(\boldsymbol{X})}
  \{
      \theta_{u,j}(\boldsymbol{X}) - \theta_{u,
        d_{u}(\boldsymbol{X})}(\boldsymbol{X})\} \leq t\right ] \lesssim
    t^{\alpha}.
    \end{equation}
\end{assumption}

The margin condition in Assumption~\ref{ass:margin} is very similar to conditions that have been proposed and leveraged in the classification literature~\citep{audibert2007}, as well as in dynamic treatment regimes~\citep{luedtke2016} and other instrumental variable problems~\citep{kennedy2020b}. In words, condition~\eqref{eq:margin-lower} says that with high probability, $\theta_{\ell, d_{\ell}}$ is separated from non-maximal lower bound values $\theta_{\ell, j}$. Similarly, condition~\eqref{eq:margin-upper} limits how close non-minimal upper bound values $\theta_{u, j}$ are to the actual minimum $\theta_{u, d_u}$. If $\min_{j \neq d_{\ell}(\boldsymbol{X})}
  \{\theta_{\ell,
        d_{\ell}(\boldsymbol{X})}(\boldsymbol{X}) -
      \theta_{\ell,j}(\boldsymbol{X})\}$, for instance, has bounded density near zero, then~\eqref{eq:margin-lower} will hold with $\alpha = 1$.
This is a relatively weak requirement which we expect to hold in many cases. Under Assumption \ref{ass:margin}, we are able to derive sufficient conditions such that $\widehat{\mathcal{L}}$ and $\widehat{\mathcal{U}}$ are $\sqrt{n}$-consistent and asymptotically normal.

\begin{theorem}\label{thm:convergence_margin}
Suppose that the nuisance functions $\widehat\pi_{ya.z}$ and $\widehat\lambda_z$ are estimated from a separate independent sample.
Moreover, suppose that Assumption \ref{ass:margin} holds, $\Pb\left(\epsilon \leq \widehat\lambda_1(\boldsymbol{X}) \leq 1 - \epsilon\right) = 1$, for some $\epsilon > 0$, $\left\|\widehat\lambda_1 - \lambda_1\right\| = o_\Pb(1)$, and $\max_{y, a, z \in \{0, 1\}} \|\widehat\pi_{ya.z} - \pi_{ya.z}\| = o_\Pb(1)$.
Then, we have
\begin{align*}
\widehat{\mathcal{L}} - \mathcal{L} & = (\Pn - \Pb)\varphi_\ell(O; \Pb, d_\ell) \\
& \hphantom{=} + O_\Pb\left(\left\lVert \widehat{\lambda}_1 - \lambda_1 \right\rVert
    \cdot \max_{y,a,z \in \{0,1\}} \left\lVert \widehat{\pi}_{ya.z} -
      \pi_{ya.z}\right\rVert + \max_{1 \leq j \leq 8}\left\lVert
      \widehat{\theta}_{\ell, j} - \theta_{\ell, j} \right\rVert_{\infty}^{1 +
      \alpha} \right) + o_\Pb(n^{-1/2}),
\end{align*}
and 
\begin{align*}
\widehat{\mathcal{U}} - \mathcal{U} & = (\Pn - \Pb)\varphi_u(O; \Pb, d_u) \\
& \hphantom{=} + O_\Pb\left(\left\lVert \widehat{\lambda}_1 - \lambda_1 \right\rVert
    \cdot \max_{y,a,z \in \{0,1\}} \left\lVert \widehat{\pi}_{ya.z} -
      \pi_{ya.z}\right\rVert + \max_{1 \leq j \leq 8}\left\lVert
      \widehat{\theta}_{u, j} - \theta_{u, j} \right\rVert_{\infty}^{1 +
      \alpha} \right) + o_\Pb(n^{-1/2}).
\end{align*}
\end{theorem}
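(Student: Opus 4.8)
The plan is to prove the expansion for $\widehat{\mathcal{L}}$; the argument for $\widehat{\mathcal{U}}$ is identical after swapping the roles of max and min. Since the nuisance estimates are trained on a separate independent sample, I would condition on that sample throughout, treating $\widehat\Pb$, $\widehat{d}_\ell$, and $\widehat\theta_{\ell,j}$ as fixed functions when integrating over the data entering $\Pn$. Writing $\widehat{\mathcal{L}} - \mathcal{L} = \Pn\varphi_\ell(O;\widehat\Pb,\widehat{d}_\ell) - \Pb\varphi_\ell(O;\Pb,d_\ell)$ and using that $\Pb\varphi_\ell(O;\Pb,d_\ell) = \Pb[\gamma_\ell] = \mathcal{L}$ (because each $\psi_{ya.z}(O;\Pb)$ is conditionally mean-zero), I would add and subtract to obtain the three standard pieces
\[
\widehat{\mathcal{L}} - \mathcal{L} = (\Pn - \Pb)\varphi_\ell(O;\Pb,d_\ell) + (\Pn - \Pb)\{\varphi_\ell(O;\widehat\Pb,\widehat{d}_\ell) - \varphi_\ell(O;\Pb,d_\ell)\} + \{\Pb\varphi_\ell(O;\widehat\Pb,\widehat{d}_\ell) - \mathcal{L}\},
\]
namely the leading empirical process term, an empirical-process remainder $(T_1)$, and a drift/bias term $(T_2)$.

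For $(T_1)$, sample splitting reduces the claim to showing that the conditional variance of $\varphi_\ell(O;\widehat\Pb,\widehat{d}_\ell) - \varphi_\ell(O;\Pb,d_\ell)$ vanishes, which via Chebyshev follows once I establish $\|\varphi_\ell(O;\widehat\Pb,\widehat{d}_\ell) - \varphi_\ell(O;\Pb,d_\ell)\|_{L_2(\Pb)} = o_\Pb(1)$. The smooth part converges by the assumed consistency of $\widehat\lambda_1$ and $\widehat\pi_{ya.z}$ together with positivity $\widehat\lambda_1\in[\epsilon,1-\epsilon]$; the indicator part $\mathds{1}\{\widehat{d}_\ell=j\}$ converges to $\mathds{1}\{d_\ell=j\}$ in probability off a shrinking near-tie set, which follows from the consistency of $\widehat\theta_{\ell,j}$ and Assumption~\ref{ass:margin} (note that margin with $\alpha>0$ forces $\Pb[\text{margin}=0]=0$, so $d_\ell$ is $\Pb$-a.s.\ unique and the indicators are well defined).

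The term $(T_2)$ is the crux, and I would split it along the same smooth/non-smooth decomposition. Since each $\psi_{ya.z}(O;\widehat\Pb)$ has conditional mean $\tfrac{\lambda_z}{\widehat\lambda_z}(\pi_{ya.z}-\widehat\pi_{ya.z})$ and every $\theta_{\ell,j}$ is affine in the $\pi$'s, a direct computation gives $\Pb\{L_j(O;\widehat\Pb)\mid\boldsymbol{X}\} + \widehat\theta_{\ell,j} = \theta_{\ell,j} + r_j$, where each $r_j$ is a fixed linear combination of the second-order products $\tfrac{\lambda_z - \widehat\lambda_z}{\widehat\lambda_z}(\pi_{ya.z}-\widehat\pi_{ya.z})$. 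By positivity and Cauchy--Schwarz, these contribute $O_\Pb(\|\widehat\lambda_1-\lambda_1\|\cdot\max_{y,a,z}\|\widehat\pi_{ya.z}-\pi_{ya.z}\|)$ after integrating against the bounded indicators. What then remains is the pure selection error $\Pb[\theta_{\ell,\widehat{d}_\ell}(\boldsymbol{X}) - \gamma_\ell(\boldsymbol{X})]$. Here the margin argument enters: writing $\widehat{r}=\max_j\|\widehat\theta_{\ell,j}-\theta_{\ell,j}\|_\infty$, on the event $\{\widehat{d}_\ell\neq d_\ell\}$ the selected index is suboptimal only because the estimated and true gaps disagree, which simultaneously forces $0\le\gamma_\ell - \theta_{\ell,\widehat{d}_\ell}\le 2\widehat{r}$ and $\min_{j\neq d_\ell}\{\theta_{\ell,d_\ell}-\theta_{\ell,j}\}\le 2\widehat{r}$. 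Combining these with Assumption~\ref{ass:margin} yields $0\le\Pb[\gamma_\ell - \theta_{\ell,\widehat{d}_\ell}]\lesssim \widehat{r}\cdot\Pb[\text{margin}\le 2\widehat{r}]\lesssim \widehat{r}^{1+\alpha}$, which is exactly the stated margin term $\max_j\|\widehat\theta_{\ell,j}-\theta_{\ell,j}\|_\infty^{1+\alpha}$.

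The hard part will be this selection-error piece of $(T_2)$: pairing the observation that the selection loss is at most $2\widehat{r}$ on the misclassification event with the implication that misclassification forces a small margin, so that Assumption~\ref{ass:margin} upgrades the naive linear $\widehat{r}$ bound to the improved $\widehat{r}^{1+\alpha}$ rate. The product-bias computation, while the other main ingredient, is routine second-order (double-robustness) algebra, and the empirical-process control is standard once sample splitting lets me bypass any Donsker requirement.
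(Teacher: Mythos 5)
Your proposal is correct and takes essentially the same approach as the paper's proof: the identical three-term decomposition, the same sample-splitting/$L_2$-consistency control of the empirical-process remainder, the same second-order product-bias algebra for the smooth part of the drift term, and the same key margin step bounding the selection error by $2\widehat{r}\,\Pb\left[\min_{j \neq d_\ell(\boldsymbol{X})}\{\theta_{\ell, d_\ell(\boldsymbol{X})}(\boldsymbol{X}) - \theta_{\ell,j}(\boldsymbol{X})\} \leq 2\widehat{r}\right] \lesssim \widehat{r}^{1+\alpha}$ with $\widehat{r} = \max_j \lVert \widehat{\theta}_{\ell,j} - \theta_{\ell,j}\rVert_\infty$. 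There are no gaps; your observation that $\alpha > 0$ rules out ties almost surely (making $d_\ell$ well defined) is a point the paper uses only implicitly.
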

In this result, we assume the nuisance functions are estimated on separate independent data for simplicity: in practice with one random sample, one can split the sample and use cross-fitting to achieve the same asymptotic behavior \citep{zheng2010asymptotic, chernozhukov2018double}. Sample splitting allows us to avoid complicated empirical process conditions \citep{chernozhukov2016double, kennedy2020b} and derive bounds on the conditional bias of the estimator in terms of the convergence rate of the nuisance functions. 

Note that in Theorem \ref{thm:dr-smooth} we do not require the individual nuisance functions to converge at $\sqrt{n}$-rates. The conditions are on the product of convergence rates, or rates raised to a power greater than 1. This is a key advantage of a robust estimator: after we correct for the first-order bias, the remaining bias only involves higher-order terms and is much smaller. The result of Theorem \ref{thm:convergence_margin} establishes that $\widehat{\mathcal{L}}$, for instance, is $\sqrt{n}$-consistent as long as 
\begin{align*}
\left\lVert \widehat{\lambda}_1 - \lambda_1 \right\rVert
    \cdot \max_{y,a,z \in \{0,1\}} \left\lVert \widehat{\pi}_{ya.z} -
      \pi_{ya.z}\right\rVert + \max_{1 \leq j \leq 8}\left\lVert
      \widehat{\theta}_{\ell, j} - \theta_{\ell, j} \right\rVert_{\infty}^{1 +
      \alpha} = o_\Pb(n^{-1/2})
\end{align*}
The first product-bias term results from estimation of $\theta_{\ell, j}$ with $L_j(O; \widehat{\mathbb{P}}) + \widehat{\theta}_{\ell, j}$. The second term denotes a bound on the bias arising plug-in estimation of the indicators $\mathds{1}\{d_\ell(\bx) = j\}$ and depends on the exponent $\alpha$ from Assumption \ref{ass:margin}. For example, as long as the quantity $\min_{j \neq d_\ell(\boldsymbol{X})} \{\theta_{\ell, d_\ell(\boldsymbol{X})} - \theta_{\ell, j}(\boldsymbol{X})\}$ has a bounded density near zero, then Assumption \ref{ass:margin} holds with $\alpha = 1$. In this case, $\max_{y,a,z \in \{0,1\}} \left\lVert \widehat{\pi}_{ya.z} -
      \pi_{ya.z}\right\rVert_\infty = o_\Pb(n^{-1/4})$ is typically sufficient to imply
\begin{align*}
\max_{1 \leq j \leq 8}\left\lVert
      \widehat{\theta}_{\ell, j} - \theta_{\ell, j} \right\rVert_{\infty}^{1 +
      \alpha} = o_\Pb(n^{-1/2}).
\end{align*}
 Finally, Theorem \ref{thm:convergence_margin} outlines sufficient conditions for the asymptotic normality of $\widehat{\mathcal{L}}$ and $\widehat{\mathcal{U}}$ so that Wald-type confidence intervals are straightforward to compute. In particular, under the conditions for $\sqrt{n}$-consistency, we can construct an asymptotically valid $100(1 - \delta)\%$ confidence interval for the ATE with
 \[\bigg(\widehat{\mathcal{L}}-z_{1-\delta / 2}\sqrt{\widehat{V}/n} , \ \widehat{\mathcal{U}}+z_{1-\delta / 2}\sqrt{\widehat{W} / n}\bigg),
\]
where $\widehat{V} = \mathbb{P}_n\left[\left\{\varphi_{\ell}(O; \widehat\Pb, \widehat{d}_\ell) - \widehat{\mathcal{L}}\right\}^2\right]$, $\widehat{V} = \mathbb{P}_n\left[\left\{\varphi_{u}(O; \widehat\Pb, \widehat{d}_u) - \widehat{\mathcal{U}}\right\}^2\right]$, and where $z_{\beta}$ is the $\beta$-th quantile of the standard normal distribution.
Moreover, the procedure proposed in \cite{imbens2004confidence} can be used to conduct more precise inferences (see also Theorem 3 in \cite{jiang2018using}).

 We highlight that marginally randomized instruments present a special case of interest. In this case, $Z \ind \boldsymbol{X} $ holds by randomization of treatment assignment $Z$ and $\lambda_{z}(\boldsymbol{X}) $ is equal to a known constant $\lambda_z$ by design. Here, the requirement for the convergence rate is reduced to $\max_{1 \leq j \leq 8}\left\lVert
      \widehat{\theta}_{\ell, j} - \theta_{\ell, j} \right\rVert_{\infty}=o_{\Pb}\left(n^{-\frac{1}{2(1 + \alpha)}}\right)$.

\section{Targeting Smooth Approximations} 
\label{sec:smooth}

 As an alternative to direct estimation of the bounds under a margin condition, our second proposal is to instead target approximations to the bounds, $\mathcal{L}_g(\Pb) = \mathbb{E}_{\mathbb{P}}(g(\boldsymbol{\theta}_{\ell}(\boldsymbol{X})))$ and $\mathcal{U}_h(\Pb) = \mathbb{E}_{\mathbb{P}}(h(\boldsymbol{\theta}_{u}(\boldsymbol{X})))$, where $g, h: [-1,1]^8 \to \mathbb{R}$ are sufficiently smooth approximate pointwise maximum and minimum functions, respectively.

\subsection{Approximation Based on the LSE Function}

While other approximations are possible, we will focus on the log-sum-exp (LSE) function as an approximate maximum. Namely, for any $t > 0$, define $g_t : \mathbb{R}^k \to \mathbb{R}$ via
\begin{equation}\label{eq:LSE}
g_t(\boldsymbol{v}) = \frac{1}{t}\log \left(\sum_{j=1}^k e^{t v_j} \right), \text{ for } \boldsymbol{v} \in \mathbb{R}^k.
\end{equation}
 The LSE is a smooth convex function \citep{boyd2004convex}, with gradient and Hessian given by
\[
\nabla g_t(\boldsymbol{v}) = \frac{\boldsymbol{z}}{\mathbf{1}^T \boldsymbol{z}}, \quad \nabla^2 g_t(\boldsymbol{v}) = \frac{t}{\left(\mathbf{1}^T \boldsymbol{z}\right)^2}\left[\left(\mathbf{1}^T \boldsymbol{z}\right) \operatorname{diag}(\boldsymbol{z})-\boldsymbol{z} \boldsymbol{z}^T\right],
\]
where $\boldsymbol{z} = (e^{t v_1},\dots,e^{t v_k})$. 
Importantly, for our purposes, the following inequality shows that LSE approximates the pointwise maximum function:
\[
\max \left\{v_1, \ldots, v_k\right\}<g_t(\boldsymbol{v}) \leq \max \left\{v_1, \ldots, v_k\right\}+\frac{\log k}{t}.
\]
Increasing the tuning parameter $t$ yields smaller approximation error. On the other hand, the Hessian matrix of $g_t$ also depends critically on $t$; as we will see in the following discussions, as $t$ increases, the operator norm $\left\lVert\nabla^2 g_t(\boldsymbol{v})\right\rVert_{\mathrm{op}}$ increases, which may induce larger estimation error. 

In the context of our problem, we replace the maximum function in $\E_{\mathbb{P}}\left(\max_{1\leq j \leq 8} \theta_{\ell,j}(\boldsymbol{X})\right)$ with the LSE function and focus on estimating the smooth functional $\E_{\mathbb{P}}(g_t( \boldsymbol{\theta}_{\ell}(\boldsymbol{X})))$.
By the approximation property, $\E_{\mathbb{P}}(g_t( \boldsymbol{\theta}_{\ell}(\boldsymbol{X})))$ satisfies
\[
\E\left(\max_{1\leq j \leq 8} \theta_{\ell,j}(\boldsymbol{X})\right) \leq \E(g_t( \boldsymbol{\theta}_{\ell}(\boldsymbol{X}))) \leq \E\left(\max_{1\leq j \leq 8} \theta_{\ell,j}(\boldsymbol{X})\right) + \frac{\log 8}{t}.
\]
We can similarly define a smooth approximation for pointwise minimum function as $h_t = g_{-t}$ and estimate the smooth functional $\E_{\mathbb{P}}(h_t( \boldsymbol{\theta}_{u}(\boldsymbol{X})))$ for the upper bound $\E_{\mathbb{P}}\left(\min_{1\leq j \leq 8} \theta_{u,j}(\boldsymbol{X})\right)$.
The smooth approximation for the minimum function satisfies
\[
\E\left(\min_{1\leq j \leq 8} \theta_{u,j}(\boldsymbol{X})\right) -\frac{\log 8}{t} \leq \E(h_t( \boldsymbol{\theta}_{u}(\boldsymbol{X}))) \leq \E\left(\min_{1\leq j \leq 8} \theta_{u,j}(\boldsymbol{X})\right).
\]
In the remaining part of this section, we develop efficiency theory for these smooth functional approximations of the Balke-Pearl bounds, and propose robust and efficient estimators.

\subsection{Efficiency Theory}

In the following theorem, we present the nonparametric efficient influence function for functionals of the form $\mathcal{L}_g(\Pb) = \mathbb{E}_{\mathbb{P}}(g(\boldsymbol{\theta}_{\ell}(\boldsymbol{X})))$ and $\mathcal{U}_h(\Pb) = \mathbb{E}_{\mathbb{P}}(h(\boldsymbol{\theta}_{u}(\boldsymbol{X})))$, where $g,h: [-1,1]^8 \mapsto \R$ are smooth functions with continuous first-order derivatives.


\begin{theorem}\label{thm:IF-smooth}
The nonparametric influence function of $\mathcal{L}_g(\Pb)$ is 
\[
\dot{\mathcal{L}}_g(O ; \Pb)=g\left(\boldsymbol{\theta}_{\ell}(\boldsymbol{X})\right)-\mathcal{L}_g(\Pb)+\sum_{j=1}^8 \frac{\partial g\left(\boldsymbol{\theta}_{\ell}(\boldsymbol{X})\right)}{\partial \theta_{\ell, j}(\boldsymbol{X})} L_j(O ; \Pb),
\]
where $L_j(O ; \Pb)$ is defined in Section~\ref{sec:infeasible}. Similarly, the nonparametric influence function of $\mathcal{U}_h(\Pb)$ is 
\[
\dot{\mathcal{U}}_h(O ; \Pb)=h\left(\boldsymbol{\theta}_{u}(\boldsymbol{X})\right)-\mathcal{U}_h(\Pb)+\sum_{j=1}^8 \frac{\partial h\left(\boldsymbol{\theta}_{u}(\boldsymbol{X})\right)}{\partial \theta_{u, j}(\boldsymbol{X})} U_j(O ; \Pb),
\]
again with $U_j(O ; \Pb)$ defined in Section~\ref{sec:infeasible}.
\end{theorem}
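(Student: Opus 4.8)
The plan is to derive the stated influence function by the standard route of pathwise differentiation along one-dimensional regular parametric submodels, followed by reading off the canonical gradient via its Riesz representation. Since the observed-data model is fully nonparametric, its tangent space is the entire space of mean-zero square-integrable functions, so any mean-zero gradient we exhibit is automatically the unique (hence efficient) influence function. It therefore suffices to produce a candidate $\dot{\mathcal{L}}_g$ that has mean zero and satisfies $\frac{d}{d\epsilon}\mathcal{L}_g(\Pb_\epsilon)\big|_{\epsilon=0} = \E[\dot{\mathcal{L}}_g(O;\Pb)\,s(O)]$ for every submodel score $s$, and then confirm it matches the claimed form.

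First I would fix a submodel $\{\Pb_\epsilon\}$ with score $s = s_X + s_{Z\mid X} + s_{AY\mid XZ}$, decomposed according to the factorization of $\Pb$ into the marginal law of $\boldsymbol{X}$, the conditional law of $Z$ given $\boldsymbol{X}$, and the conditional law of $(A,Y)$ given $(\boldsymbol{X},Z)$. Differentiating $\mathcal{L}_g(\Pb_\epsilon) = \int g(\boldsymbol{\theta}_\ell^{(\epsilon)}(\boldsymbol{x}))\,dP_\epsilon(\boldsymbol{x})$ at $\epsilon=0$ splits by the product rule into (I) the term from perturbing the marginal of $\boldsymbol{X}$, namely $\E[g(\boldsymbol{\theta}_\ell(\boldsymbol{X}))\,s_X(\boldsymbol{X})]$, and (II) the term from perturbing the conditional probabilities inside $\boldsymbol{\theta}_\ell$. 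Because $g(\boldsymbol{\theta}_\ell(\boldsymbol{X})) - \mathcal{L}_g(\Pb)$ is a mean-zero function of $\boldsymbol{X}$ alone, piece (I) equals $\E[\{g(\boldsymbol{\theta}_\ell(\boldsymbol{X})) - \mathcal{L}_g(\Pb)\}\,s(O)]$, which identifies the first two summands of the claimed influence function.

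For piece (II), the chain rule gives $\frac{\partial}{\partial \epsilon} g(\boldsymbol{\theta}_\ell^{(\epsilon)}(\boldsymbol{x}))\big|_0 = \sum_{j=1}^8 \frac{\partial g(\boldsymbol{\theta}_\ell(\boldsymbol{x}))}{\partial \theta_{\ell,j}}\,\dot\theta_{\ell,j}(\boldsymbol{x})$, where $\dot\theta_{\ell,j}$ denotes the pathwise derivative of $\theta_{\ell,j}$; since each $\theta_{\ell,j}$ in \eqref{eq:lower-bound} is an affine combination of the $\pi_{ya.z}$, the derivative $\dot\theta_{\ell,j}$ is the same linear combination of the $\dot\pi_{ya.z}$ (the additive constants drop out). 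The key lemma is that the IPW-type correction $\psi_{ya.z}(O;\Pb)$ is the influence function of the conditional-probability functional $\pi_{ya.z}$: a direct computation gives $\E[\psi_{ya.z}\mid \boldsymbol{X},Z]=0$, so $\psi_{ya.z}$ is orthogonal to both $s_X$ and $s_{Z\mid X}$ and loads only on $s_{AY\mid XZ}$, and conditioning on $(\boldsymbol{X},Z)$ then integrating against $\mathds{1}(Z=z)/\lambda_z(\boldsymbol{X})$ yields $\E[c(\boldsymbol{X})\,\psi_{ya.z}\,s] = \E[c(\boldsymbol{X})\,\dot\pi_{ya.z}(\boldsymbol{X})]$ for any bounded $c$, with the factor $\lambda_z(\boldsymbol{X})$ cancelling. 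By linearity this upgrades to the statement that $L_j$ is the influence function of $\theta_{\ell,j}$, and summing against $\partial g/\partial\theta_{\ell,j}$ reproduces piece (II) as $\E[\sum_{j=1}^8 \frac{\partial g}{\partial\theta_{\ell,j}} L_j \cdot s]$.

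Combining (I) and (II) yields exactly the stated $\dot{\mathcal{L}}_g$, and a final check that $\E[\dot{\mathcal{L}}_g]=0$ (immediate, since each $L_j$ and the centered term have mean zero) confirms it is the nonparametric influence function; the derivation for $\dot{\mathcal{U}}_h$ is verbatim with $h$, $\boldsymbol{\theta}_u$, and $U_j$ in place of $g$, $\boldsymbol{\theta}_\ell$, and $L_j$. I expect the main obstacle to be the bookkeeping in the key lemma: one must verify carefully that the inverse-propensity weighting in $\psi_{ya.z}$ exactly reproduces the pathwise derivative of $\pi_{ya.z}(\boldsymbol{X})$ with the $\lambda_z(\boldsymbol{X})$ factor cancelling, and then track the affine structure of \eqref{eq:lower-bound} through the chain rule so that the coefficients assemble into precisely $L_j$ rather than a rescaled variant.
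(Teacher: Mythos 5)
Your proposal is correct and follows essentially the same route as the paper's proof: pathwise differentiation along regular parametric submodels, splitting the derivative into the marginal-of-$\boldsymbol{X}$ contribution (centered using the mean-zero score) and the chain-rule contributions through $\boldsymbol{\theta}_\ell$, with the key step being the conditioning computation showing that $\psi_{ya.z}$ represents the pathwise derivative of $\pi_{ya.z}$ with the $\lambda_z(\boldsymbol{X})$ factor cancelling, and linearity of the $\theta_{\ell,j}$ in the $\pi_{ya.z}$ assembling the terms into $L_j$. Your explicit orthogonal score decomposition $s = s_X + s_{Z\mid \boldsymbol{X}} + s_{AY\mid \boldsymbol{X}Z}$ is merely a more systematic packaging of the paper's device of adding $u(Z,\boldsymbol{X})$ to the conditional score, justified in both cases by $\E[\psi_{ya.z}\mid \boldsymbol{X},Z]=0$.
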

\noindent Theorem \ref{thm:IF-smooth} implies that there are two terms contributing to the influence functions (and hence our proposed estimators to follow) of the smooth functionals $\mathcal{L}_g(\Pb)$ and $\mathcal{U}_h(\Pb)$. The first term for the lower bound, $g\left(\boldsymbol{\theta}_{\ell}(\boldsymbol{X})\right)-\mathcal{L}_g(\Pb)$, is augmented with the second term, $\sum_{j=1}^8 \frac{\partial g\left(\boldsymbol{\theta}_{\ell}(\boldsymbol{X})\right)}{\partial \theta_{\ell, j}(\boldsymbol{X})} L_j(O ; \Pb)$, which effectively reduces bias that results from estimating the unknown functions $\pi_{ya.z}$ and $\lambda_z$. 

After characterizing the influence functions for $\mathcal{L}_g(\Pb)$ and $\mathcal{U}_h(\Pb)$, we can use them to correct for the first-order bias in the von Mises expansion and arrive at robust estimators, which allows us to perform estimation and inference efficiently.

\subsection{Estimation \& Inference}
Next, we propose and analyze a robust estimator for the smooth lower bound functional, $\mathcal{L}_g(\Pb)$. Similar results hold for $\mathcal{U}_h(\Pb)$ using the same arguments. As in Section~\ref{sec:binary}, we assume that we train models for the nuisance functions $\pi_{y a . z}(\boldsymbol{X}), \lambda_z(\boldsymbol{X})$ based on a separate independent sample $D^n$. The robust estimator is defined as:
\[
\widehat{\mathcal{L}}_g=\mathcal{L}_g(\widehat{\Pb})+\mathbb{P}_n\left[\dot{\mathcal{L}}_g(O ; \widehat{\Pb})\right]=\mathbb{P}_n\left[g\left(\widehat{\boldsymbol{\theta}}_{\ell}(\boldsymbol{X})\right)+\sum_{j=1}^8 \frac{\partial g\left(\widehat{\boldsymbol{\theta}}_{\ell}(\boldsymbol{X})\right)}{\partial \widehat{\theta}_{\ell, j}(\boldsymbol{X})} L_j(O ; \widehat{\Pb})\right]
\]
The following theorem characterizes the conditional bias (given the training data) of the robust estimator $\widehat{\mathcal{L}}_g$ and establishes its asymptotic normality under additional conditions.
\begin{theorem}\label{thm:dr-smooth}
Suppose $g: \R^8 \mapsto \R $ is twice continuously differentiable, such that $\|\nabla g(\boldsymbol{\theta})\|_{\infty} \leq C_1$ and $\|\nabla^2 g(\boldsymbol{\theta})\|_{\mathrm{op}} \leq C_2$ uniformly over $\boldsymbol{\theta}$, and the nuisance functions $\widehat{\pi}_{ya.z}, \widehat{\lambda}_z$ are estimated from a separate independent sample, $D^n$. Moreover, suppose there exists positive constant $\epsilon$ such that 
\[
\Pb\left(\epsilon \leq \widehat{\lambda}_1(\boldsymbol{X}) \leq 1-\epsilon\right)=1.
\]
Then the conditional bias of the robust estimator $\widehat{\mathcal{L}}_g$ can be bounded as
\[
\begin{aligned}
    &\, \left|\E[\widehat{\mathcal{L}}_g \mid D^n] - \mathcal{L}_g(\Pb) \right| \\
    \lesssim &\, \left(\max _{y, a, z \in\{0,1\}}\left\|\widehat{\pi}_{y a . z}-\pi_{y a . z}\right\|\right)\left(C_1\left\|\widehat{\lambda}_1-\lambda_1\right\|+C_2 \max _{y, a, z \in\{0,1\}}\left\|\widehat{\pi}_{y a . z}-\pi_{y a . z}\right\|\right).
\end{aligned}
\]
Let $f(O)=\dot{\mathcal{L}}_g (O;\Pb) + \mathcal{L}_g(\Pb)$ be the non-centered influence function. If we further assume $\left\|\widehat{f}-f\right\|=o_\Pb(1)$ and the nuisance estimators satisfy
\[
\left\|\widehat{\lambda}_1-\lambda_1\right\|\left(\max _{y, a, z \in\{0,1\}}\left\|\widehat{\pi}_{y a . z}-\pi_{y a . z}\right\|\right) = o_\Pb(n^{-1/2}),
\]
\[
\max _{y, a, z \in\{0,1\}}\left\|\widehat{\pi}_{y a . z}-\pi_{y a . z}\right\|^2 = o_\Pb(n^{-1/2}),
\]
then we have
\[
\widehat{\mathcal{L}}_g - \mathcal{L}_g(\Pb) = \mathbb{P}_n\left[\dot{\mathcal{L}}_g(O ; \Pb)\right] + o_\Pb(n^{-1/2}),
\]
implying the robust estimator is $\sqrt{n}$-consistent and achieves the nonparametric efficiency bound. 
\end{theorem}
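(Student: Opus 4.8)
The plan is to analyze $\widehat{\mathcal{L}}_g$ through the standard von Mises / sample-splitting decomposition for influence-function-based estimators. Writing $f(O) = \dot{\mathcal{L}}_g(O;\Pb) + \mathcal{L}_g(\Pb)$ for the non-centered efficient influence function of Theorem~\ref{thm:IF-smooth} and $\widehat{f}$ for its plug-in version, I would first note (as in the displayed simplification of $\widehat{\mathcal{L}}_g$) that the constant $\mathcal{L}_g(\widehat{\Pb})$ cancels, so $\widehat{\mathcal{L}}_g = \Pn[\widehat{f}]$. Conditioning throughout on the independent training sample $D^n$ (so every hatted object is fixed), I would split
\[
\widehat{\mathcal{L}}_g - \mathcal{L}_g(\Pb) = \underbrace{(\Pn - \Pb)f}_{\text{(i)}} + \underbrace{(\Pn - \Pb)(\widehat{f} - f)}_{\text{(ii)}} + \underbrace{\Pb(\widehat{f} - f)}_{\text{(iii)}}.
\]
Here (i) $= \Pn[\dot{\mathcal{L}}_g(O;\Pb)]$ is the target asymptotically linear term; (iii) is exactly the conditional bias $\E[\widehat{\mathcal{L}}_g \mid D^n] - \mathcal{L}_g(\Pb)$; and (ii) is a centered empirical-process remainder.

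\textbf{Conditional bias (the main obstacle).} The step I expect to be most delicate is bounding (iii). I would first record that each $\theta_{\ell,j}$ is \emph{affine} in the probabilities $\pi_{ya.z}$ with fixed $\pm 1$ (integer) coefficients $c_{j,yaz}$, so that $L_j(O;\Pb)$ is the corresponding linear combination of the $\psi_{ya.z}(O;\Pb)$ and the plug-in error is $\theta_{\ell,j} - \widehat{\theta}_{\ell,j} = \sum_{y,a,z} c_{j,yaz}(\pi_{ya.z} - \widehat{\pi}_{ya.z})$. A direct conditional-expectation calculation gives $\Pb[\psi_{ya.z}(O;\Pb)\mid \boldsymbol{X}] = 0$ and, for the estimated version,
\[
\Pb\big[L_j(O;\widehat{\Pb}) \mid \boldsymbol{X}\big] = \sum_{y,a,z} c_{j,yaz}\,\frac{\lambda_z(\boldsymbol{X})}{\widehat{\lambda}_z(\boldsymbol{X})}\big(\pi_{ya.z}(\boldsymbol{X}) - \widehat{\pi}_{ya.z}(\boldsymbol{X})\big).
\]
Taking the conditional mean of $\widehat{f} - f$ and using $\mathcal{L}_g(\Pb) = \Pb[g(\boldsymbol{\theta}_{\ell})]$, term (iii) becomes $\Pb$ of $g(\widehat{\boldsymbol{\theta}}_{\ell}) - g(\boldsymbol{\theta}_{\ell})$ plus $\sum_{j} \frac{\partial g(\widehat{\boldsymbol{\theta}}_{\ell})}{\partial \widehat{\theta}_{\ell,j}} \sum_{y,a,z} c_{j,yaz}\frac{\lambda_z}{\widehat{\lambda}_z}(\pi_{ya.z} - \widehat{\pi}_{ya.z})$. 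I would then Taylor-expand $g(\widehat{\boldsymbol{\theta}}_{\ell}) - g(\boldsymbol{\theta}_{\ell})$ to second order: its first-order term is $-\sum_j \frac{\partial g(\widehat{\boldsymbol{\theta}}_{\ell})}{\partial \widehat{\theta}_{\ell,j}}\sum_{y,a,z} c_{j,yaz}(\pi_{ya.z} - \widehat{\pi}_{ya.z})$, which combines with the correction term to leave the characteristic product structure
\[
\sum_{j} \frac{\partial g(\widehat{\boldsymbol{\theta}}_{\ell})}{\partial \widehat{\theta}_{\ell,j}}\sum_{y,a,z} c_{j,yaz}\,\frac{\lambda_z - \widehat{\lambda}_z}{\widehat{\lambda}_z}\,(\pi_{ya.z} - \widehat{\pi}_{ya.z}).
\]
Bounding this by Cauchy--Schwarz, using $\|\nabla g\|_{\infty} \le C_1$ and the positivity bound $\widehat{\lambda}_1 \in [\epsilon, 1-\epsilon]$ (so $\widehat{\lambda}_0 = 1 - \widehat{\lambda}_1$ is also bounded away from $0$, and $\|\widehat{\lambda}_0 - \lambda_0\| = \|\widehat{\lambda}_1 - \lambda_1\|$), yields the $C_1\|\widehat{\lambda}_1 - \lambda_1\|\cdot \max_{yaz}\|\widehat{\pi}_{ya.z} - \pi_{ya.z}\|$ contribution, while the second-order Taylor remainder, controlled by $\|\nabla^2 g\|_{\mathrm{op}} \le C_2$ and the affine structure of $\boldsymbol{\theta}_{\ell}$, contributes $C_2\max_{yaz}\|\widehat{\pi}_{ya.z} - \pi_{ya.z}\|^2$. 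Together these give precisely the stated bound.

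\textbf{Asymptotic linearity and efficiency.} With the bias bound in hand, the two product/squared-rate conditions force term (iii) $= o_\Pb(n^{-1/2})$. For term (ii) I would invoke the standard sample-splitting argument: conditional on $D^n$, $\widehat{f} - f$ is fixed, so $(\Pn - \Pb)(\widehat{f} - f)$ has conditional mean zero and conditional variance at most $n^{-1}\|\widehat{f} - f\|^2$; since $\|\widehat{f} - f\| = o_\Pb(1)$, a conditional Chebyshev inequality followed by dominated convergence gives $(\Pn - \Pb)(\widehat{f} - f) = o_\Pb(n^{-1/2})$. Hence $\widehat{\mathcal{L}}_g - \mathcal{L}_g(\Pb) = \Pn[\dot{\mathcal{L}}_g(O;\Pb)] + o_\Pb(n^{-1/2})$, and the central limit theorem with Slutsky's lemma delivers $\sqrt{n}$-consistency and asymptotic normality with variance $\mathrm{Var}_\Pb(\dot{\mathcal{L}}_g(O;\Pb))$; because $\dot{\mathcal{L}}_g$ is the nonparametric efficient influence function of Theorem~\ref{thm:IF-smooth}, this variance is the efficiency bound. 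The analogous statements for $\widehat{\mathcal{U}}_h$ follow by the identical argument applied to $h$, $\boldsymbol{\theta}_u$, and $U_j$.
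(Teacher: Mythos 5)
Your proposal is correct and follows essentially the same route as the paper: the same three-term sample-splitting decomposition with $\widehat{\mathcal{L}}_g = \Pn[\widehat{f}]$, the same computation $\Pb[\psi_{ya.z}(O;\widehat{\Pb}) \mid \boldsymbol{X}] = \tfrac{\lambda_z}{\widehat{\lambda}_z}(\pi_{ya.z} - \widehat{\pi}_{ya.z})$ feeding a second-order Taylor expansion of $g$ (the paper isolates this as Proposition~\ref{prop:dr-smooth-bias}), the same H\"{o}lder/Cauchy--Schwarz bounds via $C_1$, $C_2$, and the positivity of $\widehat{\lambda}_z$, and the same treatment of the empirical-process term---your conditional Chebyshev argument is precisely the content of the cited Lemma 2 of \citet{kennedy2020b}, which the paper invokes rather than reproves.
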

 Similar to the analysis following Theorem~\ref{thm:convergence_margin}, Theorem \ref{thm:dr-smooth} implies that if the nuisance estimators satisfy $\left\|\widehat{\lambda}_1-\lambda_1\right\|=O_{\Pb}\left(n^{-1 / 4}\right)$ and $\max _{y, a, z \in\{0,1\}}\left\|\widehat{\pi}_{y a . z}-\pi_{y a . z}\right\|=o_{\Pb}\left(n^{-1 / 4}\right)$, the robust estimator will be $\sqrt{n}$-consistent and achieve the efficiency bound. Comparing the direct estimator from Section~\ref{sec:binary} with the approximation-based estimator here, when Assumption \ref{ass:margin} holds with $\alpha = 1$ then the requirements on the bias of Theorem \ref{thm:convergence_margin} are essentially the same as those of Theorem \ref{thm:dr-smooth}---this is because estimation in $L_\infty$ typically yields the same convergence rate as estimation in $L_2$ up to a $\log$ factor \citep{tsybakov2009}.

 \begin{remark}
 It is natural to consider, assuming that the margin condition described in Assumption~\ref{ass:margin} holds, (a) whether and how the behavior of the smooth approximation estimator $\widehat{\mathcal{L}}_g$ might improve, (b) whether and how to choose a sequence of tuning parameters $t_n$ in the LSE function (or some other smooth approximation to the pointwise maximum) to minimize the convergence rate and mean square error of $\widehat{\mathcal{L}}_{g_{t_n}}$ with respect to $\mathcal{L}(\mathbb{P})$, and (c) if we can describe minimax optimal estimators of $\mathcal{L}(\mathbb{P})$ under a large class of models (e.g., nuisance functions belonging to H{\"o}lder smooth classes). Careful analysis in a simplified setting (omitted here) reveals that while the approximation error $|\mathcal{L}_{g_{t_n}}(\mathbb{P}) - \mathcal{L}(\mathbb{P})|$ improves from $O(\frac{1}{t_n})$ to $O(\frac{1}{t_n^{1 + \alpha}})$ under a margin condition, the direct estimator dominates the smooth approximation even for optimally chosen data-adaptive tuning parameters $t_n$. Indeed, we conjecture that the direct estimator $\widehat{\mathcal{L}}$, modulo higher order influence function corrections \citep{robins2008higher, robins2009quadratic, robins2017higher}, is rate-optimal (up to log factors) when Assumption~\ref{ass:margin} holds. We leave precise minimax optimal rate characterization---for non-smooth functionals similar to $\mathcal{L}(\mathbb{P})$ and $\mathcal{U}(\mathbb{P})$ under a margin condition akin to Assumption~\ref{ass:margin}---to be investigated in future research. 
 \end{remark}


\subsection{Wald-type Confidence Interval for the ATE}\label{sec:approxinterval}
Combining Theorem \ref{thm:BP} with the smooth LSE-based approximation we have
\[
\mathcal{L}_{g_t}(\Pb) -\frac{\log 8}{t} \leq \mathbb{E}[Y(a = 1) - Y(a = 0)] \leq \mathcal{U}_{h_t}(\Pb) + \frac{\log 8}{t}.
\]
Thus the interval
\[
\left(\widehat{\mathcal{L}}_{g_t}-\frac{\log 8}{t}-z_{1-\delta / 2} \sqrt{\widehat{V}_t / n }, \ \widehat{\mathcal{U}}_{h_t}+\frac{\log 8}{t}+z_{1-\delta / 2} \sqrt{\widehat{W}_t / n }\right)
\]
is an asymptotically valid (though potentially conservative) $100(1-\delta)\%$ Wald-type confidence interval for the ATE, where $\widehat{V}_t, \widehat{W}_t$ are plug-in estimators of the nonparametric efficiency bounds $\operatorname{Var}_{\Pb}\left(\dot{\mathcal{L}}_{g_t}(O ; \Pb)\right)$ and $\operatorname{Var}_{\Pb}\left(\dot{\mathcal{U}}_{h_t}(O ; \Pb)\right)$, respectively. The choice of the tuning parameter $t$ requires balancing the smooth approximation error $\log 8/t$ and the conditional bias term $\E[\widehat{\mathcal{L}}_g \mid D^n] - \mathcal{L}_g(\Pb)$. Note that $\|\nabla^2 g_t(\boldsymbol{\theta})\|_{\mathrm{op}} \leq t$ and Theorem \ref{thm:dr-smooth} implies the conditional bias increases as we select a larger $t$. Hence a smaller $t$ is preferred to reduce the conditional bias. Yet reducing the approximation error $\log 8/t$ requires a larger $t$. How to choose $t$ in a data-driven fashion to arrive at the shortest confidence interval remains an open problem left for future investigation.

\section{Direct Estimation with Continuous Outcomes} 
\label{sec:continuous}



Thus far, we have only focused on a binary outcome $Y \in \{0,1\}$. In this section, we will extend our approaches from the binary outcome case to construct valid bounds on the ATE for continuous outcomes. In fact, we will assume only that the outcome $Y$ is bounded; without loss of generality, we may assume $Y \in [0,1]$ (otherwise one can always rescale the outcome). As observed but not studied in \cite{balke1997bounds}, the idea is to replace the binary outcome in previous analyses with the indicator $\mathds{1}(Y\leq t)$. Specifically, for each $t \in [0,1]$, $\mathds{1}(Y\leq t)$ is a binary outcome for which we can apply the bounds proposed in Section \ref{sec:BP-bounds} to obtain pointwise bounds on the difference in the distribution functions of potential outcomes $\Pb(Y(a) \leq t)$. We proceed by integrating the tail probabilities $\Pb(Y(a)>t)$ to bound the mean of potential outcomes $\E(Y(a))$, and finally arrive at bounds on the ATE, $\E(Y(a = 1) - Y(a = 0))$. 

For any $a,z \in \{0,1\}$ and $t \in \R$, define the probabilities $\pi_{1 a . z}(t, \boldsymbol{X})=\Pb(Y \leq t, A=a \mid \boldsymbol{X}, Z=z)$ and $\pi_{0 a . z}(t, \boldsymbol{X})=\Pb(Y > t, A=a \mid \boldsymbol{X}, Z=z)$. 
Next, we define functions exactly as in \eqref{eq:lower-bound} and \eqref{eq:upper-bound}, replacing with $\pi_{ya.z}(\boldsymbol{X})$ with $\pi_{ya.z}(t,\boldsymbol{X})$ (note that $\theta_{u,j}$, $\theta_{\ell,j}$, $\gamma_{\ell}$, and $\gamma_{u}$ are then also functions of $t$ and $\boldsymbol{X}$). Assuming $Z$ is a valid instrument given covariates $\boldsymbol{X}$, for each $t \in \R$ we view $\mathds{1}(Y \leq t)$ as the binary outcome and apply the ``Balke-Pearl'' bounds to this new outcome. By Theorem \ref{thm:BP}, we obtain the following
bounds on the difference in distribution functions of the individual potential outcomes:
\[
\gamma_{\ell}(t,\boldsymbol{X}) \leq \mathbb{P}[Y(a = 1) \leq t \mid \boldsymbol{X}] - \mathbb{P}[Y(a = 0) \leq t \mid \boldsymbol{X}] \leq \gamma_u(t, \boldsymbol{X}).
\]
Integrating with respect to $\boldsymbol{X}$ and noting that $\mathbb{E}(Y(a))=\int_0^1 \Pb[Y(a)>t] dt$, we arrive at the following valid (though not necessarily tight) bounds on the ATE:
\begin{equation}\label{eq:integrated-bounds}
- \int_0^1\mathbb{E} \left(\gamma_u(t, \boldsymbol{X})\right) d t \leq \mathbb{E}(Y(a = 1)-Y(a =0)) \leq - \int_0^1\mathbb{E} \left(\gamma_\ell(t, \boldsymbol{X})\right) d t.
\end{equation}
One natural approach would be to estimate the functions $t \mapsto \mathbb{E} \left(\gamma_u(t, \boldsymbol{X})\right)$ and $t \mapsto \mathbb{E} \left(\gamma_\ell(t, \boldsymbol{X})\right)$ by the proposals of Section \ref{sec:binary} or Section \ref{sec:smooth}, over a grid of values $t \in [0,1]$, and use these estimates to approximate the integrated bounds given in \eqref{eq:integrated-bounds}. Alternatively, one could use Monte Carlo methods to approximate the integral. Either approach would require computing a binary-outcome estimator at many different inputs $t$, which could be quite computationally intensive. The following theorem characterizes a looser bound that can be computed more efficiently.

\begin{theorem}\label{thm:continuous-cdf}
Let $W \sim \mathrm{Uniform}(0,1)$ be independent of $O = (\boldsymbol{X}, Z, A, Y)$. Then
\[
\int_0^1\mathbb{E} \left(\gamma_u(t, \boldsymbol{X})\right) d t \leq \mathbb{E}\left[\min _{1 \leq j \leq 8} \widetilde{\theta}_{u, j}(\boldsymbol{X})\right],
\]
\[
\int_0^1\mathbb{E} \left(\gamma_\ell(t, \boldsymbol{X})\right) d t \geq \mathbb{E}\left[\max _{1 \leq j \leq 8} \widetilde{\theta}_{\ell, j}(\boldsymbol{X})\right],
\]
where we define $\widetilde{\pi}_{1 a . z}(\boldsymbol{X})=\Pb(Y \leq W, A=a \mid \boldsymbol{X}, Z=z), \widetilde{\pi}_{0 a . z}(\boldsymbol{X})=\Pb(Y > W, A=a \mid \boldsymbol{X}, Z=z)$, and where $\widetilde{\theta}_{\ell,j}(\boldsymbol{X})$ and $\widetilde{\theta}_{u,j}(\boldsymbol{X})$ are defined exactly as in equations \eqref{eq:lower-bound} and \eqref{eq:upper-bound}, replacing $\pi$ with $\widetilde{\pi}$.
\end{theorem}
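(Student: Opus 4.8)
The plan is to prove both inequalities pointwise in $\boldsymbol{X}$ and then integrate, the key realization being that randomizing the threshold with an independent uniform $W$ is exactly the same operation as integrating over $t \in [0,1]$.

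First I would rewrite the tilde nuisances as threshold integrals. Since $W \sim \mathrm{Uniform}(0,1)$ is independent of $O = (\boldsymbol{X}, Z, A, Y)$, conditioning on $(\boldsymbol{X}, Z)$ leaves $W$ uniform and independent of $(A, Y)$; hence by Fubini/Tonelli,
\[
\widetilde{\pi}_{1a.z}(\boldsymbol{X}) = \Pb(Y \leq W, A = a \mid \boldsymbol{X}, Z = z) = \int_0^1 \pi_{1a.z}(t, \boldsymbol{X})\,dt,
\]
and likewise $\widetilde{\pi}_{0a.z}(\boldsymbol{X}) = \int_0^1 \pi_{0a.z}(t, \boldsymbol{X})\,dt$. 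Next I would exploit linearity: each $\theta_{u,j}(t,\boldsymbol{X})$ in \eqref{eq:upper-bound} is an affine function of the $\pi_{ya.z}(t,\boldsymbol{X})$ with constant term either $0$ or $1$, and since $\int_0^1 1\,dt = 1$ that constant is preserved under integration. Therefore $\widetilde{\theta}_{u,j}(\boldsymbol{X}) = \int_0^1 \theta_{u,j}(t,\boldsymbol{X})\,dt$ for every $j$, and by the same reasoning applied to \eqref{eq:lower-bound}, $\widetilde{\theta}_{\ell,j}(\boldsymbol{X}) = \int_0^1 \theta_{\ell,j}(t,\boldsymbol{X})\,dt$.

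The crux is then the interchange of integration with the pointwise minimum and maximum. For any fixed index $k$, $\gamma_u(t,\boldsymbol{X}) = \min_{1 \leq j \leq 8} \theta_{u,j}(t,\boldsymbol{X}) \leq \theta_{u,k}(t,\boldsymbol{X})$ for all $t$; integrating over $t$ and then minimizing over $k$ yields
\[
\int_0^1 \gamma_u(t,\boldsymbol{X})\,dt \leq \min_{1 \leq k \leq 8} \int_0^1 \theta_{u,k}(t,\boldsymbol{X})\,dt = \min_{1 \leq j \leq 8} \widetilde{\theta}_{u,j}(\boldsymbol{X}).
\]
The symmetric argument using $\gamma_\ell = \max_{1 \leq j \leq 8} \theta_{\ell,j}$ gives $\int_0^1 \gamma_\ell(t,\boldsymbol{X})\,dt \geq \max_{1 \leq j \leq 8} \widetilde{\theta}_{\ell,j}(\boldsymbol{X})$. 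Finally, I would take $\E$ over $\boldsymbol{X}$ and swap $\E_{\boldsymbol{X}}$ with $\int_0^1$ (Fubini, justified since every integrand is bounded in $[-1,1]$), which delivers both displayed inequalities of the theorem.

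I do not expect a serious obstacle: the essential content is just the Jensen-type inequality $\int \min \leq \min \int$ (respectively $\int \max \geq \max \int$), which is also precisely what makes the resulting bound valid but generally not tight. The only points requiring care are the bookkeeping of the constant term in those $\theta_{u,j}$ and $\theta_{\ell,j}$ that contain a $1$, and a clean statement of the independence/Fubini step converting $\Pb(Y \leq W, \cdot)$ into $\int_0^1 \pi(t,\cdot)\,dt$.
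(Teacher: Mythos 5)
Your proposal is correct and follows essentially the same route as the paper's proof: you establish $\widetilde{\pi}_{ya.z}(\boldsymbol{X}) = \int_0^1 \pi_{ya.z}(t,\boldsymbol{X})\,dt$ via the independence of $W$ (the paper does this by conditioning on $W$), pass the identity through the affine maps defining $\theta_{u,j}$ and $\theta_{\ell,j}$, and then interchange the $t$-integral with the pointwise $\min$/$\max$ --- your elementary fixed-index argument is exactly the Jensen-type inequality the paper invokes. The only cosmetic differences are the order of the steps and that you spell out both inequalities where the paper treats the lower-bound case as analogous.
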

Theorem \ref{thm:continuous-cdf} together with \eqref{eq:integrated-bounds} implies the following valid bounds on the ATE:
\begin{equation}\label{eq:continuous-bounds}
-\mathbb{E}\left[\min _{1 \leq j \leq 8} \widetilde{\theta}_{u, j}(\boldsymbol{X})\right] \leq \E[Y(1)-Y(0)] \leq -\mathbb{E}\left[\max _{1 \leq j \leq 8} \widetilde{\theta}_{\ell, j}(\boldsymbol{X})\right].
\end{equation}
Note that the bounds in \eqref{eq:continuous-bounds} are of the same form as those in Theorem \ref{thm:BP}, but with a new binary outcome $\mathds{1}(Y \leq W)$. By introducing an independent uniform random variable, we obtain a looser bound, but one that lends itself to more computationally feasible estimation. To operationalize these bounds, we can simulate $n$ i.i.d. $\mathrm{Uniform}(0,1)$ random variables $W_1 ,\dots, W_n$, independent of the data, and construct the augmented observation unit $\widetilde{O} = (\boldsymbol{X},Z,A,Y,W)$. We then estimate the quantities in the bounds exactly as in the binary case using the methods proposed in Section \ref{sec:binary} or Section \ref{sec:smooth}, with the new binary outcome being $\mathds{1}(Y \leq W)$. Note that the exact value of the final bounds will depend on the realization of $W_1 ,\dots, W_n$. To reduce this variability and regain some efficiency, we may repeat this process $m$ times and average the resulting estimates of the bounds---see Proposition \ref{prop:simp} in Appendix \ref{sec:appendix-continuous} for an illustration of this behavior in a simplified setting.






\section{Simulation Study}
\label{sec:simulation}

In order to assess the performance of the proposed estimators, we conducted a small simulation study. In particular, we consider a simplified scenario where Assumption~\ref{ass:margin} holds with margin parameter $\alpha = 1$, and demonstrate for different sample sizes $n$ how mean squared errors of the proposed lower bound estimators $\widehat{\mathcal{L}}$ and $\widehat{\mathcal{L}}_{g_t}$ compare to that of a plug-in estimator. We find that, as anticipated by our theoretical results, our proposed estimators generally have smaller error, and achieve parametric levels of error when the root-mean-square error of nuisance estimates is less than $O(n^{-1/4})$. Details and full results are included in Appendix~\ref{sec:appendix-simulation}.

\section{Data Analysis}
\label{sec:data}

To illustrate the proposed methods on real data, we aimed to estimate the effect of higher education on wages later in life using a subset of data from the National Longitudinal Survey of Young Men \citep{card1995}, as previously analyzed, for instance, in \citet{tan2006, wang2018}. For ease of comparison, we replicate the setup of \citet{wang2018}, and use the $n = 3010$ participants (males between 14 and 24 years in 1966) for whom survey responses for education and wage in 1976 were available. In particular, the instrument $Z$ is taken to be an indicator of proximity to a 4-year college, the exposure $A$ an indicator of education beyond high school, and the outcome $Y$ an indicator of wages in 1976 exceeding the median value. In addition to $(Z, A, Y)$, we adjusted for a collection of baseline covariates $\boldsymbol{X}$: age, parental educational attainment, indicators for living in the south and a metropolitan area, race, intelligence quotient scores, as well as variable-specific missingness indicators. As in \citet{card1995} and \citet{wang2018}, mean imputation was used for missing covariates, and, due to non-representativeness of the sample, the observations were reweighted using sampling weights (results from unweighted analyses were qualitatively similar).

\citet{wang2018} consider the binary instrumental variable problem (i.e., $Z, A, Y \in \{0,1\}$) as we do, and propose several estimators that respect that the CATE and ATE must be bounded between $-1$ and 1. Note that, in addition to the assumptions adopted here which underpin the Balke-Pearl bounds, these authors introduced a structural assumption---asserting no additive interaction between unmeasured confounders and either (i) the instrument, on the treatment, or (ii) the treatment, on the outcome---which results in point identification of the CATE and thus the population ATE. That is, their methods rely on strictly stronger assumptions than ours. 

We computed estimates of the covariate-adjusted bounds $\mathcal{L}(\Pb)$ and $\mathcal{U}(\Pb)$ using the direct estimators $\widehat{\mathcal{L}}$ and $\widehat{\mathcal{U}}$ developed in Section~\ref{sec:binary}, assuming the margin condition (i.e., Assumption~\ref{ass:margin}). Specifically, we used separate ensembles based on fits from \texttt{glm}, \texttt{rpart}, \texttt{ranger}, and \texttt{polymars} using the \texttt{SuperLearner} package in \texttt{R} \citep{polley2019} to obtain $\widehat{\lambda}_1(\boldsymbol{X})$ and $\{\widehat{\pi}_{ya.z}(\boldsymbol{X}): y, a, z \in \{0,1\}\}$. We also computed LSE-based estimators $\widehat{\mathcal{L}}_{g_t}$ and $\widehat{\mathcal{U}}_{h_t}$ based on the same nuisance function estimators, as described in Section~\ref{sec:smooth}, with the \textit{ad hoc} tuning parameter choice $t = 100 \, n^{1/4}$. For all estimators, we used cross-fitting with 5 folds.

Summarizing the results, the point estimates for the covariate-adjusted Balke-Pearl bound estimators $\widehat{\mathcal{L}}$ and $\widehat{\mathcal{U}}$ were $(-0.405, 0.546)$, and the resulting Wald-based 95\% confidence interval for the ATE was $(-0.454, 0.597)$. These results indicate a wide range of possible effects compatible with the observed data, with the width perhaps demonstrating the potentially strong effects of unmeasured confounders. Nevertheless, quite surprisingly, compared to the bounds reported in \citet{wang2018}, the confidence interval width is narrower and the estimated upper bound is more informative. For instance, the point estimates and 95\% confidence intervals for their proposed ``$g$-estimator'' and ``bounded multiply robust estimator'' were 0.079 $(-0.355, 1.000)$ and 0.344 $(-0.373, 0.938)$, respectively. This discrepancy may arise as their estimators rely on unstable weights which do not feed into our approach (i.e., the nonparametric efficiency bounds for the Balke-Pearl bound functionals are much less than that for the functional targetted by these authors), or because the constraints implied by the binary instrumental variable model are not explicitly incorporated in their approaches (cf. Proposition 1 in \citet{wang2018}).

The approximation-based estimators $\widehat{\mathcal{L}}_{g_t}$ and $\widehat{\mathcal{U}}_{h_t}$ were nearly identical, $(-0.406, 0.544)$, and the associated 95\% confidence interval constructed as in Section~\ref{sec:approxinterval} was $(-0.506, 0.646)$: the extra width is due to the inclusion of the (potentially conservative) error bound $\log{(8)} / t$.

\section{Discussion}\label{sec:discussion}
In this paper, we proposed estimators of nonparametric bounds on the average treatment effect using an instrumental variable, avoiding strong structural or parametric assumptions typically used for point identification. We extended the classic approach of~\citet{balke1997bounds} by incorporating baseline covariate information to (i) control for measured confounders to render the instrument valid, and/or (ii) construct narrower and hence more informative bounds. We also proposed a concrete extension to bound the ATE for a general bounded outcome. Our estimators are based on influence functions, and as a result are robust and can attain $\sqrt{n}$-consistency, asymptotic normality, and nonparametric efficiency, under nonparametric convergence rates of the component nuisance functions estimation---these rates are attainable under sparsity, smoothness or other structural conditions.

The key difficulty in estimating the covariate-adjusted bound functionals $\mathcal{L}(\Pb)$ and $\mathcal{U}(\Pb)$ is that, as means of non-differentiable functions, they are not pathwise differentiable \citep{bickel1993efficient}. As a result, these parameters do not---at least without further assumptions---have an influence function to facilitate flexible and efficient estimation (e.g., see Section 5.3 of \citet{kennedy2022semiparametric}). To make progress, in Section~\ref{sec:binary}, we first proposed and invoked a margin condition to render the exact bound functionals pathwise differentiable, and proposed influence function-based estimators under this condition. Second, in Section~\ref{sec:smooth}, we presented general valid bounds on the ATE via estimation of pathwise differentiable approximations to $\mathcal{L}(\Pb)$ and $\mathcal{U}(\Pb)$. The second approach has the advantage of not requiring the margin condition, but the resulting bounds suffer from being slightly conservative, and not converging to the true exact bounds at fast rates. The first approach, on the other hand, can achieve parametric rates when the margin condition holds. As a general guideline, we recommend that practitioners use the direct bound estimators proposed in Section~\ref{sec:binary}, deferring to the approximate bounds only when there is serious doubt on Assumption~\ref{ass:margin} arising due to subject matter knowledge.

In our view, it is difficult to imagine concrete scenarios in which Assumption~\ref{ass:margin} would be violated. A similar margin condition is invoked in dynamic treatment regime problems, for example when estimating the mean outcome under the optimal treatment policy in a point treatment setting: $\mathbb{E}[\max\{\E(Y(a = 1) \mid \bX), \E(Y(a = 0) \mid \bX) \} ]$ \citep{luedtke2016}. In that context, one must argue that the CATE is well separated from zero, the value representing no treatment effect. It is plausible that there is a subgroup for which the CATE is exactly zero, which would violate the margin condition for that problem. In our setting, Assumption~\ref{ass:margin} requires separation of the pointwise maximum (minimum) of the Balke-Pearl lower (upper) bound functions $\boldsymbol{\theta}_{\ell}(\boldsymbol{X})$ ($\boldsymbol{\theta}_{u}(\boldsymbol{X})$) from the second largest (smallest) value. These lower and upper bound functions are not readily interpretable like the CATE, and it would seem to necessitate an unlikely confluence of factors to violate the required separation.

The approaches developed in this paper are not inherently tied to the specific bounds we studied: the same ideas extend to many functionals defined as expectations of non-smooth functions, which arise often in problems targetting bounds on causal effects. Indeed, one month after our work was posted on arXiv, \cite{semenova2023adaptive} independently considered estimation of bounds, and functionals theoreof, that can be expressed as the minimum of several nuisance functions. They studied a wide range of bounds obtained from certain optimization problems and proposed estimators under a margin condition similar to Assumption \ref{ass:margin}. The Balke-Pearl bounds arise from a linear programming specification, which is a special case of bounds based on general optimization problems in \cite{semenova2023adaptive}. However, our work specifically characterizes the nonparametric efficient estimator of the Balke-Pearl bounds, as well as its asymptotic bias, in Theorem \ref{thm:convergence_margin}. Further, a direct application of the approach of \citet{semenova2023adaptive} to our setting would require that instrument probabilities $\lambda_1(\boldsymbol{X})$ are known. We also propose estimators based on smooth approximation---not considered in \citet{semenova2023adaptive}---and establish corresponding efficiency theory. We expect that the methods discussed in this paper can be combined with the insights from \citet{semenova2023adaptive} to yield efficient, debiased estimators in more general settings as considered in their paper. 

It is important to mention some limitations of the proposed methods, as well as possible extensions and open problems. First, we have throughout assumed a fixed collection of measured covariates $\boldsymbol{X}$. In Proposition~\ref{prop:width}, we provide a general criterion to justify adjusting for certain covariates, however, it remains to characterize (i) the actual difference in the length of the bounds based on two valid adjustment sets, and (ii) the effect of the adjustment set on the variance of the proposed estimators. Second, we have focused primarily on the setting with instrument, exposure, and outcome all binary variables, since this is the simplest setting for partial identification with closed form bounds given by \citet{balke1997bounds}. In Section~\ref{sec:continuous}, we provided an extension to construct valid ATE bounds for continuous outcomes, though an important open problem is determining the sharpest possible bounds for such general outcomes, beyond the binary case. Along the same lines, it of course will also be of interest to consider multi-valued or continuous instruments and exposures. In these cases, there is generally no longer the same closed form solution for the theoretical bounds, and alternative optimization specifications and/or symbolic bounds might be incorporated \citep{sachs2022general, zhang2021bounding, duarte2023automated}. When closed-form solutions are available (see examples in \cite{sachs2022general}), similar direct estimation and targeting smooth approximation strategies can be applied to estimate the bounds efficiently. \cite{duarte2023automated} recently proposed a general framework to automatically compute sharp bounds on causal effects under user-specified causal assumptions. Their idea is to formalize the assumptions and observed evidence as polynomial constraints, and estimands of interest as objective functions, following which the bounds are obtained by optimizing the objective. In this regard the Balke-Pearl bounds can be viewed as a special case where the causal assumptions are specified by an IV model. However, the Balke-Pearl bounds admit closed-form solutions, based on which efficiency estimation theory can be derived, while there is no closed-form solutions to the general optimization problem in \cite{duarte2023automated}. Moreover, they only focus on the fully discrete setting since the optimization problem is intractable in greater generality, while the methodology in our work naturally handles continuous covariates (and outcomes using the extension in Section \ref{sec:continuous}). Third, an interesting problem we will explore in future research is to directly and efficiently estimate the conditional bounds $\gamma_{\ell}(\boldsymbol{X})$, $\gamma_u(\boldsymbol{X})$ for the CATE given in Theorem~\ref{thm:BP}, and identify strata for whom we are confident that the CATE is positive or negative. In doing so, one may extend the ideas in \citet{kennedy2020b} to define ``sharpness'' of an instrument---the ability to tightly bound causal effects in certain subgroups---without relying on the typical monotonicity assumption.


\section*{Acknowledgements}
This work was supported by the Patient-Centered Outcomes Research Institute (PCORI Awards ME-2021C1-22355) and the National Library of Medicine, \#1R01LM013361-01A1. All statements in this report, including its findings and conclusions, are solely those of the authors and do not necessarily represent the views of PCORI or its Methodology Committee. 

\clearpage

\section*{References}
\vspace{-1cm}
\bibliographystyle{asa}
\bibliography{bibliography.bib}

\pagebreak

\begin{appendices}

\section{Proofs of Results in Section~\ref{sec:background}}\label{sec:appendix-background}
\subsection{Proof of Theorem~\ref{thm:BP}}
The argument of \citet{balke1997bounds} is valid within levels of $\boldsymbol{X}$, given Assumptions~\ref{ass:consistency}--\ref{ass:ER}. This implies that for each $\boldsymbol{x} \in \mathcal{X}$,
\begin{equation}\label{eq:cond_bound_valid}
    \gamma_{\ell}(\boldsymbol{x}) \leq \mathbb{E}(Y(a = 1) - Y(a = 0) \mid \boldsymbol{X} = \boldsymbol{x}) \leq \gamma_{u}(\boldsymbol{x}),
\end{equation}
and that these bounds are tight. It follows that for any $\epsilon > 0$, there exist joint distributions $\mathbb{P}_{1, \epsilon}^*, \mathbb{P}_{2, \epsilon}^*$ on the full data $(\boldsymbol{X}, Z, A, Y(a = 0), Y(a = 1))$, compatible with the observed data law $\mathbb{P}$ such that 
\begin{equation}\label{eq:cond_bound_tight}
    \mathbb{P}\left(\mathbb{E}_{\mathbb{P}_{1, \epsilon}^*}(Y(a = 1) - Y(a = 0) \mid \boldsymbol{X}) < \gamma_{\ell}(\boldsymbol{X}) + \epsilon\right) = 1,
\end{equation} and
\[\mathbb{P}\left(\mathbb{E}_{\mathbb{P}_{2, \epsilon}^*}(Y(a = 1) - Y(a = 0) \mid \boldsymbol{X}) > \gamma_{u}(\boldsymbol{X}) - \epsilon\right) = 1.\]
By iterated expectations, $\mathbb{E}_{\mathbb{P}}(\gamma_{\ell}(\boldsymbol{X}))$ and $\mathbb{E}_{\mathbb{P}}(\gamma_{u}(\boldsymbol{X}))$ represent valid lower and upper bounds, respectively, on the ATE. To prove that these are also tight, we proceed by contradiction: we focus on the lower bound without loss of generality, and assume there exists $\epsilon > 0$ such that for all full data laws $\mathbb{P}^*$ compatible with $\mathbb{P}$, $\mathbb{E}_{\mathbb{P}^*}(Y(a = 1) - Y(a = 0)) \geq \mathbb{E}_{\mathbb{P}}(\gamma_{\ell}(\boldsymbol{X})) + \epsilon$. But~\eqref{eq:cond_bound_tight} implies $\mathbb{E}_{\mathbb{P}_{1, \epsilon}^*}(Y(a = 1) - Y(a = 0)) < \mathbb{E}_{\mathbb{P}}(\gamma_{\ell}(\boldsymbol{X})) + \epsilon$, which yields a contradiction.

\subsection{Proof of Proposition~\ref{prop:width}}
  The bounds based on $(\boldsymbol{X}, \boldsymbol{G})$ are of the
  same form as those for $\boldsymbol{X}$ alone, but with underlying
  probabilities
  $\pi_{ya.z}^\dagger(\boldsymbol{X}, \boldsymbol{G}) = \mathbb{P}[Y=y, A = a
  \mid \boldsymbol{X}, \boldsymbol{G}, Z = z]$ instead of
  $\pi_{ya.z}(\boldsymbol{X})$. The key observation is that for any
  $y,a,z \in \{0,1\}$,
  \[\pi_{ya.z}(\boldsymbol{X}) =
    \mathbb{E}(\pi_{ya.z}^{\dagger}(\boldsymbol{X}, \boldsymbol{G})
    \mid \boldsymbol{X}, Z = z) =
    \mathbb{E}(\pi_{ya.z}^{\dagger}(\boldsymbol{X}, \boldsymbol{G})
    \mid \boldsymbol{X}),\] where the first equality holds by the
  tower law, and the second holds by the assumption
  $Z \ind \boldsymbol{G} \mid \boldsymbol{X}$. Therefore,
  letting $\theta_{\ell, j}^{\dagger}(\boldsymbol{X}, \boldsymbol{G})$,
  $\theta_{u, j}^{\dagger}(\boldsymbol{X}, \boldsymbol{G})$ be of the
  same for as $\theta_{\ell, j}(\boldsymbol{X})$,
  $\theta_{u, j}(\boldsymbol{X})$, respectively, for $j = 1, \ldots, 8$, but with
  $\pi_{ya.z}^\dagger$ replacing $\pi_{ya.z}$ for all
  $y, a, z \in \{0,1\}$, we have
  \[\mathbb{E}_{\Pb}\left(\max_{1 \leq j \leq 8} \theta_{\ell,
        j}(\boldsymbol{X})\right) = \mathbb{E}_{\Pb}\left(\max_{1 \leq j
        \leq 8} \mathbb{E}_\mathbb{P}(\theta_{\ell, j}^{\dagger}(\boldsymbol{X},
      \boldsymbol{G}) \mid \boldsymbol{X})\right) \leq
    \mathbb{E}_{\Pb}\left(\max_{1 \leq j \leq 8} \theta_{\ell,
        j}^{\dagger}(\boldsymbol{X}, \boldsymbol{G})\right),\] where
  we used a conditional version of Jensen's inequality for the
  pointwise maximum, and iterated expectations. The proof that the
  $(\boldsymbol{X}, \boldsymbol{G})$-upper bound is lower follows by
  the same logic, using concavity of the pointwise minimum. The example in Section~\ref{sec:illustration} shows that the lower (upper) bound based on $(\boldsymbol{X}, \boldsymbol{G})$ may be strictly greater (smaller).

\subsection{Proof of Corollary~\ref{cor:width}}

This follows immediately by Proposition~\ref{prop:width}, replacing $\boldsymbol{X}$ with $\emptyset$, and $\boldsymbol{G}$ with $\boldsymbol{X}$.

\subsection{Simplification under Monotonicity}

The bounds
$\gamma_{\ell}(\boldsymbol{X}) = \max_{1 \leq j \leq 8}\theta_{\ell,
  j}(\boldsymbol{X})$ and
$\gamma_{u}(\boldsymbol{X}) = \min_{1 \leq j \leq 8}\theta_{u,
  j}(\boldsymbol{X})$ on the CATE due to
\citet{balke1997bounds} are often compared to simpler valid bounds derived earlier in
\citet{robins1989, manski1990}: in our setting with baseline
confounders,
$\beta_{\ell}(\boldsymbol{X}) \leq \mathbb{E}(Y(1) - Y(0) \mid
\boldsymbol{X}) \leq \beta_u(\boldsymbol{X})$, where
\[\beta_{\ell}(\boldsymbol{X}) =
  \mathbb{E}_P(AY \mid \boldsymbol{X}, Z = 1) - \mathbb{E}_P(Y(1-A)
  + A \mid \boldsymbol{X}, Z = 0),\]
and
\[\beta_{u}(\boldsymbol{X}) = \mathbb{E}_P(AY + 1 - A\mid
  \boldsymbol{X}, Z = 1) - \mathbb{E}_P(Y(1-A)\mid \boldsymbol{X}, Z
  = 0).\] \citet{balke1997bounds} show that their bounds on the treatment
effect are in general tighter than these simpler bounds, as seen in
the following result.
\begin{lemma}
  $\beta_{\ell}(\boldsymbol{X}) = \theta_{\ell, 1}(\boldsymbol{X}) \leq
  \gamma_{\ell}(\boldsymbol{X}) \leq \gamma_{u}(\boldsymbol{X}) \leq
  \theta_{u, 1}(\boldsymbol{X}) = \beta_{u}(\boldsymbol{X})$.
\end{lemma}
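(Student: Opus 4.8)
The plan is to establish the displayed chain by treating its three qualitatively different pieces separately: the two outer equalities $\beta_\ell(\boldsymbol{X}) = \theta_{\ell,1}(\boldsymbol{X})$ and $\theta_{u,1}(\boldsymbol{X}) = \beta_u(\boldsymbol{X})$, the two trivial inequalities $\theta_{\ell,1}(\boldsymbol{X}) \le \gamma_\ell(\boldsymbol{X})$ and $\gamma_u(\boldsymbol{X}) \le \theta_{u,1}(\boldsymbol{X})$, and the central inequality $\gamma_\ell(\boldsymbol{X}) \le \gamma_u(\boldsymbol{X})$. Only the last of these is nontrivial.

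First I would verify the two equalities by direct computation, expanding the Manski--Robins bounds in terms of the cell probabilities $\pi_{ya.z}(\boldsymbol{X})$. Since $A, Y \in \{0,1\}$, the relevant conditional expectations reduce to cell or marginal probabilities: $\mathbb{E}(AY \mid \boldsymbol{X}, Z=1) = \pi_{11.1}$, $\mathbb{E}(Y(1-A) \mid \boldsymbol{X}, Z=0) = \pi_{10.0}$, and $\mathbb{E}(A \mid \boldsymbol{X}, Z=z) = \pi_{01.z} + \pi_{11.z}$. Substituting into $\beta_\ell$ yields $\beta_\ell = \pi_{11.1} - \pi_{10.0} - \pi_{01.0} - \pi_{11.0}$, and the normalization $\pi_{00.0} + \pi_{01.0} + \pi_{10.0} + \pi_{11.0} = 1$ lets me rewrite $-\pi_{01.0} - \pi_{10.0} - \pi_{11.0} = \pi_{00.0} - 1$, recovering exactly $\theta_{\ell,1} = \pi_{11.1} + \pi_{00.0} - 1$. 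The analogous computation for $\beta_u$ is even simpler: the $Z=1$ term collapses directly, since $\mathbb{E}(AY + 1 - A \mid \boldsymbol{X}, Z=1) = \pi_{11.1} + 1 - (\pi_{01.1} + \pi_{11.1}) = 1 - \pi_{01.1}$, so $\beta_u = 1 - \pi_{01.1} - \pi_{10.0} = \theta_{u,1}$, with no normalization needed. The two inner inequalities are then immediate, as $\gamma_\ell = \max_{1 \le j \le 8} \theta_{\ell,j}$ and $\gamma_u = \min_{1 \le j \le 8} \theta_{u,j}$ are respectively a maximum and a minimum over index sets containing $j = 1$.

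For the central inequality $\gamma_\ell(\boldsymbol{X}) \le \gamma_u(\boldsymbol{X})$, I would invoke Theorem~\ref{thm:BP} directly: under Assumptions~\ref{ass:consistency}--\ref{ass:ER}, the conditional bound~\eqref{eq:cond_bound_valid} shows that $\gamma_\ell(\boldsymbol{X})$ and $\gamma_u(\boldsymbol{X})$ sandwich the well-defined CATE $\mathbb{E}(Y(a=1) - Y(a=0) \mid \boldsymbol{X})$, whence $\gamma_\ell(\boldsymbol{X}) \le \gamma_u(\boldsymbol{X})$ almost surely.

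I expect this central step to be the main conceptual point. Unlike the other four relations, $\gamma_\ell \le \gamma_u$ does not follow from the algebraic definitions of the $\theta$'s alone: the underlying linear program can be infeasible, producing a crossed, ``empty'' interval when the observed law violates the instrumental inequalities. The inequality therefore genuinely relies on compatibility of $\mathbb{P}$ with the IV model, which is guaranteed here by the maintained assumptions. A self-contained alternative---checking $\theta_{\ell,i}(\boldsymbol{X}) \le \theta_{u,j}(\boldsymbol{X})$ for all $64$ index pairs using the normalization constraints---is possible but tedious, so routing through Theorem~\ref{thm:BP} is the preferred argument.
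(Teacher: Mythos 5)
Your proposal is correct and takes essentially the same approach as the paper: the paper's proof consists exactly of your two direct computations showing $\beta_{\ell}(\boldsymbol{X}) = \theta_{\ell,1}(\boldsymbol{X})$ and $\beta_{u}(\boldsymbol{X}) = \theta_{u,1}(\boldsymbol{X})$ via the normalization $\sum_{y,a}\pi_{ya.z}(\boldsymbol{X}) = 1$, leaving the inner inequalities as trivial consequences of the max/min definitions and the central inequality $\gamma_{\ell}(\boldsymbol{X}) \leq \gamma_{u}(\boldsymbol{X})$ implicitly inherited from Theorem~\ref{thm:BP}. Your extra observation---that $\gamma_{\ell} \leq \gamma_{u}$ genuinely requires compatibility of $\Pb$ with the IV model and can fail (a crossed interval) for probability vectors violating the instrumental inequalities---is accurate and slightly more careful than what the paper records, but it does not alter the route.
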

\begin{proof}
  Observe that, noting $A = YA + (1 - Y)A$,
  \begin{align*}
    \beta_{\ell}(\boldsymbol{X})
    &=
      \mathbb{E}_P(AY \mid \boldsymbol{X}, Z = 1) - \mathbb{E}_P(Y(1-A)
      + A \mid \boldsymbol{X}, Z = 0) \\
    &= \pi_{11.1}(\boldsymbol{X}) - \pi_{10.0}(\boldsymbol{X}) -
      \pi_{11.0}(\boldsymbol{X}) - \pi_{01.0}(\boldsymbol{X}) \\
    &= \pi_{11.1}(\boldsymbol{X}) - \{1 - \pi_{00.0}(\boldsymbol{X})\} \\
    &= \theta_{\ell, 1}(\boldsymbol{X}).
  \end{align*}
  Similarly,
  \begin{align*}
    \beta_{u}(\boldsymbol{X})
    &=
      \mathbb{E}_P(AY + 1 - A\mid
      \boldsymbol{X}, Z = 1) -
      \mathbb{E}_P(Y(1-A)\mid \boldsymbol{X}, Z
      = 0) \\
    &= \pi_{11.1}(\boldsymbol{X}) + \pi_{10.1}(\boldsymbol{X}) +
      \pi_{00.1}(\boldsymbol{X}) - \pi_{10.0}(\boldsymbol{X})\\
    &=  \{1 - \pi_{01.1}(\boldsymbol{X})\} - \pi_{10.0}(\boldsymbol{X}) \\
    &= \theta_{u, 1}(\boldsymbol{X}).
  \end{align*}
\end{proof}
\noindent These simpler bounds are known to be tight --- see Theorem 7.3
of \citet{balke2011nonparametric} --- when we additionally assert the
monotonicity assumption, i.e., $A(z = 1) \geq A(z = 0)$ almost
surely. Thus, somewhat surprisingly, monotonicity does not
enable us to obtain tighter bounds, but does simplify the structure in
that $\beta_{\ell}(\boldsymbol{X}) = \gamma_{\ell}(\boldsymbol{X})$
and $\beta_{u}(\boldsymbol{X}) = \gamma_{u}(\boldsymbol{X})$. A consequence of this analysis is that in the
special case of a randomized trial with monotonicity,
covariate-assisted bounds offer no reduction in width for bounding
the marginal ATE compared to covariate-agnostic bounds
\[\beta_{\ell}^*(P) = \mathbb{E}_P(AY \mid Z = 1) - \mathbb{E}_P(Y(1-A)
  + A \mid Z = 0),\] and
\[\beta_{u}^*(P) =\mathbb{E}_P(AY + 1 - A\mid
  Z = 1) - \mathbb{E}_P(Y(1-A)\mid Z = 0),\] since
$\beta_{\ell}^*(P) = \mathbb{E}_P(\beta_{\ell}(\boldsymbol{X}))$ and
$\beta_{u}^*(P) = \mathbb{E}_P(\beta_{u}(\boldsymbol{X}))$ by
randomization, i.e., $Z \ind \boldsymbol{X}$. That said, one upside is that we can leverage randomization to
obtain more efficient estimators of the bounds in this setting
compared to nonparametric estimators of $\beta_{\ell}^*(P)$ and
$\beta_{u}^*(P)$. Indeed, the efficient influence functions of the bounds in this
setting will be the nonparametric influence functions of the
functionals $\mathbb{E}_P(\beta_{\ell}(\boldsymbol{X}))$ and
$\mathbb{E}_P(\beta_{u}(\boldsymbol{X}))$---we omit this analysis here.

\section{Proofs of Results in Section~\ref{sec:binary}} \label{sec:appendix-binary}
\subsection{Proof of Theorem \ref{thm:convergence_margin}}
The proof structures follows that of the proof of Theorem \ref{thm:dr-smooth}, explained in greater detail below. We prove the result for the lower bound as the result for the upper bound is analogous. By the standard decomposition, we have
\begin{align*}
\widehat{\mathcal{L}} - \mathcal{L} & = (\Pn - \Pb)\{ \varphi_\ell(O; \widehat\Pb, \widehat{d}_\ell) - \varphi_\ell(O; \Pb, d_\ell)\} + \Pb\{ \varphi_\ell(O; \widehat\Pb, \widehat{d}_\ell) - \varphi_\ell(O; \Pb, d_\ell)\} \\
& \hphantom{=} + (\Pn - \Pb)\{\varphi_\ell(O; \Pb, d_\ell)\} \\
& \equiv R_1 + R_2 + (\Pn - \Pb)\{\varphi_\ell(O; \Pb, d_\ell)\}
\end{align*}
We will show that $R_1 = o_\Pb(n^{-1/2})$ and 
\begin{align*}
R_2 = O_\Pb\left(\left\lVert \widehat{\lambda}_1 - \lambda_1 \right\rVert
    \cdot \max_{y,a,z \in \{0,1\}} \left\lVert \widehat{\pi}_{ya.z} -
      \pi_{ya.z}\right\rVert + \max_{1 \leq j \leq 8}\left\lVert
      \widehat{\theta}_{\ell, j} - \theta_{\ell, j} \right\rVert_{\infty}^{1 +
      \alpha}\right)
\end{align*}
under the conditions of the theorem.
\subsubsection{Term $R_1$}
By Lemma 2 in \cite{kennedy2020b}, $R_1=o_\Pb(n^{-1/2})$
if 
\begin{align*}
\int \{\varphi_\ell(o; \widehat\Pb, \widehat{d}_\ell) - \varphi_\ell(o; \Pb, d_\ell)\}^2 d\Pb(o) = o_\Pb(1)
\end{align*}
We have
\begin{align*}
\int \{\varphi_\ell(o; \widehat\Pb, \widehat{d}_\ell) - \varphi_\ell(o; \Pb, d_\ell)\}^2 d\Pb(o) & \lesssim \int \{\varphi_\ell(o; \widehat\Pb, \widehat{d}_\ell) - \varphi_\ell(o; \Pb, \widehat{d}_\ell)\}^2 d\Pb(o) \\
& \hphantom{=} + \int \{\varphi_\ell(o; \Pb, \widehat{d}_\ell) - \varphi_\ell(o; \Pb, d_\ell)\}^2 d\Pb(o)
\end{align*}
For the first term,
\begin{align*}
\int \{\varphi_\ell(o; \widehat\Pb, \widehat{d}_\ell) - \varphi_\ell(o; \Pb, \widehat{d}_\ell)\}^2 d\Pb(o) \lesssim \sum_{j = 1}^8 \int \{L_j(o; \widehat\Pb) - L_j(o; \Pb) + \widehat\theta_{\ell, j} - \theta_{\ell, j}\}^2 d\Pb(o) = o_\Pb(1)
\end{align*}
since for example for $j = 1$, we have
  \begin{align*}
    &\left\lVert \left(\widehat{\pi}_{11.1} - \pi_{11.1}\right)
      \left(1 - \frac{\mathds{1}(Z = 1)}{\widehat{\lambda}_1}\right)
      + \frac{\mathds{1}(Z = 1)}{\widehat{\lambda}_1\lambda_1}\left\{\mathds{1}(Y = 1, A = 1) -
      \pi_{11.1}\right\}\left(\lambda_1 - \widehat{\lambda}_1\right)\right\rVert \\
    & \quad + \left\lVert \left(\widehat{\pi}_{00.0} - \pi_{00.0}\right)
      \left(1 - \frac{\mathds{1}(Z = 0)}{\widehat{\lambda}_0}\right)
      + \frac{\mathds{1}(Z = 0)}{\widehat{\lambda}_0\lambda_0}\left\{\mathds{1}(Y = 0, A = 0) -
      \pi_{00.0}\right\}\left(\lambda_0 - \widehat{\lambda}_0\right)\right\rVert \\
    & \lesssim \left\lVert \widehat{\lambda}_1 - \lambda_1 \right\rVert +
      \max_{y,a,z \in \{0,1\}} \left\lVert \widehat{\pi}_{ya.z} -
      \pi_{ya.z}\right\rVert = o_P(1),
  \end{align*}
  by our assumptions, using the fact that $\widehat{\lambda}_z$ and
  $\lambda_z$ are bounded away from zero.

  For the second term, we have
  \begin{align*}
  \int \{\varphi_\ell(o; \Pb, \widehat{d}_\ell) - \varphi_\ell(o; \Pb, d_\ell) \}^2 d\Pb(o) & = \sum_{j = 1}^8 \int\left| \mathds{1}\{\widehat{d}_\ell(\bx) = j \} - \mathds{1}\{d_\ell(\bx) = j\} \right| \{L_j(o; \Pb) + \theta_{\ell, j}(\bx)\}^2 d\Pb(o)  \\
  & \lesssim \Pb\left\{\theta_{\ell, \widehat{d}(\boldsymbol{X})}(\boldsymbol{X}) \neq \theta_{\ell, d_\ell(\boldsymbol{X})} (\boldsymbol{X}) \right\}
  \end{align*}
  since $\theta_{\ell, j}(\boldsymbol{X})$ and $L_j(O;\Pb)$ are
  all uniformly bounded. Next, we show that
  \begin{align*}
  \Pb\left\{\theta_{\ell, \widehat{d}(\boldsymbol{X})}(\boldsymbol{X}) \neq \theta_{\ell, d_\ell(\boldsymbol{X})} (\boldsymbol{X}) \right\} = o_\Pb(1)
  \end{align*}
  For any $t > 0$, we have
\begin{align*}
    \Pb\left[\theta_{\ell, \widehat{d}_{\ell}(\boldsymbol{X})} \neq
    \theta_{\ell,d_{\ell}(\boldsymbol{X})}\right]
    &= \Pb\left[\theta_{\ell, \widehat{d}_{\ell}(\boldsymbol{X})} \neq
      \theta_{\ell,d_{\ell}(\boldsymbol{X})}, \min_{j \neq d_\ell(\boldsymbol{X})} \{\theta_{\ell,
      d_{\ell}(\boldsymbol{X})}(\boldsymbol{X}) -
      \theta_{\ell,j}(\boldsymbol{X})\} \leq t \right] \\
    & \quad \quad \quad +
      \Pb\left[\theta_{\ell, \widehat{d}_{\ell}(\boldsymbol{X})} \neq
      \theta_{\ell,d_{\ell}(\boldsymbol{X})}, \min_{j \neq d_\ell(\boldsymbol{X})} \{\theta_{\ell,
      d_{\ell}(\boldsymbol{X})}(\boldsymbol{X}) -
      \theta_{\ell,j}(\boldsymbol{X})\} > t \right] \\
    & \leq \Pb\left[\min_{j \neq d_\ell(\boldsymbol{X})} \{\theta_{\ell,
      d_{\ell}(\boldsymbol{X})}(\boldsymbol{X}) -
      \theta_{\ell,j}(\boldsymbol{X})\} \leq t \right] +
      \Pb\left[\theta_{\ell, d_{\ell}(\boldsymbol{X})} -
      \theta_{\ell,\widehat{d}_{\ell}(\boldsymbol{X})} > t\right] \\
    & \leq Ct^{\alpha} + \Pb\left[\theta_{\ell, d_{\ell}(\boldsymbol{X})} -
      \theta_{\ell,\widehat{d}_{\ell}(\boldsymbol{X})} +
      \widehat{\theta}_{\ell, \widehat{d}_{\ell}(\boldsymbol{X})} -
      \widehat{\theta}_{\ell, d_{\ell}(\boldsymbol{X})}> t\right] \\
    & \leq Ct^{\alpha} + \Pb\left\{2 \sum_{j=1}^8
      |\widehat{\theta}_{\ell, j} - \theta_{\ell, j}| > t\right\} \\
    & \leq Ct^{\alpha} + \frac{2}{t}\sum_{j=1}^8
      \Pb|\widehat{\theta}_{\ell,j}(\boldsymbol{X}) -
      \theta_{\ell, j}(\boldsymbol{X})| \\
    & \leq Ct^{\alpha} + \frac{2}{t}\sum_{j=1}^8 \left\lVert
      \widehat{\theta}_{\ell,j} - \theta_{\ell, j}\right\rVert
  \end{align*}
  where $C > 0$ is the universal constant in Assumption
  \ref{ass:margin}. In the second line, we use that
  $\theta_{\ell, \widehat{d}_{\ell}(\boldsymbol{X})} \neq
  \theta_{\ell,d_{\ell}(\boldsymbol{X})}$ implies
  $\widehat{d}_{\ell}(\boldsymbol{X}) \neq 
  d_\ell(\boldsymbol{X})$, so
  $\theta_{\ell,d_{\ell}(\boldsymbol{X})} -
  \theta_{\ell,\widehat{d}_{\ell}(\boldsymbol{X})} \geq \min_{j \neq 
   d_\ell(\boldsymbol{X})} \{\theta_{\ell,
    d_{\ell}(\boldsymbol{X})}(\boldsymbol{X}) -
  \theta_{\ell,j}(\boldsymbol{X})\}$. In the third line we use Assumption
  \ref{ass:margin} and that
  $\widehat{\theta}_{\ell,
    \widehat{d}_{\ell}(\boldsymbol{X})}(\boldsymbol{X}) -
  \widehat{\theta}_{\ell, d_{\ell}(\boldsymbol{X})}(\boldsymbol{X}) \geq 0$
  by construction of $\widehat{d}_{\ell}(\boldsymbol{X})$. The fourth
  line follows from Markov's inequality, and the last line from
  $\lVert \, \cdot \, \rVert_{L_1} \leq \lVert \, \cdot \,
  \rVert_{L_2}$. Since for each $j \in \{1,\ldots, 8\}$ we have
  $\left\lVert \widehat{\theta}_{\ell,j} - \theta_{\ell, j}\right\rVert =
  o_P(1)$, as each $\widehat{\theta}_{\ell, j} - \theta_{\ell, j}$ is a linear
  combination of the differences
  $\{\widehat{\pi}_{ya.z} - \pi_{ya.z}: y, a, z \in \{0,1\}\}$, we
  obtain the desired result by invoking Lemma
  \ref{lemma:convbound}. Namely, we set
  $X_n = \Pb\left[\theta_{\ell, \widehat{d}_{\ell}(\boldsymbol{X})} \neq
    \theta_{\ell,d_{\ell}(\boldsymbol{X})}\right]$, and for any
  $\epsilon > 0$, choose
  $t_{\epsilon} = \left(\frac{\epsilon}{C}\right)^{1/\alpha} > 0$ and
  $Z_n^{(\epsilon)} = \frac{2}{t_{\epsilon}}\sum_{j=1}^8 \left\lVert
    \widehat{\theta}_{\ell,j} - \theta_{\ell, j}\right\rVert$.
  \begin{lemma} \label{lemma:convbound}
  Suppose that for a given sequence $X_n$, one can find for any
  $\epsilon > 0$ another sequence $Z_n^{(\epsilon)} \geq 0$ such that
  $|X_n| \leq \epsilon + Z_n^{(\epsilon)}$ and
  $Z_n^{(\epsilon)} = o_P(1)$. Then $X_n = o_P(1)$.
\end{lemma}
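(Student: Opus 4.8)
The plan is to verify convergence in probability directly from its definition: to establish $X_n = o_P(1)$, I would show that for every $\eta > 0$ we have $\Pb(|X_n| > \eta) \to 0$ as $n \to \infty$. The entire argument hinges on a single idea---exploiting the fact that the hypothesis lets us select the dominating sequence $Z_n^{(\epsilon)}$ \emph{after} fixing $\epsilon$, so that we are free to tie $\epsilon$ to the target tolerance $\eta$.

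Concretely, fix $\eta > 0$ and apply the hypothesis with the choice $\epsilon = \eta/2$. This furnishes a sequence $Z_n^{(\eta/2)} \geq 0$ satisfying $|X_n| \leq \eta/2 + Z_n^{(\eta/2)}$ together with $Z_n^{(\eta/2)} = o_P(1)$. The key step is then a deterministic event inclusion: on the event $\{|X_n| > \eta\}$ we necessarily have $\eta/2 + Z_n^{(\eta/2)} \geq |X_n| > \eta$, and hence $Z_n^{(\eta/2)} > \eta/2$. Therefore
\[
\{|X_n| > \eta\} \subseteq \left\{ Z_n^{(\eta/2)} > \eta/2 \right\},
\]
and taking probabilities yields $\Pb(|X_n| > \eta) \leq \Pb(Z_n^{(\eta/2)} > \eta/2)$. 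Since $Z_n^{(\eta/2)} = o_P(1)$, the right-hand side tends to $0$ as $n \to \infty$; because $\eta > 0$ was arbitrary, this establishes $X_n = o_P(1)$.

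There is no genuine obstacle here---the statement is essentially a bookkeeping lemma that formalizes the ``$\epsilon$ plus a vanishing remainder'' pattern used repeatedly in the bias analysis. The only point warranting care is the order of quantifiers: one must read the hypothesis as supplying, for each fixed $\epsilon$, its own dominating sequence $Z_n^{(\epsilon)}$, which is precisely what licenses the substitution $\epsilon = \eta/2$ and collapses the two-parameter ($\epsilon$, $\eta$) accounting into the single event inclusion displayed above.
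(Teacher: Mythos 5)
Your proposal is correct and is essentially identical to the paper's proof: both fix a target tolerance, invoke the hypothesis at half that tolerance, and bound $\Pb(|X_n| > \eta)$ by $\Pb(Z_n^{(\eta/2)} > \eta/2)$, which vanishes since $Z_n^{(\eta/2)} = o_P(1)$. The only difference is notational ($\eta$ in place of the paper's $\epsilon$) and your slightly more explicit statement of the event inclusion.
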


\begin{proof}
  Fixing $\epsilon > 0$, consider a non-negative sequence
  $Z_n^{(\epsilon / 2)} = o_P(1)$ satisfying
  $|X_n| \leq \epsilon / 2 + Z_n^{(\epsilon / 2)}$. Then
  \[P[|X_n| > \epsilon] \leq P[\epsilon / 2 + Z_n^{(\epsilon / 2)} >
    \epsilon] = P[Z_n^{(\epsilon / 2)} > \epsilon / 2] \to 0 \text{ as
    } n \to \infty,\] since $Z_n = o_P(1)$, thus proving the result.
\end{proof}
\subsubsection{Term $R_2$}
We decompose $R_2$ as
  \begin{align*}
    R_2 & =  \sum_{j = 1}^8 \left\{ \Pb\left[\mathds{1}\{\widehat{d}_\ell(\boldsymbol{X}) = j\}\{L_j(O; \widehat\Pb) + \widehat\theta_{\ell, j}(\boldsymbol{X}) - \theta_{\ell, j}(\boldsymbol{X})\} \right] \right. \\
    & \left. \hphantom{=}  + \Pb\left(\left[\mathds{1}\{\widehat{d}_\ell(\boldsymbol{X}) = j\} -\mathds{1}\{d_\ell(\boldsymbol{X}) = j\} \right]\theta_{\ell, j}(\boldsymbol{X})\right) \right\} 
  \end{align*}
  We have
  \begin{align*}
  \Pb\left[\mathds{1}\{\widehat{d}_\ell(\boldsymbol{X}) = j\}\{L_j(O; \widehat\Pb) + \widehat\theta_{\ell, j}(\boldsymbol{X}) - \theta_{\ell, j}(\boldsymbol{X})\} \right] \lesssim  \left\lVert \widehat{\lambda}_1 - \lambda_1 \right\rVert
    \cdot \max_{y,a,z \in \{0,1\}} \left\lVert \widehat{\pi}_{ya.z} -
      \pi_{ya.z}\right\rVert
\end{align*}
For the second term, observe that
  \begin{align*}
    & \left|\Pb\left\{
      \theta_{\ell, \widehat{d}_{\ell}(\boldsymbol{X})}(\boldsymbol{X}) -
      \theta_{\ell, d_{\ell}(\boldsymbol{X})}(\boldsymbol{X})
      \right\} \right| \\
    & = \left|\Pb\left[\mathds{1}\{\theta_{\ell, d_{\ell}(\boldsymbol{X})}(\boldsymbol{X})
      > \theta_{\ell, \widehat{d}_{\ell}(\boldsymbol{X})}(\boldsymbol{X})\}\left\{
      \theta_{\ell, d_{\ell}(\boldsymbol{X})}(\boldsymbol{X}) -
      \theta_{\ell, \widehat{d}_{\ell}(\boldsymbol{X})}(\boldsymbol{X})\right\}
      \right] \right|\\
    & \leq \Pb\bigg(\mathds{1}\left(
      \min_{j \neq d_\ell(\boldsymbol{X})} \{\theta_{\ell,
      d_{\ell}(\boldsymbol{X})} -
      \theta_{\ell,j}\} \leq \theta_{\ell, d_{\ell}(\boldsymbol{X})}
      - \theta_{\ell, \widehat{d}_{\ell}(\boldsymbol{X})} +
      \widehat{\theta}_{\ell, \widehat{d}_{\ell}(\boldsymbol{X})} -
      \widehat{\theta}_{\ell, d_{\ell}(\boldsymbol{X})}\right) \\
    & \quad \quad \times\left(
      \theta_{\ell, d_{\ell}(\boldsymbol{X})} -
      \theta_{\ell, \widehat{d}_{\ell}(\boldsymbol{X})} +
      \widehat{\theta}_{\ell, \widehat{d}_{\ell}(\boldsymbol{X})} -
      \widehat{\theta}_{\ell, d_{\ell}(\boldsymbol{X})}\bigg)
      \right) \\
    & \leq 2 \max_{1 \leq j \leq 8}\left\lVert
      \widehat{\theta}_{\ell, j} - \theta_{\ell, j} \right\rVert_{\infty}
      \Pb\left[\min_{j \neq d_\ell(\boldsymbol{X})} \{\theta_{\ell,
      d_{\ell}(\boldsymbol{X})} -
      \theta_{\ell,j}\} \leq 2 \max_{1 \leq j \leq 8}\left\lVert
      \widehat{\theta}_{\ell, j} - \theta_{\ell, j} \right\rVert_{\infty}\right] \\
    & \lesssim \max_{1 \leq j \leq 8}\left\lVert
      \widehat{\theta}_{\ell, j} - \theta_{\ell, j} \right\rVert_{\infty}^{1 +
      \alpha},
  \end{align*}
  by Assumption \ref{ass:margin}, where we used the fact that
  $\theta_{\ell, d_{\ell}(\boldsymbol{X})}(\boldsymbol{X}) \geq \theta_{\ell,
    \widehat{d}_{\ell}(\boldsymbol{X})}(\boldsymbol{X})$ and
  $\widehat{\theta}_{\ell, d_{\ell}(\boldsymbol{X})}(\boldsymbol{X}) \leq
  \widehat{\theta}_{\ell,
    \widehat{d}_{\ell}(\boldsymbol{X})}(\boldsymbol{X})$, by
  construction of $d_{\ell}$ and $\widehat{d}_{\ell}$.

\section{Elaboration and Proofs of Results in Section~\ref{sec:smooth}} \label{sec:appendix-smooth}
\subsection{Doubly Robust Machine Learning Framework}
We now briefly review the statistical framework we use to derive our estimators and to evaluate the theoretical properties of our methods. A central goal is to develop methods that are flexible and resistant to bias from model misspecification.

Here, we reduce the potential for model misspecification by using flexible nonparametric machine learning (ML) tools. More specifically, we aim to construct bias-corrected estimators using influence functions---a central element of semiparametric theory. As an example, we review estimation of the mean counterfactual, $\E(Y(a=1))$ (i.e., the mean outcome if every unit in the population were treated), from this perspective. For the purposes of illustration, we assume (just for this paragraph) that the data consist of $n$ iid copies of $O = (\boldsymbol{X}, A, Y) \sim \mathbb{P}$. Under no unmeasured confounding and other assumptions, $\E(Y(a=1))$ can be written as an averaged regression function \citep{robins1986}: $\psi(\mathbb{P}) = \E_{\mathbb{P}}\{\E_{\mathbb{P}}(Y \mid A = 1, \boldsymbol{X})\}$. Estimating $\E_{\mathbb{P}}(Y \mid A = 1, \boldsymbol{X})$ is a standard regression problem and flexible ML methods may be preferred to more restrictive parametric estimation methods such as linear regression to avoid model misspecification and reduce bias. However, if $\psi$ is estimated as the average of predictions from a ML model, rather than from a parametric model, the estimate will generally inherit first-order smoothing bias from the nonparametric estimates. Instead, one can find a function of the data, which we can denote generically as $\varphi$, such that estimating $\psi$ as the average value of (an estimated version of) $\varphi$  will correct this first-order bias. An optimal choice of this function is referred to as the \textit{influence function} of $\psi$, and in this case is based on two regression functions, $\E_{\mathbb{P}}(Y \mid A = 1, \boldsymbol{X})$ and $\Pb(A = 1 \mid \boldsymbol{X})$. In practice, the analyst fits two models: a model for the outcome regressed against the treatment and all confounders (the outcome model); and a model regressing the treatment against all confounders (the propensity model).  These models are combined to estimate the effect of interest. This approach is ``doubly robust'', since it is consistent if either the propensity model or the outcome model is correctly specified \citep{Scharfstein:1999a}, and also leads to ``doubly reduced'' second-order bias. Finally, the estimation process is combined with sample-splitting or cross-fitting, to prevent over-fitting (or formally, to avoid imposing complexity restrictions on the class of nuisance estimators) by separating estimation of the components of the influence function from estimation of its mean \citep{robins2008higher, zheng2010asymptotic, chernozhukov2018double}. 

In semiparametric efficiency theory \citep{bickel1993efficient, tsiatis2006semiparametric, van2000asymptotic, kennedy2016semiparametric}, a fundamental goal is to characterize the (efficient) influence function. Mathematically, an influence function is the derivative in a von Mises expansion of the target statistical functional (analogous to the usual derivative of a function in Taylor expansion). In robust statistics, it coincides with the Gateaux derivative of the functional in the direction of a point-mass contamination distribution. The influence function serves a number of purposes. First, the variance of the efficient influence function is equal to the efficiency bound of the target statistical functional, which serves as a lower bound of the variance for regular estimators. It characterizes the inherent estimation difficulty of the target functional and provides a benchmark to compare against when we construct estimators. Moreover, it enables us to correct for first-order bias in the plug-in estimator and motivates the robust estimator, which has a general second-order bias property so that nonparametric and flexible machine learning methods with relatively slow rates can be used for estimating the nuisance functions.

\subsection{Background on log-sum-exp Function}
The log-sum-exp (LSE) function is commonly employed across a range of disciplines, including statistical mechanics \citep{aldous2005spin} and machine learning \citep{calafiore2020universal}. Using the LSE function to approximate the maximum function is also common in the statistics literature. For instance, one way to prove Sudakov-Fernique's inequality on Gaussian comparison is to apply the functional form of Slepian's inequality (which requires the function considered to be twice-differentiable) to the LSE function and let $t\rightarrow \infty$ \citep{vershynin2018high, wainwright2019high}. As another example, in proving a high-dimensional Gaussian comparison version of the central limit theorem, \citet{chernozhukov2012central} used Slepian's Gaussian interpolation together with Stein’s leave-one-out expansions. A typical tool to evaluate Stein’s leave-one-out expansion is a Taylor's expansion, which requires differentiability of the function. Hence, \citet{chernozhukov2012central} also approximated the maximum function with the LSE function, and selected the tuning parameter $t$ to limit approximation error while controlling the derivatives of the LSE function.

\subsection{Proof of Theorem \ref{thm:IF-smooth}}
Recall an influence function of a pathwise differentiable
functional $\chi(\Pb)$, at $\Pb$ in a given statistical model, is a zero-mean finite-variance function $\dot{\chi}(O ; \Pb)$ of observed data $O$ such that for any regular one-dimensional parametric submodel $\Pb_\epsilon$ through $\Pb_0 = \Pb$, it holds that
\[
\left.\frac{d}{d \epsilon} \chi\left(\Pb_\epsilon\right)\right|_{\epsilon=0}=\mathbb{E}_{\Pb}[\dot{\chi}(O ; \Pb) u(O)]
\]
where $u(O)$ is the score function of the parametric submodel at $\Pb$. For such a parametric submodel, invoking the total derivative, we have (let $\boldsymbol{\theta}_{\epsilon, \ell}(\boldsymbol{X})$ be the nuisance function vector at submodel $\Pb_{\epsilon}$)
\begin{equation}\label{eq:derivative-Psig}
\begin{aligned}
&\, \left.\frac{d}{d \epsilon} \mathcal{L}_g\left(\Pb_\epsilon\right)\right|_{\epsilon=0} \\
= &\, \left.\frac{d}{d \epsilon} \mathbb{E}_{\Pb_\epsilon}\left[g\left(\boldsymbol{\theta}_{\epsilon, \ell}(\boldsymbol{X})\right)\right]\right|_{\epsilon=0} \\
=&\, \left.\frac{d}{d \epsilon} \mathbb{E}_{\Pb_\epsilon}\left[g\left(\boldsymbol{\theta}_{\ell}(\boldsymbol{X})\right)\right]\right|_{\epsilon=0}+\left.\sum_{j=1}^8 \frac{d}{d \epsilon} \mathbb{E}_{\Pb}\left[g\left(\theta_{\ell, 1}(\boldsymbol{X}), \ldots, \theta_{\epsilon, \ell, j}(\boldsymbol{X}), \ldots, \theta_{\ell, 8}(\boldsymbol{X})\right)\right]\right|_{\epsilon=0} \\
=&\, \mathbb{E}_{\Pb}\left[\left(g\left(\boldsymbol{\theta}_{\ell}(\boldsymbol{X})\right)-\mathcal{L}_g(\Pb)\right) u(O)\right]+\sum_{j=1}^8 \mathbb{E}_{\Pb}\left[\left.\frac{\partial g\left(\boldsymbol{\theta}_{\ell}(\boldsymbol{X})\right)}{\partial \theta_{\ell, j}(\boldsymbol{X})} \frac{d}{d \epsilon} \theta_{\epsilon, \ell, j}(\boldsymbol{X})\right|_{\epsilon=0}\right]
\end{aligned}    
\end{equation}
using the fact that score function $u$ has mean zero to center the first term, and the chain rule for the remaining
eight terms. Next, observe that, for any $(y, a, z) \in\{0,1\}^3$, (denote in general $u(B \mid C)$ as the conditional score function for the distribution of $B$ given $C$.
\begin{equation}\label{eq:pi-derivative}
\begin{aligned}
\left.\frac{d}{d \epsilon} \pi_{\epsilon, y a . z}(\boldsymbol{X})\right|_{\epsilon=0} & =\mathbb{E}_{\Pb}\left[\left(\mathds{1}(Y=y, A=a)-\pi_{y a . z}(\boldsymbol{X})\right) u(Y, A \mid Z=z, \boldsymbol{X}) \mid Z=z, \boldsymbol{X}\right] \\
& =\mathbb{E}_{\Pb}\left[\frac{\mathds{1}(Z=z)}{\lambda_z(\boldsymbol{X})}\left\{\mathds{1}(Y=y, A=a)-\pi_{y a . z}(\boldsymbol{X})\right\} u(Y, A \mid Z=z, \boldsymbol{X}) \mid \boldsymbol{X}\right] \\
& =\mathbb{E}_{\Pb}\left[\frac{\mathds{1}(Z=z)}{\lambda_z(\boldsymbol{X})}\left\{\mathds{1}(Y=y, A=a)-\pi_{y a . z}(\boldsymbol{X})\right\} u(Y, A \mid Z, \boldsymbol{X}) \mid \boldsymbol{X}\right] \\
& =\mathbb{E}_{\Pb}\left(\psi_{y a . z}(O ; \Pb) u(Y, A \mid Z, \boldsymbol{X}) \mid \boldsymbol{X}\right),
\end{aligned}    
\end{equation}
where the first equation follows from directly taking derivatives and centering with $\pi_{ya.z}$ since $u(Y,A \mid Z=z,\boldsymbol{X})$ has conditional mean zero. The second equation follows from conditioning on $Z,\boldsymbol{X}$ and using property of conditional expectations. The last equation follows from the definition of $\psi_{ya.z}$.

Since $\theta_{\ell,j}$'s are linear combinations of $\pi_{ya.z}$ (and constant 1), we can prove
\[
\mathbb{E}_{\Pb}\left[\left.\frac{\partial g\left(\boldsymbol{\theta}_{\ell}(\boldsymbol{X})\right)}{\partial \theta_{\ell, j}(\boldsymbol{X})} \frac{d}{d \epsilon} \theta_{\epsilon, \ell, j}(\boldsymbol{X})\right|_{\epsilon=0}\right] =\mathbb{E}_{\Pb}\left[\frac{\partial g\left(\boldsymbol{\theta}_{\ell}(\boldsymbol{X})\right)}{\partial \theta_{\ell, j}(\boldsymbol{X})} L_j(O ; \Pb) u(O)\right].
\]
For illustration, take $j=1$ (other $j$'s can be proved similarly) and we have
\[
\begin{aligned}
&\, \mathbb{E}_{\Pb}\left[\left.\frac{\partial g\left(\boldsymbol{\theta}_{\ell}(\boldsymbol{X})\right)}{\partial \theta_{\ell, 1}(\boldsymbol{X})} \frac{d}{d \epsilon} \theta_{\epsilon, \ell, 1}(\boldsymbol{X})\right|_{\epsilon=0}\right] \\
= &\,\mathbb{E}_{\Pb}\left[\left.\frac{\partial g\left(\boldsymbol{\theta}_{\ell}(\boldsymbol{X})\right)}{\partial \theta_{\ell, 1}(\boldsymbol{X})} \frac{d}{d \epsilon}\left\{\pi_{\epsilon, 11.1}(\boldsymbol{X})+\pi_{\epsilon, 00.0}(\boldsymbol{X})-1\right\}\right|_{\epsilon=0}\right] \\
=&\,\mathbb{E}_{\Pb}\left[\frac{\partial g\left(\boldsymbol{\theta}_{\ell}(\boldsymbol{X})\right)}{\partial \theta_{\ell, 1}(\boldsymbol{X})} \mathbb{E}_{\Pb}\left[\left\{\psi_{11.1}(O ; \Pb)+\psi_{00.0}(O ; \Pb)\right\} u(Y, A \mid Z, \boldsymbol{X}) \mid \boldsymbol{X}\right]\right] \\
=& \,\mathbb{E}_{\Pb}\left[\frac{\partial g\left(\boldsymbol{\theta}_{\ell}(\boldsymbol{X})\right)}{\partial \theta_{\ell, 1}(\boldsymbol{X})}\left\{\psi_{11.1}(O ; \Pb)+\psi_{00.0}(O ; \Pb)\right\} u(Y, A \mid Z, \boldsymbol{X})\right] \\
=& \,\mathbb{E}_{\Pb}\left[\frac{\partial g\left(\boldsymbol{\theta}_{\ell}(\boldsymbol{X})\right)}{\partial \theta_{\ell, 1}(\boldsymbol{X})} L_1(O ; \Pb) u(O)\right],
\end{aligned}
\]
where the first equation follows from definition of $\theta_{\epsilon, \ell, 1}$. The second equation follows from \eqref{eq:pi-derivative}. The third equation follows from property of conditional distribution. In the final equation we added $u(Z,\boldsymbol{X})$ and noted $u(O) =
u(Z,\boldsymbol{X}) + u(Y,A \mid Z,\boldsymbol{X})$ — we are permitted to add this term as it is a function only of $(Z,X)$, and $\psi_{ya.z}(O; \Pb)$ has
mean zero given $(Z,\boldsymbol{X})$.

Plugging these equations into \eqref{eq:derivative-Psig} we have
\[
\left.\frac{d}{d \epsilon} \mathcal{L}_g\left(\Pb_\epsilon\right)\right|_{\epsilon=0}=\mathbb{E}_{\Pb}\left\{\left[g\left(\boldsymbol{\theta}_{\ell}(\boldsymbol{X})\right)-\mathcal{L}_g(\Pb)+\sum_{j=1}^8 \frac{\partial g\left(\boldsymbol{\theta}_{\ell}(\boldsymbol{X})\right)}{\partial \theta_{\ell, j}(\boldsymbol{X})} L_j(O ; \Pb)\right] u(O)\right\}.
\]
The argument for deriving the influence function of $\mathcal{U}_h(\Pb)$ uses the exact same logic.

\subsection{Proof of Theorem \ref{thm:dr-smooth}}
We first prove a proposition that characterizes the conditional bias of the robust estimator. In the following derivations all the expectations are taken conditioning on training data $D^n$. 

\begin{proposition}\label{prop:dr-smooth-bias}
Suppose $g$ is a twice continuously differentiable function, then we have
\[
\begin{array}{r}
\E[\widehat{\mathcal{L}}_g] - \mathcal{L}_g=\mathbb{E}_{\Pb}\left[\nabla g\left(\widehat{\boldsymbol{\theta}}_{\ell}(\boldsymbol{X})\right)^T\left(\mathbb{E}_{\Pb}(\boldsymbol{L}(O ; \widehat{\Pb}) \mid \boldsymbol{X})+\widehat{\boldsymbol{\theta}}_{\ell}(\boldsymbol{X})-\boldsymbol{\theta}_{\ell}(\boldsymbol{X})\right)\right] \\
\quad-\frac{1}{2} \mathbb{E}_{\Pb}\left[\left(\widehat{\boldsymbol{\theta}}_{\ell}(\boldsymbol{X})-\boldsymbol{\theta}_{\ell}(\boldsymbol{X})\right)^T \nabla^2 g\left(\boldsymbol{\theta}_{\ell}^*(\boldsymbol{X})\right)\left(\widehat{\boldsymbol{\theta}}_{\ell}(\boldsymbol{X})-\boldsymbol{\theta}_{\ell}(\boldsymbol{X})\right)\right],
\end{array}
\]
where $\boldsymbol{L}(O ; \widehat{\Pb})=\left(L_1(O ; \widehat{\Pb}), \ldots, L_8(O ; \widehat{\Pb})\right)^T$ and $\boldsymbol{\theta}_{\ell}^*(\boldsymbol{X})$ is a point that lies on the line segment between $\boldsymbol{\theta}_{\ell}(\boldsymbol{X})$ and $\widehat{\boldsymbol{\theta}}_{\ell}(\boldsymbol{X})$.
\end{proposition}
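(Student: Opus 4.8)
The plan is to compute the conditional expectation of $\widehat{\mathcal{L}}_g$ directly and then reconcile it with $\mathcal{L}_g$ through a second-order Taylor expansion. Throughout, all expectations are taken conditional on the training sample $D^n$, so that $\widehat{\Pb}$, $\widehat{\boldsymbol{\theta}}_{\ell}$, and the gradient $\nabla g(\widehat{\boldsymbol{\theta}}_{\ell})$ are treated as fixed (non-random) functions of $\boldsymbol{X}$. Because the evaluation data are independent of $D^n$, applying $\mathbb{E}_{\Pb}$ to the empirical average $\mathbb{P}_n[\cdot]$ simply replaces it by a population expectation, yielding
\[
\E[\widehat{\mathcal{L}}_g] = \mathbb{E}_{\Pb}\left[g\left(\widehat{\boldsymbol{\theta}}_{\ell}(\boldsymbol{X})\right)\right] + \sum_{j=1}^8 \mathbb{E}_{\Pb}\left[\frac{\partial g\left(\widehat{\boldsymbol{\theta}}_{\ell}(\boldsymbol{X})\right)}{\partial \widehat{\theta}_{\ell, j}(\boldsymbol{X})}\, L_j(O ; \widehat{\Pb})\right].
\]

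First I would treat the augmentation (second) term. Since each partial derivative $\partial g(\widehat{\boldsymbol{\theta}}_{\ell}(\boldsymbol{X})) / \partial \widehat{\theta}_{\ell, j}(\boldsymbol{X})$ is a function of $\boldsymbol{X}$ alone once we condition on $D^n$, iterated expectations lets me pull it outside the inner conditional expectation given $\boldsymbol{X}$, producing $\mathbb{E}_{\Pb}[\nabla g(\widehat{\boldsymbol{\theta}}_{\ell})^T\, \mathbb{E}_{\Pb}(\boldsymbol{L}(O ; \widehat{\Pb}) \mid \boldsymbol{X})]$ in vector form, with $\boldsymbol{L} = (L_1, \ldots, L_8)^T$. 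Subtracting $\mathcal{L}_g = \mathbb{E}_{\Pb}[g(\boldsymbol{\theta}_{\ell}(\boldsymbol{X}))]$ then gives
\[
\E[\widehat{\mathcal{L}}_g] - \mathcal{L}_g = \mathbb{E}_{\Pb}\left[g(\widehat{\boldsymbol{\theta}}_{\ell}) - g(\boldsymbol{\theta}_{\ell})\right] + \mathbb{E}_{\Pb}\left[\nabla g(\widehat{\boldsymbol{\theta}}_{\ell})^T\, \mathbb{E}_{\Pb}(\boldsymbol{L}(O ; \widehat{\Pb}) \mid \boldsymbol{X})\right].
\]

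The crux is then a multivariate Taylor expansion of $g(\boldsymbol{\theta}_{\ell})$ \emph{about} the point $\widehat{\boldsymbol{\theta}}_{\ell}$, using the Lagrange form of the remainder (valid since $g$ is twice continuously differentiable). This introduces an intermediate point $\boldsymbol{\theta}_{\ell}^*(\boldsymbol{X})$ on the segment between $\boldsymbol{\theta}_{\ell}$ and $\widehat{\boldsymbol{\theta}}_{\ell}$, and after rearranging signs yields
\[
g(\widehat{\boldsymbol{\theta}}_{\ell}) - g(\boldsymbol{\theta}_{\ell}) = \nabla g(\widehat{\boldsymbol{\theta}}_{\ell})^T (\widehat{\boldsymbol{\theta}}_{\ell} - \boldsymbol{\theta}_{\ell}) - \tfrac{1}{2}(\widehat{\boldsymbol{\theta}}_{\ell} - \boldsymbol{\theta}_{\ell})^T \nabla^2 g(\boldsymbol{\theta}_{\ell}^*)(\widehat{\boldsymbol{\theta}}_{\ell} - \boldsymbol{\theta}_{\ell}).
\]

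Finally I would take expectations and combine the linear term $\mathbb{E}_{\Pb}[\nabla g(\widehat{\boldsymbol{\theta}}_{\ell})^T(\widehat{\boldsymbol{\theta}}_{\ell} - \boldsymbol{\theta}_{\ell})]$ with the augmentation term from the preceding display, merging $\widehat{\boldsymbol{\theta}}_{\ell} - \boldsymbol{\theta}_{\ell}$ and $\mathbb{E}_{\Pb}(\boldsymbol{L} \mid \boldsymbol{X})$ inside a common gradient factor, which produces exactly the claimed expression. The main obstacle here is not conceptual but bookkeeping: correctly tracking the sign flip when the expansion point and the evaluation argument are interchanged (the quadratic form is invariant under this swap because $(-v)^T M (-v) = v^T M v$, but the linear term changes sign), and noting that the remainder point $\boldsymbol{\theta}_{\ell}^*$ may depend on $\boldsymbol{X}$---this is harmless since we retain it only inside an expectation over $\boldsymbol{X}$ and never require its explicit form.
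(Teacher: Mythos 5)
Your proof is correct and takes essentially the same route as the paper's: you compute the bias conditional on $D^n$ from the definition of $\widehat{\mathcal{L}}_g$, condition on $\boldsymbol{X}$ to replace $\boldsymbol{L}(O;\widehat{\Pb})$ by $\mathbb{E}_{\Pb}(\boldsymbol{L}(O;\widehat{\Pb})\mid \boldsymbol{X})$, and apply a second-order Taylor expansion of $g$ about $\widehat{\boldsymbol{\theta}}_{\ell}(\boldsymbol{X})$ with Lagrange remainder before merging the linear term with the augmentation term under the common gradient factor. Your sign bookkeeping for expanding about $\widehat{\boldsymbol{\theta}}_{\ell}$ rather than $\boldsymbol{\theta}_{\ell}$ matches the paper's computation exactly.
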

\begin{proof}
By definition of $\widehat{\mathcal{L}}_g$, 
\[
\begin{aligned}
& \, \E[\widehat{\mathcal{L}}_g] - \mathcal{L}_g \\
=&\, \mathbb{E}_{\Pb}\left[g\left(\widehat{\boldsymbol{\theta}}_{\ell}(\boldsymbol{X})\right)+\sum_{j=1}^8 \frac{\partial g\left(\widehat{\boldsymbol{\theta}}_{\ell}(\boldsymbol{X})\right)}{\partial \widehat{\theta}_{\ell, j}(\boldsymbol{X})} L_j(O ; \widehat{\Pb})-g\left(\boldsymbol{\theta}_{\ell}(\boldsymbol{X})\right)\right] \\
=&\,\mathbb{E}_{\Pb}\left[g\left(\widehat{\boldsymbol{\theta}}_{\ell}(\boldsymbol{X})\right)+\sum_{j=1}^8 \frac{\partial g\left(\widehat{\boldsymbol{\theta}}_{\ell}(\boldsymbol{X})\right)}{\partial \widehat{\theta}_{\ell, j}(\boldsymbol{X})} \mathbb{E}_{\Pb}\left(L_j(O ; \widehat{\Pb}) \mid \boldsymbol{X}\right)-g\left(\boldsymbol{\theta}_{\ell}(\boldsymbol{X})\right)\right] \\
=&\,\mathbb{E}_{\Pb}\left[\nabla g\left(\widehat{\boldsymbol{\theta}}_{\ell}(\boldsymbol{X})\right)^T \mathbb{E}_{\Pb}(\boldsymbol{L}(O ; \widehat{\Pb}) \mid \boldsymbol{X})- \left(g\left(\boldsymbol{\theta}_{\ell}(\boldsymbol{X})\right) - g(\widehat{\boldsymbol{\theta}}_{\ell}(\boldsymbol{X}))\right)\right] \\
=&\, \mathbb{E}_{\Pb}\left[\nabla g\left(\widehat{\boldsymbol{\theta}}_{\ell}(\boldsymbol{X})\right)^T \left( \mathbb{E}_{\Pb}[\boldsymbol{L}(O ; \widehat{\Pb}) \mid \boldsymbol{X}] + \widehat{\boldsymbol{\theta}}_\ell(\boldsymbol{X}) -{\boldsymbol{\theta}}_\ell(\boldsymbol{X}) \right)\right] \\
&\, -\frac{1}{2} \mathbb{E}_{\Pb}\left[\left(\widehat{\boldsymbol{\theta}}_{\ell}(\boldsymbol{X})-\boldsymbol{\theta}_{\ell}(\boldsymbol{X})\right)^T \nabla^2 g\left(\boldsymbol{\theta}_{\ell}^*(\boldsymbol{X})\right)\left(\widehat{\boldsymbol{\theta}}_{\ell}(\boldsymbol{X})-\boldsymbol{\theta}_{\ell}(\boldsymbol{X})\right)\right].
\end{aligned}
\]
The second equality follows from conditioning on $\boldsymbol{X}$ and the last equality follows from second-order Taylor expansion.
\end{proof}
For any $(y, a, z) \in\{0,1\}^3$, by conditioning on $(Z, \boldsymbol{X})$ we have
\[
\begin{aligned}
\mathbb{E}_{\Pb}\left(\psi_{ya. z}(O ; \widehat{\Pb}) \mid \boldsymbol{X}\right) & =\E \left[ \frac{\mathds{1}(Z=z)}{\widehat{\lambda}_z(\boldsymbol{X})}\left(\Pb(Y=y, A=a\mid \boldsymbol{X}, Z=z )-\widehat{\pi}_{ya.z}(\boldsymbol{X}) \right) \mid \boldsymbol{X}\right] \\
& =\frac{\lambda_z(\boldsymbol{X})}{\widehat{\lambda}_z(\boldsymbol{X})}\left\{\pi_{y a . z}(\boldsymbol{X})-\widehat{\pi}_{y a . z}(\boldsymbol{X})\right\}.
\end{aligned}
\]
We want to bound each component of $\mathbb{E}_{\Pb}[\boldsymbol{L}(O ; \widehat{\Pb}) \mid \boldsymbol{X}] + \widehat{\boldsymbol{\theta}}_\ell(\boldsymbol{X}) -{\boldsymbol{\theta}}_\ell(\boldsymbol{X})$. We only analyze the first component as an illustration. All other components can be similarly analyzed.
\[
\begin{aligned}
&\, \mathbb{E}_{\Pb}\left(L_1(O ; \widehat{\Pb}) \mid \boldsymbol{X}\right)+\widehat{\theta}_{\ell, 1}(\boldsymbol{X})-\theta_{\ell, 1}(\boldsymbol{X}) \\
=&\, \mathbb{E}_{\Pb}\left(\psi_{11.1}(O ; \widehat{\Pb})+\psi_{00.0}(O ; \widehat{\Pb}) \mid \boldsymbol{X}\right)+\left\{\widehat{\pi}_{11.1}(\boldsymbol{X})-\pi_{11.1}(\boldsymbol{X})\right\}+\left\{\widehat{\pi}_{00.0}(\boldsymbol{X})-\pi_{00.0}(\boldsymbol{X})\right\} \\
=& \, \left\{1-\frac{\lambda_1(\boldsymbol{X})}{\widehat{\lambda}_1(\boldsymbol{X})}\right\}\left\{\widehat{\pi}_{11.1}(\boldsymbol{X})-\pi_{11.1}(\boldsymbol{X})\right\}+\left\{1-\frac{\lambda_0(\boldsymbol{X})}{\widehat{\lambda}_0(\boldsymbol{X})}\right\}\left\{\widehat{\pi}_{00.0}(\boldsymbol{X})-\pi_{00.0}(\boldsymbol{X})\right\} .
\end{aligned}
\]
Note that 
\[
1-\frac{\lambda_1(\boldsymbol{X})}{\widehat{\lambda}_1(\boldsymbol{X})}=\frac{\widehat{\lambda}_1(\boldsymbol{X})-\lambda_1(\boldsymbol{X})}{\widehat{\lambda}_1(\boldsymbol{X})}, 1-\frac{\lambda_0(\boldsymbol{X})}{\widehat{\lambda}_0(\boldsymbol{X})}=\frac{\lambda_1(\boldsymbol{X})-\widehat{\lambda}_1(\boldsymbol{X})}{1-\widehat{\lambda}_1(\boldsymbol{X})}
\]
By positivity assumption we have
\[
\begin{aligned}
&  \left|\mathbb{E}_{\Pb}\left(L_1(O ; \widehat{\Pb}) \mid \boldsymbol{X}\right)+\widehat{\theta}_{\ell, 1}(\boldsymbol{X})-\theta_{\ell, 1}(\boldsymbol{X}) \right|  \\
& \leq \frac{1}{\epsilon}\left|\widehat{\lambda}_1(\boldsymbol{X})-\lambda_1(\boldsymbol{X})\right| \cdot\left\{\left|\widehat{\pi}_{11.1}(\boldsymbol{X})-\pi_{11.1}(\boldsymbol{X})\right|+\left|\widehat{\pi}_{00.0}(\boldsymbol{X})-\pi_{00.0}(\boldsymbol{X})\right|\right\}
\end{aligned}
\]
Similar inequalities can be obtained for $2\leq j \leq 8$. Hence, by Hölder's inequality,
\[
\begin{aligned}
&\, \left|\mathbb{E}_{\Pb}\left(\nabla g\left(\widehat{\boldsymbol{\theta}}_{\ell}(\boldsymbol{X})\right)^T\left\{\mathbb{E}_\Pb(\boldsymbol{L}(O ; \widehat{\Pb}) \mid \boldsymbol{X})+\widehat{\boldsymbol{\theta}}_{\ell}(\boldsymbol{X})-\boldsymbol{\theta}_{\ell}(\boldsymbol{X})\right\}\right)\right| \\
\leq & \, \mathbb{E}_\Pb\left[\left\|\nabla g\left(\widehat{\boldsymbol{\theta}}_{\ell}(\boldsymbol{X})\right)\right\|_{\infty}\left\|\mathbb{E}_\Pb(\boldsymbol{L}(O ; \widehat{\Pb}) \mid \boldsymbol{X})+\widehat{\boldsymbol{\theta}}_{\ell}(\boldsymbol{X})-\boldsymbol{\theta}_{\ell}(\boldsymbol{X})\right\|_1\right] \\
\leq & \, \frac{C_1}{\epsilon} \mathbb{E}_\Pb\left[\left|\widehat{\lambda}_1(\boldsymbol{X})-\lambda_1(\boldsymbol{X})\right| \sum_{(y, a, z) \in \mathcal{R}}\left|\widehat{\pi}_{y a . z}(\boldsymbol{X})-\pi_{y a . z}(\boldsymbol{X})\right|\right],
\end{aligned}
\]
where $\mathcal{R}$ is a multiset of elements in $\{0,1\}^3$ such that each $(y, a, z) \in\{0,1\}^3$ appears in $\mathcal{R}$ as many times as $\pi_{y a . z}(\boldsymbol{X})$ appears in $\boldsymbol{\theta}_\ell(\boldsymbol{X})$. By the triangle inequality and Cauchy-Schwarz's inequality
\[
\begin{aligned}
&\, \left|\mathbb{E}_{\Pb}\left(\nabla g\left(\widehat{\boldsymbol{\theta}}_{\ell}(\boldsymbol{X})\right)^T\left\{\mathbb{E}_\Pb(\boldsymbol{L}(O ; \widehat{\Pb}) \mid \boldsymbol{X})+\widehat{\boldsymbol{\theta}}_{\ell}(\boldsymbol{X})-\boldsymbol{\theta}_{\ell}(\boldsymbol{X})\right\}\right)\right|    \\
\lesssim &\, C_1 \left\|\widehat{\lambda}_1-\lambda_1\right\| \sum_{(y, a, z) \in \mathcal{R}}\left\|\widehat{\pi}_{y a . z}-\pi_{y a . z}\right\| \\
\lesssim &\, C_1\left\|\widehat{\lambda}_1-\lambda_1\right\|\left(\max _{y, a, z \in\{0,1\}}\left\|\widehat{\pi}_{y a . z}-\pi_{y a . z}\right\|\right)
\end{aligned}
\]
It remains to bound the second derivative term in the asymptotic bias expression derived in Proposition \ref{prop:dr-smooth-bias}. By properties of the operator norm we have
\[
\begin{aligned}
& \,\left|\mathbb{E}_\Pb\left(\left(\widehat{\boldsymbol{\theta}}_{\ell}(\boldsymbol{X})-\boldsymbol{\theta}_{\ell}(\boldsymbol{X})\right)^T \nabla^2 g\left(\boldsymbol{\theta}_{\ell}^*(\boldsymbol{X})\right)\left(\widehat{\boldsymbol{\theta}}_{\ell}(\boldsymbol{X})-\boldsymbol{\theta}_{\ell}(\boldsymbol{X})\right)\right)\right| \\
\leq &\, \left\{\sup _{\boldsymbol{\theta}}\| \nabla^2 g(\boldsymbol{\theta}) \|\right\} \mathbb{E}_\Pb\left(\left\|\widehat{\boldsymbol{\theta}}_{\ell}(\boldsymbol{X})-\boldsymbol{\theta}_{\ell}(\boldsymbol{X})\right\|_2^2\right)  \\
\lesssim &\, C_2 \max _{y, a, z \in\{0,1\}} \left\|\widehat{\pi}_{ {ya.z }}-\pi_{ {ya.z }}\right\|^2
\end{aligned}
\]
The bound on conditional bias is established. For a general function $f$ on the sample $O$, we have
\begin{equation}\label{eq:decompose-error}
    \Pb_n[\widehat{f}] - \E[f] = (\Pb_n - \E) (\widehat{f} - f) + (\Pb_n -\E)(f) + \E(\widehat{f}-f)
\end{equation}
We apply the decomposition of error \eqref{eq:decompose-error} to
\[
f(O) =  \dot{\mathcal{L}}_g(O ; \Pb)  + \mathcal{L}_g( \Pb) =g\left(\boldsymbol{\theta}_{\ell}(\boldsymbol{X})\right)+\sum_{j=1}^8 \frac{\partial g\left(\boldsymbol{\theta}_{\ell}(\boldsymbol{X})\right)}{\partial \theta_{\ell, j}(\boldsymbol{X})} L_j(O ; \Pb),
\]
Note that $\Pb_n[\widehat{f}]$ is exactly the robust estimator. 
By Lemma 2 in \cite{kennedy2020b}, if $ \|\widehat{f}-f\|_2 = o_{\Pb}(1)$ we have
\[
(\Pb_n - \E) (\widehat{f} - f) = o_{\Pb}(n^{-1/2}).
\]
Also note $\E(\widehat{f}-f)$ is equal to the conditional bias and under the convergence rate assumption we have $\E(\widehat{f}-f) = o_{\Pb}(n^{-1/2})$. Finally note that $f-\E f = \dot{\mathcal{L}}_g(O ; \Pb)$, the proof is completed.

\section{Proofs of Results in Section~\ref{sec:continuous}}
\label{sec:appendix-continuous}

\subsection{Proof of Theorem~\ref{thm:continuous-cdf}}

First by Fubini's theorem and Jensen's inequality we have
\[
\int_0^1\mathbb{E} \left[\min _{1 \leq j \leq 8} {\theta}_{u, j}(t,\boldsymbol{X}) \right] d t = \mathbb{E} \left[ \int_0^1 \min _{1 \leq j \leq 8} {\theta}_{u, j}(t,\boldsymbol{X}) dt\right] \leq \mathbb{E} \left[  \min _{1 \leq j \leq 8} 
 \int_0^1{\theta}_{u, j}(t,\boldsymbol{X}) dt\right].
\]
The proof will be completed if we can show 
\[
 \int_0^1{\theta}_{u, j}(t,\boldsymbol{X}) dt = \widetilde{{\theta}}_{u, j}(\boldsymbol{X}).
\]
Since $\theta_{u,j}$'s are linear combinations of $\pi_{1a.z}$ and $\pi_{0a.z}$, we only need to show for any $a,z \in \{0,1\}$,
\[
\int_0^1\pi_{1a.z}(t,\boldsymbol{X}) dt = \widetilde{{\pi}}_{1a.z}(\boldsymbol{X}),
\]
\[
\int_0^1\pi_{0a.z}(t,\boldsymbol{X}) dt = \widetilde{{\pi}}_{0a.z}(\boldsymbol{X}).
\]
We condition on $W$, by property of conditional expectation,
\begin{equation*}
    \begin{aligned}
        &\, \widetilde{{\pi}}_{1a.z}(\boldsymbol{X})\\
        = &\, \Pb(Y \leq W, A=a \mid Z=z,\boldsymbol{X}) \\
        = &\, \int_0^1 \Pb(Y \leq t, A=a \mid Z=z,\boldsymbol{X},W=t) p_w(t \mid Z=z,\boldsymbol{X}) dt,
    \end{aligned}
\end{equation*}
where $p_w$ denotes the density of $W$. Since $W$ is independent of the data generating process of $O=(\boldsymbol{X},Z,A,Y)$, we have
\[
\Pb(Y \leq t, A=a \mid Z=z,\boldsymbol{X},W=t) = \Pb(Y \leq t, A=a \mid Z=z,\boldsymbol{X}),
\]
\[
p_w(t \mid Z=z,\boldsymbol{X}) = p_w(t) = \mathds{1}(0<t<1).
\]
Hence we conclude
\[
\int_0^1 \Pb(Y \leq t, A=a \mid Z=z,\boldsymbol{X},W=t) p_w(t \mid Z=z,\boldsymbol{X}) dt = \int_0^1 \Pb(Y \leq t, A=a \mid Z =z,\boldsymbol{X}) dt.
\]
Note that by definition $\pi_{1a.z}(t,\boldsymbol{X}) = \Pb(Y \leq t, A=a\mid\boldsymbol{X}, Z=z)$, which implies
\[
\int_0^1\pi_{1a.z}(t,\boldsymbol{X}) dt = \widetilde{{\pi}}_{1a.z}(\boldsymbol{X}).
\]
We can similarly prove
\[
\int_0^1\pi_{0a.z}(t,\boldsymbol{X}) dt = \widetilde{{\pi}}_{0a.z}(\boldsymbol{X}).
\]
which completes the proof for the first inequality in Theorem \ref{thm:continuous-cdf}. The second inequality can be proved by the same arguments. 

\subsection{Basic Efficiency Result}
\begin{proposition}\label{prop:simp}
  Suppose we observe $n$ iid copies of $T \sim P$, with $T \in
  [0,1]$. Letting $\mu = \mathbb{E}_P(T)$,
  $\mathrm{Var}_P(T) = \sigma^2$, construct $m$ estimates of $\mu$ by
  sampling $m \times n$ independent $\mathrm{Unif}(0,1)$ variates,
  $\{W_i^{(j)}\}_{i=1,\ldots,n}^{j=1, \ldots, m}$, and estimating
  $\widehat{\mu}_j = \mathbb{P}_n[T > W^{(j)}] = \frac{1}{n}
  \sum_{i=1}^n \mathds{1}(T_i > W_i^{(j)})$. Then the estimator that
  averages these $m$ estimates,
  \[\widehat{\mu} = \frac{1}{m} \sum_{j=1}^m \widehat{\mu}_j,\]
  is unbiased and has variance
  $\frac{1}{n}\left(\sigma^2 + \frac{1}{m}\left\{\mu(1-\mu) - \sigma^2
    \right\}\right) \overset{m \to \infty}{\to} \frac{\sigma^2}{n}$.
\end{proposition}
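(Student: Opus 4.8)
The plan is to reduce everything to moment computations for the indicator variables $B_i^{(j)} = \mathds{1}(T_i > W_i^{(j)})$, exploiting the conditional structure induced by the shared data point $T_i$ across the $m$ auxiliary draws. I would write $\widehat{\mu} = \frac{1}{mn}\sum_{j=1}^m\sum_{i=1}^n B_i^{(j)}$ and work directly with this double sum.

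First, for unbiasedness, I would use the key identity that, conditional on $T_i$, the auxiliary variate $W_i^{(j)} \sim \mathrm{Uniform}(0,1)$ independent of the data gives $\E[B_i^{(j)} \mid T_i] = \Pb(W_i^{(j)} < T_i \mid T_i) = T_i$, since $T_i \in [0,1]$. Taking expectations yields $\E[B_i^{(j)}] = \mu$ for every $(i,j)$, so $\E[\widehat{\mu}] = \mu$ by linearity.

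The substance is the variance, where I would exploit the fact that each $B_i^{(j)}$ is marginally $\mathrm{Bernoulli}(\mu)$---giving $\mathrm{Var}(B_i^{(j)}) = \mu(1-\mu)$---while the covariance structure encodes the two sources of randomness. The crucial computation is the covariance between two indicators sharing the same $T_i$ but using distinct auxiliary draws: conditioning on $T_i$, the variates $W_i^{(j)}$ and $W_i^{(j')}$ are independent, so $\E[B_i^{(j)}B_i^{(j')} \mid T_i] = T_i^2$, whence $\mathrm{Cov}(B_i^{(j)}, B_i^{(j')}) = \E[T_i^2] - \mu^2 = \sigma^2$ for $j \neq j'$. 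All remaining cross-terms (those with $i \neq i'$) vanish by full independence across distinct data points and auxiliary draws. Collecting contributions, there are $mn$ diagonal variance terms and $n \cdot m(m-1)$ nonzero off-diagonal covariance terms, yielding $\mathrm{Var}(\sum_{i,j} B_i^{(j)}) = mn\,\mu(1-\mu) + nm(m-1)\sigma^2$; dividing by $(mn)^2$ and regrouping produces the claimed expression $\frac{1}{n}(\sigma^2 + \frac{1}{m}\{\mu(1-\mu) - \sigma^2\})$, with the limit as $m \to \infty$ immediate.

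I expect the only subtle point to be correctly identifying the covariance structure---in particular recognizing that $B_i^{(j)}$ and $B_i^{(j')}$ are conditionally independent given $T_i$ yet marginally positively correlated, with the correlation transmitted entirely through $\mathrm{Var}(T_i) = \sigma^2$. Once this is pinned down, the term-counting and algebraic simplification are routine.
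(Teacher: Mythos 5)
Your proof is correct, and it reaches the same two key quantities as the paper---the marginal Bernoulli variance $\mu(1-\mu)$ and the cross-replicate covariance $\sigma^2$---but by a genuinely different route for the covariance. The paper computes $\mathrm{Cov}\bigl(\mathds{1}(T > W^{(1)}),\, \mathds{1}(T > W^{(2)})\bigr)$ by writing the joint success probability as $P[T > \max\{W^{(1)}, W^{(2)}\}]$, then using the density $2v\,\mathds{1}(v \in (0,1))$ of the maximum of two uniforms and a substitution $u = v^2$ to show this equals $\E_P(T^2)$. You instead condition on $T_i$: since $W_i^{(j)}$ and $W_i^{(j')}$ are conditionally independent uniforms given $T_i$, you get $\E[B_i^{(j)} B_i^{(j')} \mid T_i] = T_i^2$ immediately, and the tower property yields the covariance $\E(T^2) - \mu^2 = \sigma^2$ in one line. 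Your conditioning argument is more elementary and arguably more transparent about \emph{why} the correlation equals $\sigma^2$ (it is transmitted entirely through the shared $T_i$), whereas the paper's max-of-uniforms calculation is a self-contained distributional identity. There is also a mild structural difference: the paper decomposes $\mathrm{Var}(\widehat{\mu})$ hierarchically as $\frac{1}{m}\mathrm{Var}(\widehat{\mu}_1) + \bigl(1 - \frac{1}{m}\bigr)\mathrm{Cov}(\widehat{\mu}_1, \widehat{\mu}_2)$, while you expand the full double sum over the $mn$ indicators and count the $mn$ diagonal terms and $nm(m-1)$ nonzero ordered off-diagonal terms directly; these are equivalent bookkeeping schemes and both produce $\frac{1}{n}\bigl(\sigma^2 + \frac{1}{m}\{\mu(1-\mu) - \sigma^2\}\bigr)$ after the same algebra. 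All your intermediate claims check out, including the term count and the final simplification.
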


\begin{proof}
  Observe that
  $\mathbb{E}_P(\widehat{\mu}_j) = P[T > W] = \int_0^1 P[T > w] \, dw
  = \mathbb{E}_P(T) = \mu$, for $j = 1, \ldots, m$, so $\widehat{\mu}$
  is unbiased. Further,
  $\mathrm{Var}_P(\widehat{\mu}_j) =
  \frac{1}{n}\mathrm{Var}_P(\mathds{1}(T > W)) = \frac{1}{n}
  \mu(1-\mu)$, so by identical distribution of each $\widehat{\mu}_j$,
  \begin{align*}
    \mathrm{Var}_P(\widehat{\mu})
    &= \frac{1}{m} \mathrm{Var}_P(\widehat{\mu}_1)
      + \left(1 - \frac{1}{m}\right)
      \mathrm{Cov}_P(\widehat{\mu}_1, \widehat{\mu}_2) \\
    &= \frac{1}{nm} \mu(1-\mu) + \left(1 - \frac{1}{m}\right)
      \frac{1}{n}\mathrm{Cov}_P(\mathds{1}(T > W^{(1)}),
      \mathds{1}(T > W^{(2)}))
  \end{align*}
  using the fact that
  $\{W_{i}^{(j)}\} \ind (T_1, \ldots, T_n)$. Next, see that
  \[
    \mathrm{Cov}_P(\mathds{1}(T > W^{(1)}), \mathds{1}(T > W^{(2)})) =
    P[T > \max{\{W^{(1)}, W^{(1)}\}}] - \mu^2,
  \]
  and finally, since $V = \max{\{W^{(1)}, W^{(1)}\}}$ has density
  $2v\mathds{1}(v \in (0,1))$,
  \[P[T > V] = \int_0^1 2v P[T > v] \, dv = \int_0^1 P[T > \sqrt{u}]
    \, du = \int_0^1 P[T^2 > u] \, du = \mathbb{E}_P(T^2) = \mu^2 +
    \sigma^2,\] where we made the substitution $u = v^2$.
  
\end{proof}

\section{Details of the Simulation Study}
\label{sec:appendix-simulation}

Consider the following simplified setting: with a single covariate $\boldsymbol{X} \equiv X \sim \mathrm{Uniform}(0,1)$, we generated $Z \sim \mathrm{Bernoulli}(\lambda_1(X))$ based on $\lambda_1(X) = \min{\{\max{\{X^2, 0.10\}}, 0.90\}}$. Next, we assumed perfect compliance, i.e., $A \equiv Z$, then generated $U \sim \mathrm{Uniform}(0,1)$, independent of $(X, A)$, and $Y \mid X, A, U \sim \mathrm{Bernoulli}\left(U \left\{A(1 - X) + (1 - A)X\right\}\right)$.
This simplified setup, in which $X$ is sufficient for controlling for $A-Y$ confounding, results in 
\begin{align*}
\min_{j \neq d_{\ell}(X)}\left\{\theta_{\ell, d_{\ell}(X)}(X) - \theta_{\ell, j}(X)\right\} 
&= \min{\left\{\mu_0(X), 1 - \mu_0(X), \mu_1(X), 1 - \mu_1(X)\right\}} \\
&= \frac{1}{2}\min{\left\{X, 1 - X\right\}}, 
\end{align*}
where $\mu_a(X) = P[Y = 1 \mid X, A = a]$.
In turn, this guarantees that Assumption~\ref{ass:margin} holds with $\alpha = 1$, since $\left\{X, 1 - X\right\} \sim \mathrm{Uniform}\left(0,\frac{1}{2}\right)$: 
\[P\left[\frac{1}{2}\min{\left\{X, 1 - X\right\}} \leq t \right] = \begin{cases}
0 & \text{ if } t < 0 \\
4 t & \text{ if } 0 \leq t \leq \frac{1}{4} \\
1 & \text{ if } t > \frac{1}{4}
\end{cases}
\] which is at most $4t$ for all $t \geq 0$.  
Moreover, the conditional bounds satisfy $\gamma_{\ell}(X) = \gamma_{u}(X) = \frac{1}{2} - X$, and thus the marginal bounds are $\mathcal{L}(\mathbb{P}) = \mathcal{U}(\mathbb{P}) = 0$. In this scenario, the probabilities $\left\{\pi_{ya.z}(X):y,a,z \in \{0,1\}\right\}$ are given by
\[\pi_{00.1}(X) = \pi_{10.1}(X) = \pi_{01.0}(X) = \pi_{11.0}(X) = 0,\]
\[\pi_{00.0}(X) = 1 - \frac{1}{2}X, \pi_{10.0}(X) = \frac{1}{2}X, \pi_{01.1}(X) = 1 - \frac{1}{2}(1 - X), \pi_{11.1}(X) = \frac{1}{2}(1 - X).\]
To ``estimate'' the nuisance functions, we define $\widehat{\lambda}_{1}(X) = \mathrm{expit}\left(\mathrm{logit}\left(\lambda_{1}(X)\right) + \epsilon_{\lambda}\right)$, $\widehat{\pi}_{ya.z} \equiv 0$ when $a \neq z$, and otherwise $\widehat{\pi}_{ya.z}(X) = \mathrm{expit}\left(\mathrm{logit}\left(\pi_{ya.z}(X)\right) + \epsilon_{ya.z}\right)$, where \[\epsilon_\lambda, \epsilon_{00.0}, \epsilon_{10.0}, \epsilon_{01.1}, \epsilon_{11.1} \overset{\mathrm{iid}}{\sim} \mathcal{N}(h \, n^{-r}, h^2\,  n^{-2r}),\]
where we set $h = 2.25$, and vary the parameter $r \in (0,0.5]$ in different scenarios. These choices guarantee that $\left\lVert \widehat{\lambda}_{1} - \lambda_1\right\rVert = O(n^{-r})$ and $\left\lVert \widehat{\pi}_{ya.z} - \pi_{ya.z}\right\rVert = O(n^{-r})$ for $a = z$---as a consequence we can study the performance of the proposed estimators under different nuisance estimation convergence rates known to be $O(n^{-r})$.

Our simulation study proceeded as follows: for $n \in \{500, 1000, 5000\}$, we generated data and computed nuisance estimates $(\widehat{\lambda}_1, \{\widehat{\pi}_{ya.z}: y,a,z \in \{0,1\}\})$ as described above, varying $r \in \{0.10 + 0.05k: k \in \{0, \ldots, 8\}\}$. We then computed the direct lower bound estimator $\widehat{\mathcal{L}}$ as in Section~\ref{sec:binary}, and the log-sum-exp smooth approximation-based estimator $\widehat{\mathcal{L}}_{g_t}$ as described in Section~\ref{sec:smooth}, taking $t = 2 h n^r$. Finally, we computed a plug-in estimator $\mathbb{P}_n\left[\max_{1\leq j \leq 8} \widehat{\theta}_{\ell, j}(X)\right]$, whose error is expected to be first order, i.e., on the order of the nuisance error $O(n^{-r})$. Each scenario was replicated $5,000$ times, and root-mean-square error (RMSE) of each estimator relative to the lower true bound $\mathcal{L}(\mathbb{P}) = 0$ was computed. Results are shown in Figure~\ref{fig:sim-results}.

\begin{figure}[ht]
  \centering
  \includegraphics[width = \linewidth]{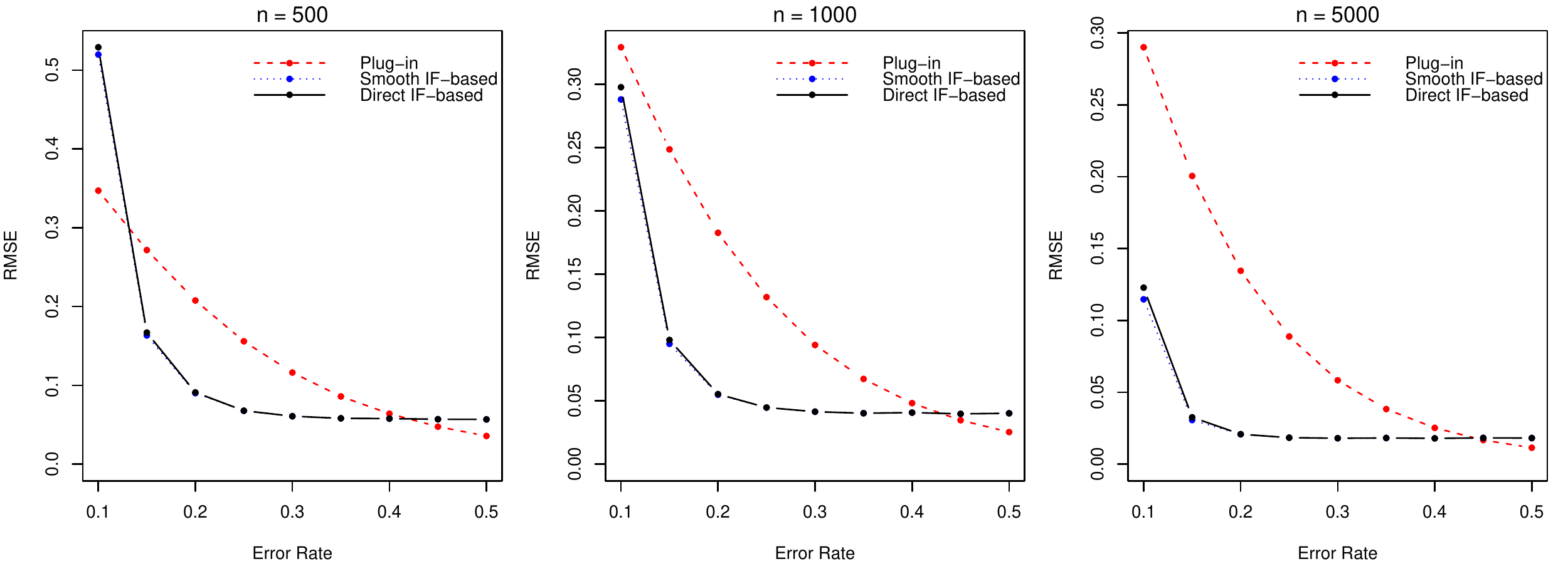}
  \caption{Root-mean-square error versus nuisance function error rate parameter $r$} \label{fig:sim-results}
\end{figure}

According to the results in Figure~\ref{fig:sim-results}, the proposed estimators $\widehat{\mathcal{L}}$ and $\widehat{\mathcal{L}}_{g_t}$ tend to outperform the plug-in estimator when nuisance functions are estimated slower than the parametric rate of $O(n^{-1/2})$, especially for larger sample sizes. Indeed, as predicted by our theoretical results, the robust estimators $\widehat{\mathcal{L}}$ and $\widehat{\mathcal{L}}_{g_t}$ nearly attained the optimal performance of the plug-in estimator (where nuisance error $\asymp n^{-1/2})$ when nuisance RMSE was only on the order $O(n^{-1/4})$, following which performance plateaued. As nuisance error on the order of $O(n^{-1/2})$ would be anticipated only in the unlikely scenario of a correctly specified parametric model, the results support our recommendation to use the proposed estimators in practice. Note that in this simulation, the \textit{ad hoc} choice for the tuning parameter $t$ in $\widehat{\mathcal{L}}_{g_t}$ resulted in very little difference compared to $\widehat{\mathcal{L}}$, though other choices might be considered.

In this simulation study, we considered only the case $\alpha = 1$ from Assumption~\ref{ass:margin}. In future work, we will consider simulation settings in which we can vary this margin parameter, and pursue theoretical development of optimal choices (e.g., in terms of minimizing MSE) for the tuning parameter $t$ in $\widehat{\mathcal{L}}_{g_t}$.

\section{Code for Reproducing Motivating Illustration}

\begin{verbatim}
library(rpart)

X1.p <- 0.7

beta.AT.0 <- function(x1, x2) {
  plogis(qlogis(0.20))
}
beta.AT.d <- function(x1, x2) {
  plogis(qlogis(0.35))
}
beta.NT.0 <- function(x1, x2) {
  plogis(qlogis(0.90))
}
beta.NT.d <- function(x1, x2) {
  plogis(qlogis(0.95))
}
beta.DE.0 <- function(x1, x2) {
  plogis(qlogis(0.65))
}
beta.DE.d <- function(x1, x2) {
  plogis(qlogis(0.725))
}
beta.CO.0 <- function(x1, x2) {
  plogis(qlogis(0.25))
}
beta.CO.d <- function(x1, x2) {
  plogis(qlogis(0.375))
}


## we work with an 8-vector pi = {pi_{ya.z} : y, a, z \in {0,1}}
## specifically, take the following order:
## pi = (pi_{00.0}, pi_{01.0}, pi_{10.0}, pi_{11.0}, 
##       pi_{00.1}, pi_{01.1}, pi_{10.1}, pi_{11.1})

## lower bound functions
p.l1 <- function(pi) { pi[8] + pi[1] - 1 }
p.l2 <- function(pi) { pi[4] + pi[5] - 1 }
p.l3 <- function(pi) { -pi[6] - pi[7] }
p.l4 <- function(pi) { -pi[2] - pi[3] }
p.l5 <- function(pi) { pi[4] - pi[8] - pi[7] - pi[2] - pi[3] }
p.l6 <- function(pi) { pi[8] - pi[4] - pi[3] - pi[6] - pi[7] }
p.l7 <- function(pi) { pi[5] - pi[6] - pi[7] - pi[2] - pi[1] }
p.l8 <- function(pi) { pi[1] - pi[2] - pi[3] - pi[6] - pi[5] }
gamma.l <- function(pi) { pmax(p.l1(pi), p.l2(pi), p.l3(pi), p.l4(pi),
                               p.l5(pi), p.l6(pi), p.l7(pi), p.l8(pi)) }

## upper bound functions
p.u1 <- function(pi) { 1 - pi[6] - pi[3] }
p.u2 <- function(pi) { 1 - pi[2] - pi[7] }
p.u3 <- function(pi) { pi[8] + pi[5] }
p.u4 <- function(pi) { pi[4] + pi[1] }
p.u5 <- function(pi) { -pi[2] + pi[6] + pi[5] + pi[4] + pi[1] }
p.u6 <- function(pi) { -pi[6] + pi[2] + pi[1] + pi[8] + pi[5] }
p.u7 <- function(pi) { -pi[7] + pi[8] + pi[5] + pi[4] + pi[3] }
p.u8 <- function(pi) { -pi[3] + pi[4] + pi[1] + pi[8] + pi[7] }
gamma.u <- function(pi) { pmin(p.u1(pi), p.u2(pi), p.u3(pi), p.u4(pi),
                               p.u5(pi), p.u6(pi), p.u7(pi), p.u8(pi)) }

set.seed(476)
N <- 1000000
X1 <- rbinom(N, 1, X1.p)
X2 <- runif(N, -1, 1)
cmp <- factor(ifelse(X1 == 0 & X2 >=-0.5 & X2 <= 0.5, "DE", 
                     ifelse(X2 >= 0.98, "AT",
                            ifelse(X2 <= -0.99, "NT", "CO"))))

ATE.true <- (1 - X1.p)*0.5*0.075 + 0.01 * 0.05 + 0.02 * 0.15 +
  (1 - 0.02 - (1 - X1.p)*0.5)*0.125

## covariate-adjusted BP bounds
pi00.0 <- ((cmp == "NT") * (1 - beta.NT.0(X1, X2)) + 
             (cmp == "CO") * (1 - beta.CO.0(X1, X2))) 
pi01.0 <- ((cmp == "AT") * (1 - beta.AT.d(X1, X2)) + 
             (cmp == "DE") * (1 - beta.DE.d(X1, X2))) 
pi10.0 <- ((cmp == "NT") * beta.NT.0(X1, X2) + 
             (cmp == "CO") * beta.CO.0(X1, X2)) 
pi11.0 <- ((cmp == "AT") * beta.AT.d(X1, X2) + 
             (cmp == "DE") * beta.DE.d(X1, X2)) 
pi00.1 <- ((cmp == "NT") * (1 - beta.NT.0(X1, X2)) + 
             (cmp == "DE") * (1 - beta.DE.0(X1, X2))) 
pi01.1 <- ((cmp == "AT") * (1 - beta.AT.d(X1, X2)) + 
             (cmp == "CO") * (1 - beta.CO.d(X1, X2))) 
pi10.1 <- ((cmp == "NT") * beta.NT.0(X1, X2) + 
             (cmp == "DE") * beta.DE.0(X1, X2)) 
pi11.1 <- ((cmp == "AT") * beta.AT.d(X1, X2) + 
             (cmp == "CO") * beta.CO.d(X1, X2))

pi <- cbind(pi00.0, pi01.0, pi10.0, pi11.0,
            pi00.1, pi01.1, pi10.1, pi11.1)

cov.lower <- apply(pi, 1, gamma.l)
cov.upper <- apply(pi, 1, gamma.u)

pi.mean <- c(mean(pi00.0), mean(pi01.0), mean(pi10.0), mean(pi11.0),
             mean(pi00.1), mean(pi01.1), mean(pi10.1), mean(pi11.1))
mean.lower <- gamma.l(pi.mean)
mean.upper <- gamma.u(pi.mean)

c(mean(cov.lower), mean(cov.upper)) ## theoretical covariate-assisted BP bounds
c(mean.lower, mean.upper) ## theoretical covariate-agnostic BP bounds

par(mar = c(3,4,1,1))
plot(NULL, ylim = c(ATE.true - 0.24,ATE.true + 0.21), xlim = c(0.2,0.8), 
     xlab = "", ylab = "Average Treatment Effect", xaxt = 'n')
axis(1, at = c(0.35, 0.65), labels = c("Covariate-Agnostic", "Covariate-Adjusted"))
abline(h = ATE.true, col = 'red', lty = 'dashed')
abline(h = 0)
arrows(x0=0.33, y0=mean.lower, x1=0.33, y1=mean.upper, 
       code=3, angle=90, length=0.05, lwd=2,col = 'blue')
arrows(x0=0.63, y0=mean(cov.lower)-0.0007, x1=0.63, y1=mean(cov.upper)+0.0007, 
       code=3, angle=90, length=0.05, lwd=2, col = 'blue')

## Simulated data
set.seed(956)
n <- 5000

X1 <- rbinom(n, 1, X1.p)
X2 <- runif(n, -1, 1)
cmp <- factor(ifelse(X1 == 0 & X2 >=-0.5 & X2 <= 0.5, "DE", "CO"))

Z <- rbinom(n, 1, 0.5)
A <- ifelse(cmp == "NT", 0,
            ifelse(cmp == "AT", 1,
                   ifelse(cmp == "CO", Z, 1 - Z)))
Y.0 <- ifelse(cmp == "NT", rbinom(n, 1, beta.NT.0(X1, X2)),
              ifelse(cmp == "AT", rbinom(n, 1, beta.AT.0(X1, X2)),
                     ifelse(cmp == "CO", rbinom(n, 1, beta.CO.0(X1, X2)),
                            rbinom(n, 1, beta.DE.0(X1, X2)))))
Y.1 <- ifelse(cmp == "NT", rbinom(n, 1, beta.NT.d(X1, X2)),
              ifelse(cmp == "AT", rbinom(n, 1, beta.AT.d(X1, X2)),
                     ifelse(cmp == "CO", rbinom(n, 1, beta.CO.d(X1, X2)),
                            rbinom(n, 1, beta.DE.d(X1, X2)))))
Y <- A*Y.1 + (1 - A)*Y.0

dat <- cbind.data.frame(X1, X2, Z, A, Y)
dat$Y.A <- factor(dat$Y):factor(dat$A)

## Estimated covariate-agnostic BP-bounds
## compute estimated probabilities
pi.hat <- c(mean((1 - dat$A[dat$Z == 0]) * (1 - dat$Y[dat$Z == 0])),
            mean(dat$A[dat$Z == 0] * (1 - dat$Y[dat$Z == 0])),
            mean((1 - dat$A[dat$Z == 0]) * dat$Y[dat$Z == 0]),
            mean(dat$A[dat$Z == 0] * dat$Y[dat$Z == 0]),
            mean((1 - dat$A[dat$Z == 1]) * (1 - dat$Y[dat$Z == 1])),
            mean(dat$A[dat$Z == 1] * (1 - dat$Y[dat$Z == 1])),
            mean((1 - dat$A[dat$Z == 1]) * dat$Y[dat$Z == 1]),
            mean(dat$A[dat$Z == 1] * dat$Y[dat$Z == 1]))

## natural bounds
round(c(p.l1(pi.hat), p.u1(pi.hat)), 3)

## Balke & Pearl bounds
round(c(gamma.l(pi.hat), gamma.u(pi.hat)), 3)

## bootstrap CI's
B <- 1000
boot.res <- matrix(NA, nrow = B, ncol = 2)
for (b in 1:B) {
  dat.b <- dat[sample(1:n, n, replace = T),]
  pi.hat.b <- c(mean((1 - dat.b$A[dat.b$Z == 0]) * (1 - dat.b$Y[dat.b$Z == 0])),
                mean(dat.b$A[dat.b$Z == 0] * (1 - dat.b$Y[dat.b$Z == 0])),
                mean((1 - dat.b$A[dat.b$Z == 0]) * dat.b$Y[dat.b$Z == 0]),
                mean(dat.b$A[dat.b$Z == 0] * dat.b$Y[dat.b$Z == 0]),
                mean((1 - dat.b$A[dat.b$Z == 1]) * (1 - dat.b$Y[dat.b$Z == 1])),
                mean(dat.b$A[dat.b$Z == 1] * (1 - dat.b$Y[dat.b$Z == 1])),
                mean((1 - dat.b$A[dat.b$Z == 1]) * dat.b$Y[dat.b$Z == 1]),
                mean(dat.b$A[dat.b$Z == 1] * dat.b$Y[dat.b$Z == 1]))
  boot.res[b,] <- c(gamma.l(pi.hat.b), gamma.u(pi.hat.b))
}

arrows(x0=0.37, y0=quantile(boot.res[,1],0.025), 
       x1=0.37, y1=quantile(boot.res[,2],0.975), 
       code=3, angle=90, length=0.05, lwd=2,col = 'darkgreen')

## Estimated covariate-assisted BP-bounds
M <- 10 ## number of data splits
test.indices <- list()
remaining <- 1:n
for (m in 1:(M-1)) {
  test.indices[[m]] <- sample(remaining, floor(n/M), replace = F)
  remaining <- remaining[! (remaining %in% test.indices[[m]])]
}
test.indices[[M]] <- remaining

analysis <- function(dat.train, dat.test) {
  
  pi.ya.0.rf <- rpart(Y.A ~ X1 + X2, data = dat.train[dat.train$Z==0, ])
  dat.test <- cbind.data.frame(dat.test, predict(pi.ya.0.rf, type = 'prob',
                                                 newdata = dat.test))
  ## compile pi.hat_{ya.0} estimates
  colnames(dat.test)[(ncol(dat.test) - 3):ncol(dat.test)] <-
    c("pi_00.0", "pi_01.0", "pi_10.0", "pi_11.0")
  
  pi.ya.1.rf <- rpart(Y.A ~ X1 + X2, data = dat.train[dat.train$Z==1, ])
  dat.test <- cbind.data.frame(dat.test, predict(pi.ya.1.rf, type = 'prob',
                                                 newdata = dat.test))
  ## compile pi.hat_{ya.1} estimates
  colnames(dat.test)[(ncol(dat.test) - 3):ncol(dat.test)] <-
    c("pi_00.1", "pi_01.1", "pi_10.1", "pi_11.1")
  
  lambda.1 <- 0.5
  ## compile all the IF contributions
  dat.test$psi_00.0 <- dat.test$pi_00.0 + 
    (1 - dat.test$Z) * ((1 - dat.test$Y) * 
                          (1 - dat.test$A) - dat.test$pi_00.0) / 
    (1 - lambda.1)
  dat.test$psi_01.0 <- dat.test$pi_01.0 + 
    (1 - dat.test$Z) * ((1 - dat.test$Y) * dat.test$A - dat.test$pi_01.0) / 
    (1 - lambda.1)
  dat.test$psi_10.0 <- dat.test$pi_10.0 + 
    (1 - dat.test$Z) * (dat.test$Y * (1 - dat.test$A) - dat.test$pi_10.0) / 
    (1 - lambda.1)
  dat.test$psi_11.0 <- dat.test$pi_11.0 + 
    (1 - dat.test$Z) * (dat.test$Y * dat.test$A - dat.test$pi_11.0) / 
    (1 - lambda.1)
  dat.test$psi_00.1 <- dat.test$pi_00.1 + 
    dat.test$Z * 
    ((1 - dat.test$Y) * (1 - dat.test$A) - dat.test$pi_00.1) / lambda.1
  dat.test$psi_01.1 <- dat.test$pi_01.1 + 
    dat.test$Z * ((1 - dat.test$Y) * dat.test$A - dat.test$pi_01.1) / lambda.1
  dat.test$psi_10.1 <- dat.test$pi_10.1 + 
    dat.test$Z * (dat.test$Y * (1 - dat.test$A) - dat.test$pi_10.1) / lambda.1
  dat.test$psi_11.1 <-dat.test$pi_11.1 + 
    dat.test$Z * (dat.test$Y * dat.test$A - dat.test$pi_11.1) / lambda.1
  
  pi.hats <- dat.test[,(ncol(dat.test) - 15):(ncol(dat.test) - 8)]
  psi.hats <- dat.test[,(ncol(dat.test) - 7):ncol(dat.test)]
  p.l.hats <- cbind(p.l1(pi.hats)[,1], p.l2(pi.hats)[,1], p.l3(pi.hats)[,1],
                    p.l4(pi.hats)[,1], p.l5(pi.hats)[,1], p.l6(pi.hats)[,1],
                    p.l7(pi.hats)[,1], p.l8(pi.hats)[,1])
  p.u.hats <- cbind(p.u1(pi.hats)[,1], p.u2(pi.hats)[,1], p.u3(pi.hats)[,1],
                    p.u4(pi.hats)[,1], p.u5(pi.hats)[,1], p.u6(pi.hats)[,1],
                    p.u7(pi.hats)[,1], p.u8(pi.hats)[,1])
  L.hats <- cbind(p.l1(psi.hats)[,1], p.l2(psi.hats)[,1], p.l3(psi.hats)[,1],
                  p.l4(psi.hats)[,1], p.l5(psi.hats)[,1], p.l6(psi.hats)[,1],
                  p.l7(psi.hats)[,1], p.l8(psi.hats)[,1])
  U.hats <- cbind(p.u1(psi.hats)[,1], p.u2(psi.hats)[,1], p.u3(psi.hats)[,1],
                  p.u4(psi.hats)[,1], p.u5(psi.hats)[,1], p.u6(psi.hats)[,1],
                  p.u7(psi.hats)[,1], p.u8(psi.hats)[,1])
  
  argmax.p.l <- apply(p.l.hats, MARGIN = 1, FUN = which.max)
  
  argmin.p.u <- apply(p.u.hats, MARGIN = 1, FUN = which.min)
  
  IF.l <- sapply(1:nrow(dat.test), function(i) { L.hats[i, argmax.p.l[i]] }, 
                 simplify = 0)
  IF.u <- sapply(1:nrow(dat.test), function(i) { U.hats[i, argmin.p.u[i]] }, 
                 simplify = 0)
  
  ## IF-based Balke & Pearl bounds
  
  return(c(lower = mean(IF.l), upper = mean(IF.u),
           lower.var = var(IF.l), upper.var = var(IF.u)))
  
}

results.BP <- lapply(test.indices, function(inds) {
  analysis(dat.train = dat[-inds,], dat.test = dat[inds,])
})
BP.low <- mean(sapply(results.BP, function(x) {x[1]}, simplify = 0))
BP.upp <- mean(sapply(results.BP, function(x) {x[2]}, simplify = 0))
BP.low.var <- mean(sapply(results.BP, function(x) {x[3]}, simplify = 0))
BP.upp.var <- mean(sapply(results.BP, function(x) {x[4]}, simplify = 0))

arrows(x0=0.67, y0=BP.low + qnorm(0.025) * sqrt(BP.low.var / n), 
       x1=0.67, y1=BP.upp + qnorm(0.975) * sqrt(BP.upp.var / n), 
       code=3, angle=90, length=0.05, lwd=2, col = 'darkgreen')


\end{verbatim}


\end{appendices}

\end{document}